\theoremstyle{thmstyleone}%
\newtheorem{theorem}{Theorem}
\newtheorem{lemma}{Lemma}
\newtheorem{corollary}{Corollary}
\theoremstyle{thmstyletwo}%
\newtheorem{remark}{Remark}%
\newtheorem{condition}{Condition}
\newtheorem{assumption}{Assumption}
\providecommand{\appendix}{\par\setcounter{section}{0}\renewcommand{\thesection}{A\arabic{section}}}
\newcommand{\intd}{\textrm{d}}
\newcommand{\simiid}{\overset{\text{iid}}{\sim}}
\newcommand{\cal}{\mathcal}
\DeclareMathOperator*{\argmin}{arg\,min}
\begin{document}

\journaltitle{}
\DOI{DOI HERE}
\copyrightyear{XXXX}
\pubyear{XXXX}
\access{Advance Access Publication Date: Day Month Year}
\appnotes{Original article}

\firstpage{1}


\title[Sphere-on-Sphere Regression]{Bayesian Sphere-on-Sphere Regression with Optimal Transport Maps}

\author[1,$\ast$]{Tin Lok James Ng}
\author[2]{Kwok-Kun Kwong}
\author[3]{Jiakun Liu}
\author[4]{Andrew Zammit-Mangion}

\authormark{Ng et al.}

\address[1]{\orgdiv{School of Computer Science and Statistics}, \orgname{Trinity College Dublin} 
}
\address[2]{\orgdiv{School of Mathematics and Applied Statistics}, \orgname{University of Wollongong, Australia} 
}
\address[3]{\orgdiv{School of Mathematics and Statistics}, \orgname{University of Sydney, Australia} 
}
\address[4]{\orgdiv{School of Mathematics and Statistics}, \orgname{University of New South Wales, Australia} 
}

\corresp[$\ast$]{ \href{Email:ngja@tcd.ie}{ngja@tcd.ie}}


\abstract{
 Spherical regression, in which both covariates and responses lie on the sphere, arises in many scientific applications and has attracted considerable methodological attention in recent years. Despite this progress, constructing flexible and expressive regression models between spherical domains remains challenging, particularly because a single global mapping is often insufficient to capture complex relationships across the entire sphere. A natural strategy is therefore to partition the spherical domain and allow distinct mappings within each region, though this introduces the additional challenge of modeling the partition structure itself. To address these issues, we propose an approach based on optimal transport to model spherical partitions, combined with parametric mappings defined locally within each region. We adopt a Bayesian framework to jointly model both the partitioning and the associated regression maps. This framework enables the identification of heterogeneous regions on the sphere while providing principled uncertainty quantification. Through real-data applications, we demonstrate that the proposed method achieves strong predictive performance, yields meaningful uncertainty estimates, and reveals interpretable clustering structure in spherical data.
 }
\keywords{Laguerre tessellation, Optimal transport, Sphere-on-sphere regression, Markov chain Monte Carlo, Posterior contraction}

\maketitle

\section{Introduction}
Sphere-on-sphere regression aims to model the relationship between a covariate and a response variable on the \( p \)-sphere, defined as \( \mathbb{S}^p := \{x \in \mathbb{R}^{p+1} : \|x\|^2 = 1\} \). Early research in this area represented the mean direction of the response variable as a rotation of the covariate \citep{Mackenzie1957, Stephens1979}. With this approach to regression, the problem reduces to estimating an unknown rotation matrix $R \in \mathbb{SO}(p+1)$ where $\mathbb{SO}(p+1)$ is the space of all orthogonal matrices with determinant equal to $+1$.  \cite{Chang1986} developed asymptotic tests and confidence regions for the rotation matrix $R$ under the assumption of rotationally symmetric errors. \cite{Rivest1989} focused on von Mises-Fisher errors, and developed asymptotic inferential procedures for $R$ when the sample size is held fixed and the concentration parameter of the von Mises-Fisher distribution diverges to infinity. \cite{Chang1989} extends the results of \cite{Rivest1989} to the case where the covariates are also modeled as random. Moving beyond rotations, a more general family of transformation, the M\"{o}bius group, was introduced by \cite{Downs2003} for the $p = 2$ case. A further generalization using the projective linear group was introduced by \cite{Rosenthal2014}.
\\\\
More recently, nonparametric methods for regression of spherical data have been proposed, which tend to be more flexible than parametric ones. \cite{DiMarzio2014} propose using Taylor polynomials to obtain a component-wise local approximation of the regression map $f: \mathbb{S}^p \rightarrow \mathbb{S}^p$. On the other hand, motivated by the fact that any two points on the sphere are related by a rotation along the shortest arc between them, \cite{DiMarzio2019} propose a nonparametric rotation model through a covariate-dependent rotation. That is, they model the conditional mean direction of the response $Y \in \mathbb{S}^p$ given covariate $x \in \mathbb{S}^p$ as $R_x x$, where the rotation matrix $R_x$ varies with $x$. Since any rotation matrix $R$ can be written as $R = \exp(A)$ where $A$ is a skew-symmetric matrix (i.e., $A^{T} = -A)$ and $\exp(\cdot)$ is the matrix exponential, one can write $R_x = \exp(A_x)$. \cite{DiMarzio2019} propose using Taylor approximations to locally approximate $A_x$ around the point $x$ so that $R_x$ can be estimated using a locally weighted least squares approach. \citet{Rosenthal2017} propose a more general approach where diffeomorphisms are used to model sphere-to-sphere maps.
\\\\
Despite their flexibility, nonparametric approaches to spherical regression present several practical and methodological challenges. In particular, locally weighted methods can be sensitive to bandwidth selection and may become unstable in regions with sparse data coverage on the sphere. Methods based on local approximations may also have difficulty capturing complex global structure. While diffeomorphism-based approaches provide greater modeling flexibility, they typically introduce increased model complexity and reduced interpretability. Furthermore, many existing nonparametric frameworks do not naturally support uncertainty quantification or the identification of heterogeneous structural regions across the sphere.
\\\\
There remains a relative shortage of flexible regression models for sphere-on-sphere regression compared with the rich methodology available in Euclidean settings. A key difficulty lies in the challenge of globally parameterizing regression maps from the sphere to itself, due to the nonlinear geometry and complexities of spherical domains. To address this issue, we adopt a strategy that partitions the sphere into multiple regions and specifies a separate regression map within each region. This localized modeling approach reduces the complexity associated with constructing a single global transformation while allowing greater flexibility in capturing spatially varying relationships across the sphere. 
\\\\
A central challenge in this framework is how to construct such partitions in a principled and flexible manner. To this end, we leverage tools from optimal transport theory to model the partition structure. Within each region, a wide class of parametric sphere-to-sphere maps can then be employed. We further adopt a Bayesian framework for inference, which enables coherent estimation of the regression maps while naturally providing uncertainty quantification and facilitating the identification of heterogeneous structure across the spherical domain. Through real data applications and simulation studies, we demonstrate that our proposed method performs competitively against state-of-the-art methods in general, particularly in less smooth settings. In cases where the data exhibit a clustering structure, the proposed method outperforms existing approaches. Relatedly, recent work has used measure transport to define new notions of distribution and quantile functions on the hypersphere \citep{Hallin2024, Bercu2026}.
\\\\
The main contributions of this work are threefold: (i) we introduce a novel use of semi-discrete optimal transport to induce data-adaptive Laguerre tessellations of the sphere, providing a principled mechanism for partitioning the domain in regression problems; (ii) we develop a general posterior contraction theorem for Bayesian sphere-on-sphere regression models and establish near-parametric contraction rates for our proposed framework; and (iii) we demonstrate through simulations and real-data applications that the method recovers heterogeneous structure while achieving competitive predictive performance.
\\\\
The remainder of the manuscript is organized as follows. Section \ref{sec_background} reviews relevant background on optimal transport theory, with particular emphasis on semi-discrete optimal transport and its approximation through entropic-regularized fully discrete optimal transport. Section \ref{sec_rotation_model} introduces our proposed modeling framework for the rotation case, develops a Markov chain Monte Carlo algorithm for posterior inference of the regression map, and establishes a posterior contraction result. Section \ref{sec_non_rotation_models} then extends the framework to non-rotation models. Section \ref{sec_simulation} presents simulation studies designed to assess the behavior of the proposed model and the performance of the MCMC algorithm, with a focus on recovery of the true regression functions. Section \ref{sec_data_app} contains two real-data applications, illustrating the empirical performance of our method relative to existing parametric and nonparametric alternatives. Finally, Section \ref{sec_discussion} concludes with a discussion of the results and possible directions for future research.

\section{Background on Optimal Transport Theory}
\label{sec_background}

\subsection{Optimal Transport on the Sphere}
\label{sec_optimal_transport}
Given two probability measures $\mu, \nu$ defined on a space ${\cal X}$ equipped with a metric $d$, a transport map $T: {\cal X} \rightarrow {\cal X}$ is said to push forward the measure $\mu$ to $\nu$ (written compactly as $T_{\#}\mu = \nu$) if
$$ \nu(B) = \mu(T^{-1}(B)), \quad \mbox{for any Borel subset } B \subset {\cal X} .$$
Given a cost function $c: {\cal X} \times {\cal X} \rightarrow \mathbb{R}_{+}$, the optimal transport map $S_\nu$ is the solution to Monge's optimization problem \citep{Monge1781}:
\begin{eqnarray}
\label{eqn_opt_monge_formulate}
 S_\nu = \argmin_{T: T_{\#} \mu = \nu } \int_{{\cal X}} c(x, T(x)) \intd\mu(x). 
\end{eqnarray}
In this work, we focus on the spherical domain ${\cal X} = \mathbb{S}^p$ with geodesic distance $d(x, z) = \cos^{-1}(x^{T} z)$ for $x,z \in \mathbb{S}^p,$ and the half squared distance cost function $c(x,z) = d^2(x,z)/2$.
\cite{McCann2001} shows that when the source measure $\mu$ is absolutely continuous, a unique solution to Monge's problem \eqref{eqn_opt_monge_formulate} is guaranteed to exist.

\subsection{Semi-discrete Optimal Transport}\label{sec_semi_discrete}
In the semi-discrete optimal transport setting, the source measure $\mu$ is absolutely continuous whereas the target measure $\nu$ is discrete and supported on a finite set ${\cal Z} := \{z_1, \ldots, z_k\} \subset {\cal X}$. The solution to Monge's problem \eqref{eqn_opt_monge_formulate} induces a partition of the domain into Laguerre cells, defined by
\begin{eqnarray}
\label{eqn_laguerre_cell}
 \mbox{Lag}_{z_j}(\psi) := \{ x \in {\cal X} : c(x,z_j) - \psi(z_j) \le c(x,z) - \psi(z), \forall z \in {\cal Z} \} , \quad z_j \in {\cal Z},
\end{eqnarray}
where $\psi: {\cal Z} \rightarrow \mathbb{R}$ is the dual potential function. The optimal transport map assigns each point $x$ to the atom whose Laguerre cell contains it:
\begin{eqnarray}
\label{eqn_opt_map_semi_discrete}
    S_\nu(x) = \arg \min_{z \in {\cal Z}} \{ c(x,z) - \psi(z) \} , \quad x \in {\cal X}.
\end{eqnarray}
An illustration of of a Laguerre tessellation is shown in Figure~\ref{fig_Laguerre}. Note how, unlike in the classical Voronoi case, a Laguerre cell $ \mbox{Lag}_{z_j}(\psi)$ may not contain its nucleus $z_j$. When ${\cal X} = \mathbb{R}^p$ and the cost is the squared Euclidean distance, efficient algorithms exist for computing the Laguerre tessellation \citep{Fabri2009}; for non-Euclidean spaces, modifications are required \citep{Kitagawa2016,Cui2019}.

\begin{figure}[t!]
\includegraphics[width=\textwidth]{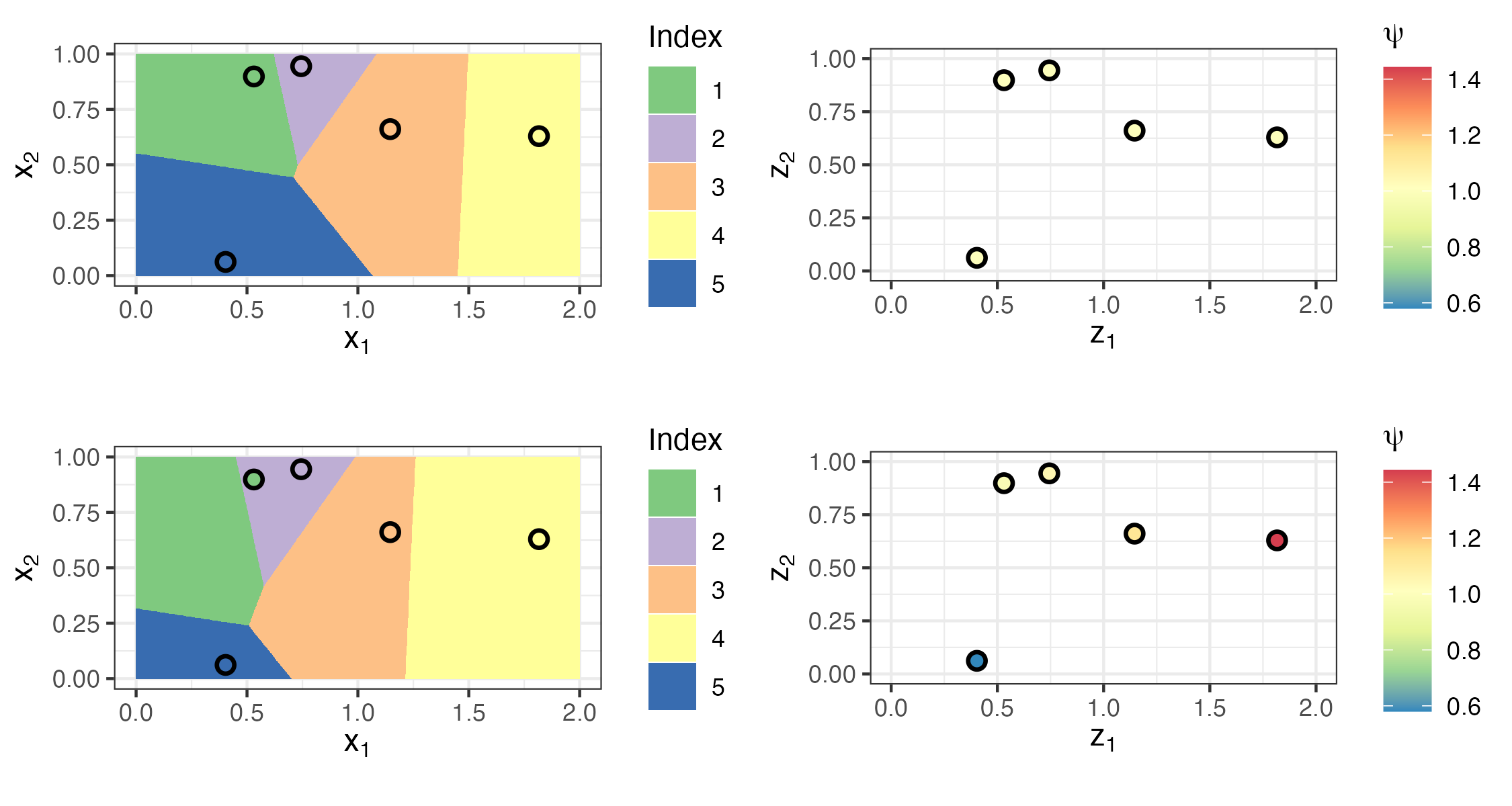}
\caption{Illustration depicting Voronoi (top-left) and Laguerre (bottom-left) tessellations for five atoms on the Euclidean domain $[0,2] \times [0,1]$, with $\psi(z) = 1$ for the Voronoi tessellation (top-right) and $\psi(z) = (z_1^3 + z_2^3)^\frac{1}{5}$ for the Laguerre tessellation (bottom-right). \label{fig_Laguerre}}
\end{figure}

\subsection{Fully-discrete Optimal Transport}
\label{sec_fully_disc_OT}
Although algorithms exist for solving the Monge problem in the semi-discrete setting, they are typically computationally intensive. Fully discrete optimal transport formulations, particularly with entropic regularization, are substantially more efficient.
\\\\
In the fully discrete setting, both the source and target measures are supported on finite sets, and can be written as
\[
\mu_n = \sum_{i=1}^{n} a_i \, \delta_{s_i}, \qquad \nu = \sum_{j=1}^{k} p_j \, \delta_{z_j}.
\]
A transport plan is a coupling matrix $\boldsymbol{\tilde{\gamma}} \in \mathbb{R}_+^{n \times k}$ whose row sums match $\mathbf{a}$ and column sums match $\mathbf{p}$. The set of admissible couplings is
\[
\Pi(\mu_n, \nu) := \Big\{ \boldsymbol{\tilde{\gamma}} \in \mathbb{R}_+^{n \times k} : \boldsymbol{\tilde{\gamma}} \mathbf{1}_k = \mathbf{a}, \, \boldsymbol{\tilde{\gamma}}^\top \mathbf{1}_n = \mathbf{p} \Big\}.
\]
The discrete Kantorovich optimal transport problem seeks the optimal coupling $\boldsymbol{\gamma}$ minimizing the total transport cost:
\begin{equation}
\label{eqn_Kantorovich_discrete}
\boldsymbol{\gamma} := \arg \min_{\boldsymbol{\tilde{\gamma}} \in \Pi(\mu_n, \nu)} \sum_{i=1}^{n} \sum_{j=1}^{k} c(s_i, z_j) \, \tilde{\gamma}_{ij}.
\end{equation}

\subsection{Entropic Regularization}

The solution to \eqref{eqn_Kantorovich_discrete} is generally not unique. Entropic regularization \citep{Cuturi2013} enforces uniqueness by augmenting the objective with a negative entropy term, yielding
\begin{equation}
\label{eqn_entropy_reg}
\boldsymbol{\gamma} := \arg\min_{\boldsymbol{\tilde{\gamma}} \in \Pi(\mu_n, \nu)} \sum_{i,j} c(s_i,z_j) \tilde{\gamma}_{ij} - \omega H(\boldsymbol{\tilde{\gamma}}),
\end{equation}
where $\omega > 0$ is a regularization parameter and $H(\boldsymbol{\tilde{\gamma}}) := - \sum_{i,j} \tilde{\gamma}_{ij} \big( \log(\tilde{\gamma}_{ij}) - 1 \big)$ is the discrete entropy. Introducing the kernel matrix $G \in \mathbb{R}_+^{n \times k}$ with $G_{ij} = e^{-c(s_i, z_j)/\omega}$, the unique solution can be expressed in the scaling form
\begin{equation}
\label{eqn_sinkhorn_solution}
\gamma_{ij} = u_i \, G_{ij} \, v_j,
\end{equation}
for vectors $\mathbf{u} = (u_1, \ldots, u_n)^\top$ and $\mathbf{v} = (v_1, \ldots, v_k)^\top$ satisfying the marginal constraints
\begin{equation}
\label{eqn_sinkhorn_marginal_constraints}
\mathbf{u} \odot (G \mathbf{v}) = \mathbf{a}, \qquad
\mathbf{v} \odot (G^\top \mathbf{u}) = \mathbf{p},
\end{equation}
where $\odot$ denotes element-wise multiplication. These constraints can be efficiently enforced using the Sinkhorn scaling algorithm \citep{Cuturi2013}.

\subsection{Connections between semi-discrete and fully-discrete optimal transport}

Suppose we approximate the continuous source measure $\mu$ by an empirical measure $\mu_n = \frac{1}{n} \sum_{i=1}^{n} \delta_{s_i}$ with $s_1, \ldots, s_n \sim \mu$. Using the entropic-regularized coupling $\boldsymbol{\gamma}$, we define an approximate transport map
\[
\hat{S}_{\nu}(s_i) = z_j, \qquad 
j := \arg\max_{j'=1,\ldots,k} \gamma_{ij'},
\]
which assigns each source point to the target atom receiving the largest proportion of its transported mass. This provides a discrete approximation of the semi-discrete transport map $S_\nu$ at the sampled points. As $\omega \to 0$, the coupling becomes increasingly concentrated around a Monge-type solution, further supporting the use of this assignment rule.

\section{Model}

\label{sec_rotation_model}
We now present our model for sphere-on-sphere regression that builds on a semi-discrete optimal transport map. Let $(X,Y) \in \mathbb{S}^p \times \mathbb{S}^p$ be a random covariate-response pair. Conditional on the covariate $X=x$, we model the mean direction of the response $Y$ using a regression map $f: \mathbb{S}^{p} \rightarrow \mathbb{S}^{p}$. As a frequently used modeling assumption in the literature, we also posit that the error distribution of $Y$ given $X=x$ follows a von Mises-Fisher distribution,
\begin{eqnarray}
\label{eq_response_dist}
    Y | (X = x) \sim g_p(y; f(x), \kappa),
\end{eqnarray}
where $g_p(\cdot\,;f(x),\kappa)$ is the probability density function of the von Mises-Fisher distribution on $\mathbb{S}^p$ with mean direction $f(x) \in \mathbb{S}^{p}$ and concentration parameter $\kappa > 0$. Many existing approaches attempt to model the regression function globally, that is, by directly estimating a mapping $f : \mathbb{S}^p \rightarrow \mathbb{S}^p$ over the entire sphere. However, constructing such a global map is inherently challenging. The difficulty arises from factors such as spatially varying relationships between predictors and responses, and the nonlinear geometry of $\mathbb{S}^p$. As a result, a single global representation may either lack sufficient flexibility to capture local structure or become overly complex and difficult to interpret. The optimal transport framework developed here addresses this by providing a principled, data-driven mechanism for partitioning the sphere, enabling complex global mappings to be constructed from simple, interpretable local transformations.
\\\\
Recall from Section \ref{sec_background} that in the semi-discrete optimal transport setting, the source measure $\mu$ is absolutely continuous on $\mathbb{S}^p$, while the target measure is discrete and given by
\[
\nu = \sum_{j=1}^k p_j \delta_{z_j},
\]
supported in a finite set ${\cal Z} := \{z_1, \ldots, z_k\} \subset \mathbb{S}^p$. Let $\psi : {\cal Z} \to \mathbb{R}$ denote the associated dual potential. This potential induces a Laguerre tessellation of the sphere, yielding the partition
\[
\mathbb{S}^p = \bigcup_{j=1}^k {\cal A}_j,
\]
where each Laguerre cell is defined by
\begin{eqnarray}
\label{eqn_laguerre_sphere}
{\cal A}_j 
:= \left\{ x \in \mathbb{S}^p : 
c(x,z_j) - \psi(z_j) \le c(x,z) - \psi(z), 
\; \forall z \in {\cal Z} 
\right\}.
\end{eqnarray}
The semi-discrete optimal transport (OT) framework provides a natural mechanism for introducing locality into the modeling of the regression map. In particular, by inducing a Laguerre partition of the sphere, the domain $\mathbb{S}^p$ is decomposed into cells ${\cal A}_1, \ldots, {\cal A}_k$, each associated with a representative support point of the target measure. This partitioning enables us to replace the difficult task of estimating a single global mapping with a collection of simpler, locally defined transformations.
\\\\
To this end, let $R_1, \ldots, R_k \in \mathbb{SO}(p+1)$ denote rigid rotations of the ambient space $\mathbb{R}^{p+1}$, where $\mathbb{SO}(p+1)$ is the special orthogonal group consisting of orthogonal matrices with determinant one. We then define the regression map in a piecewise manner by assigning a separate rotation to each Laguerre cell:
\begin{eqnarray}
\label{eq_model_comp}
f(x) = R_j  x, \qquad x \in {\cal A}_j.
\end{eqnarray}
In this formulation, the global regression function is constructed by stitching together locally rigid transformations. We refer to this model as the \emph{Optimal Transport Sphere-on-Sphere Rotation Model} (OT-SOS-Rotation). 
\\\\
This approach provides several advantages. First, within each cell the model remains geometrically simple and interpretable, as it corresponds to a rotation on the sphere. Second, the overall mapping can capture complex, spatially varying relationships through the combination of multiple local rotations. Finally, the number of cells $k$ controls the trade-off between flexibility and parsimony: increasing $k$ enhances expressive power, while smaller values promote smoother and more interpretable structure.
\\\\
Our proposed model naturally accommodates Bayesian inference. For the prior specification, we first place a prior on the number of atoms $k$ in the target measure. Conditional on $k$, we assign independent priors to the discrete target measure $\nu$ - including the atom locations $\{z_1, \ldots, z_k\}$, the associated probability vector $(p_1, \ldots, p_k)$, and the rotation matrices $R_1, \ldots, R_k$.

\subsection{Posterior Contraction Rate}

\label{sec_post_cont}

In this section, we derive the posterior contraction rate of the OT-SOS-Rotation model. Consider an i.i.d.~sample 
\(
{\cal D}_n := \{(x_i, y_i)\}_{i=1}^{n},
\)
where we assume the covariates $\{x_i\}_{i=1}^{n}$ are distributed according to a continuous measure $\mu$ on $\mathbb{S}^p$.
\\\\
Our \emph{working} model for these data is based on the  OT-SOS regression model \eqref{eq_response_dist}:
\begin{eqnarray}
\label{eq_reg_model_general}
Y_i \mid (X_i = x_i) \simiid g_p(y_i; f(x_i), \kappa), \quad i=1,\ldots,n,
\end{eqnarray}
while we assume the \emph{true} data-generating model is
\begin{eqnarray}
\label{eq_reg_model_general_true}
Y_i \mid (X_i = x_i) \simiid g_p(y_i; f_0(x_i), \kappa), \quad i=1,\ldots,n,
\end{eqnarray}
where $f_0 : \mathbb{S}^p \to \mathbb{S}^p$ is the unknown true regression map. 
\\\\
Let $\mathbb{P}_f$ denote the joint distribution of $(X,Y)$ corresponding to regression map $f$ with joint density $p_f(x,y)$, and let $\mathbb{P}_0$ denote the joint distribution corresponding to $f_0$ with density $p_0(x,y)$. Given the sample ${\cal D}_n$ and a prior $\Pi$ on $f$, the posterior distribution of $f$ is
\begin{eqnarray}
\Pi(f \in A \mid {\cal D}_n) = \frac{ \int_A \prod_{i=1}^{n} \left( \frac{p_f(x_i,y_i)}{p_0(x_i, y_i)} \right) \Pi(\mathrm{d} f) }{ \int_{\mathcal{F}} \prod_{i=1}^{n} \left( \frac{p_f(x_i, y_i)}{p_0(x_i, y_i)} \right) \Pi(\mathrm{d} f) },
\end{eqnarray}
where $\mathcal{F} := \{f: \mathbb{S}^{p} \rightarrow \mathbb{S}^p \}$ is the space of regression maps and $A \subseteq \mathcal{F}$. 
\\\\
We study posterior contraction with respect to the integrated $L_2$ semi-metric
\[
\tilde{d}^{2}(f_1,f_2) := \int_{\mathbb{S}^{p}} d^2(f_1(x), f_2(x)) \, \mathrm{d}\rho(x), \quad f_1,f_2 \in \mathcal{F},
\]
where $d(\cdot,\cdot)$ denotes the geodesic distance on $\mathbb{S}^{p}$ and $\rho$ is the uniform measure on $\mathbb{S}^{p}$. The posterior $\Pi(\cdot \mid {\cal D}_n)$ is said to contract around the true regression map $f_0$ at rate $\epsilon_n$ if
\[
\mathbb{E}_0 \big( \Pi( \tilde{d}(f,f_0) > M_n \epsilon_n \mid {\cal D}_n) \big) \to 0
\]
as $n \to \infty$, where $\mathbb{E}_0$ is expectation under $\mathbb{P}_0$, and $M_n$ either grows arbitrarily slowly to infinity or is a sufficiently large constant $M>0$. 
\\\\
The assumptions on the source probability measure, the true regression map, and the prior distributions are given in Assumptions \ref{source_measure_assumption}-\ref{prior_rotation_assumption} in Appendix~\ref{sec_assumptions}, and are discussed in detail in Section \ref{sec_prior}. Under these assumptions, we obtain the main posterior contraction result, showing that the posterior distribution concentrates around the true regression map at the near-parametric rate $n^{-1/2}\log n$.

\begin{theorem}
\label{thm_contraction_rates_opt}
Suppose the source measure $\mu$ satisfies Assumption \ref{source_measure_assumption}, the true regression map $f_0$ satisfies Assumption \ref{true_reg_map_assumption} and the prior satisfies Assumptions \ref{prior_k_assumption}, \ref{prior_measure_assumption}, and \ref{prior_rotation_assumption}. Let $(\epsilon_n)_{n=1}^\infty$ be a sequence such that $\epsilon_n \asymp n^{-1/2} \log n$. Then, for sufficiently large $M>0$,
\[
\mathbb{E}_0 \big( \Pi( \tilde{d}(f,f_0) > M \epsilon_n \mid {\cal D}_n) \big) \to 0
\quad \text{as } n \to \infty.
\]
\end{theorem}

While we use the integrated $L_2$ semi-metric with respect to the uniform measure $\rho$ to study posterior contraction, Assumption \ref{source_measure_assumption} on the source measure $\mu$ implies that the same contraction rate continues to hold when $\rho$ is replaced by $\mu$.

The proof of Theorem \ref{thm_contraction_rates_opt} is provided in Appendix \ref{sec_proof_main}. As a first step, we derive a general posterior contraction result for Bayesian sphere-on-sphere models (Theorem \ref{general_thm}), extending Theorem 2.1 of \cite{Xie2019}, originally established for the Euclidean setting, to our sphere-on-sphere regression framework. We then prove Theorem \ref{thm_contraction_rates_opt} by verifying the conditions of Theorem \ref{general_thm}.

\begin{remark}
Our posterior contraction result assumes the concentration parameter $\kappa$ to be known. Extending the result to the case where $\kappa$ is unknown and assigned a prior distribution with density bounded away from $0$ and $\infty$ is desirable. However, this extension introduces additional complications in the construction of the test statistic used in the proof (similar to those encountered in \cite{Xie2019}); we therefore leave this extension for future research. In practice, $\kappa$ is estimated jointly with other model parameters, and the simulation studies and data applications in Sections~\ref{sec_simulation}--\ref{sec_data_app} demonstrate that the method performs well under this setting.
\end{remark}

\subsection{Posterior Computation}

\subsubsection{Entropic-regularized OT approximation}
Under this modeling framework, evaluating the likelihood function and hence the posterior distribution requires solving a semi-discrete optimal transport problem between a continuous source measure $\mu$ and a discrete target measure $\nu$, which is computationally expensive.  To address this issue, we instead approximate the semi-discrete optimal transport problem using entropic-regularized fully discrete optimal transport, as described in Section~\ref{sec_fully_disc_OT}. Specifically, we approximate the continuous source measure $\mu$ by its empirical counterpart supported on the observed covariates:
\[
\mu_n = \frac{1}{n} \sum_{i=1}^{n} \delta_{x_i}.
\]
This allows us to replace the semi-discrete optimal transport problem with a fully discrete one defined between the empirical measure $\mu_n$ and the discrete target measure $\nu$. 
\\\\
With a regularization parameter $\omega > 0$, the resulting entropic-regularized optimal transport problem becomes strictly convex and admits a unique solution in the form of a coupling matrix $\boldsymbol{\gamma} \in \mathbb{R}_+^{n \times k}$. Using this coupling matrix, we construct an approximate transport map defined on the observed covariate locations by
\[
\hat{S}_{\nu} : \{x_i\}_{i=1}^n \to \{z_j\}_{j=1}^k, 
\qquad 
\hat{S}_{\nu}(x_i) = z_j, \quad 
j := \arg\max_{j'=1,\ldots,k} \gamma_{ij'}.
\]
That is, each covariate $x_i$ is assigned to the target atom $z_j$ receiving the largest proportion of transported mass from $x_i$ under the entropic coupling. Note that this construction is well defined up to a measure 0 set. This provides a computationally efficient approximation to the semi-discrete transport map evaluated at the sample locations.
\\\\
Given an i.i.d. sample ${\cal D}_n := \{(x_i,y_i)\}_{i=1}^{n}$, the (approximate) likelihood is 
\begin{eqnarray}
\label{eqn_likelihood}
L({\cal D}_n; \{R_j\}_{j=1}^k, \{z_j\}_{j=1}^k, \{p_j\}_{j=1}^{k}, \kappa)
=
\prod_{i=1}^{n} g_p\big(y_i; f(x_i), \kappa \big),
\end{eqnarray}
where
\[
f(x_i) = R_j x_i, \qquad 
j := \arg\max_{j'=1,\ldots,k} \gamma_{ij'} .
\]

\subsubsection{Prior Specification}
\label{sec_prior}
In this section, we discuss in more detail the prior assumptions stated in Appendix \ref{sec_assumptions} and describe several modifications introduced for computational efficiency. Our prior specifications for the atom locations, atom weights, and rotation matrices are given in Assumptions~\ref{prior_measure_assumption}--\ref{prior_rotation_assumption}. In particular, as stated in Assumption~\ref{prior_measure_assumption}, we place a uniform prior on the atom locations $z_1,\ldots,z_k$ subject to a minimal separation constraint on the pairwise distances between atoms, together with a Dirichlet prior on the corresponding weights,
\[
(p_1,\ldots,p_k) \sim \mathrm{Dir}(\alpha_p),
\]
for $\alpha_p < 1$. Although this assumption imposes a minimum separation constraint on the atom locations, the constraint vanishes asymptotically. In practice, for posterior sampling, the separation parameter can be chosen arbitrarily small, and thus has negligible effect on the resulting posterior samples.

In addition, as specified in Assumption~\ref{prior_rotation_assumption}, we place a uniform prior on the rotation matrices $R_1,\ldots,R_k$ over the special orthogonal group $\mathbb{SO}(p+1)$. While our posterior contraction rate result assumes that the concentration parameter $\kappa$ is known, in practice it must be estimated. For simplicity, we assign an improper flat prior $\kappa \sim \mathrm{Unif}(0,\infty)$.

Our prior assumption on the number of atoms $k$, given in Assumption~\ref{prior_k_assumption}, requires posterior inference over models of differing dimensions. In principle, this would necessitate the use of trans-dimensional sampling methods such as reversible jump Markov chain Monte Carlo (RJMCMC) \citep{Green1995}. However, such approaches can be computationally demanding in our setting, since each update of $k$ would require repeatedly solving optimal transport problems within the sampling procedure.

To improve computational tractability, we instead perform posterior inference conditional on a fixed number of atoms $k$ in the target measure. Model selection over $k$ is then carried out as a post-processing step by fitting the model over a range of candidate values of $k$ and selecting the optimal value using the Watanabe-Akaike information criterion (WAIC) \citep{Watanabe2010}. This strategy reduces computational cost while retaining flexibility in determining an appropriate model complexity.

\subsubsection{MCMC Sampling}
The MCMC procedure involves sampling from the posterior distribution of the rotation matrices $\{R_j\}_{j=1}^k$, the concentration parameter $\kappa$, and the atom locations and weights $\{z_j\}_{j=1}^k$ and $\{p_j\}_{j=1}^k$. We now describe how proposals for each of these parameters are generated within our MCMC scheme. We note that proposing new atom locations or new atom weights results in a change to the target measure $\nu$, which in turn requires recomputing the entropic-regularized OT problem to update the associated coupling matrix. 
\\\\
\textbf{Proposal for the rotation matrices $R_j$}
\\
Let $I_{p+1}$ denote the identity matrix of order $p+1$, and let $\exp(\cdot)$ denote the matrix exponential:
\[
\exp(A) = I_{p+1} + A + \frac{A^2}{2} + \cdots .
\]
We use the property that any rotation matrix $R \in \mathbb{SO}(p+1)$ can be represented as $R = \exp(A)$, where $A$ is a skew-symmetric matrix satisfying $A^T = -A$. That is, $\exp: so(p+1) \rightarrow \mathbb{SO}(p+1)$.
\\\\
Suppose the current rotation matrix is $R_j  \in \mathbb{SO}(p+1)$. We propose a new rotation matrix $R_j' \in \mathbb{SO}(p+1)$ as follows. Let $\boldsymbol{\epsilon} = (\epsilon_1, \ldots, \epsilon_{\frac{(p+1)p}{2}})^T$ be a multivariate normal random vector with mean zero and covariance $\sigma_\epsilon^2 I_{\frac{p(p+1)}{2}}$, where $\sigma_\epsilon$ is a small positive scalar. Define a mapping
\[
\Omega: \mathbb{R}^{\frac{p(p+1)}{2}} \to \mathbb{R}^{(p+1) \times (p+1)}
\]
that transforms a vector $\boldsymbol{\epsilon}$ into a skew-symmetric matrix. For example, when $p=2$,
\[
\Omega(\boldsymbol{\epsilon}) = 
\begin{bmatrix}
0 & -\epsilon_3 & \epsilon_2 \\
\epsilon_3 & 0 & -\epsilon_1 \\
-\epsilon_2 & \epsilon_1 & 0
\end{bmatrix}.
\]
The proposed rotation matrix is then
\[
R_j' = \exp\big(\Omega(\boldsymbol{\epsilon})\big) R_j .
\]
 The reverse move is given by
\[
R_j = \exp\big(\Omega(-\boldsymbol{\epsilon})\big) R_j'.
\]


Under this construction, the proposal distribution is symmetric due to the normality of $\boldsymbol{\epsilon}$.
\\\\
\textbf{Proposal for the concentration parameter $\kappa$}\\
Given the current value of the concentration parameter $\kappa$, we generate a proposed value $\kappa'$ from a proposal distribution that is symmetric around $\kappa$, ensuring that the proposed value satisfies the positivity constraint $\kappa' > 0$. 
\\\\
\textbf{Proposal for the atom locations $z_j$}\\
For each atom location $z_{j}$, a symmetric proposal generates a new location, subject to the constraint that the proposed location remains within the spatial domain ${\cal S}$. 
\\\\
\textbf{Proposal for the atom weights $(p_j)_{j=1}^k$}\\
For the atom weight vector $\mathbf{p} = (p_1, \ldots, p_k)$, we propose a new value $\mathbf{p}'$ from a Dirichlet distribution centered at the current value:
\[
\mathbf{p}' \sim \mathrm{Dir}(\delta p_1, \ldots, \delta p_k),
\]
where $\delta > 0$ controls the concentration of the proposal around the current weights. Unlike the proposals for $\kappa$ or the rotation matrices, this Dirichlet proposal is not symmetric. Consequently, the corresponding Dirichlet densities must be included in the Metropolis-Hastings acceptance ratio.

\subsection{Predictions at new Covariates}
\label{sec_pred_new_cov}
Because we employ fully discrete optimal transport (OT) to approximate the semi-discrete OT within our Bayesian inference framework, the resulting coupling matrix yields assignments only at the observed covariate locations. However, in many applications it is of interest to make predictions at new, unobserved covariate locations. To address this issue, we propose two complementary approaches.
\\\\
The first approach leverages the fact that the solution to the entropic-regularized OT problem can be expressed in the scaling form given in \eqref{eqn_sinkhorn_solution}--\eqref{eqn_sinkhorn_marginal_constraints}, which allows for efficient interpolation to new covariates. The second approach directly combines the new covariates with the existing covariates to form an augmented source measure, and then solves the entropic-regularized OT problem for a given target measure. This procedure naturally extends the coupling to include the new covariate locations.
\\\\
\textbf{Solution based on scaling form.}
\\
Recall that the solution to the entropic-regularized fully discrete OT problem can be expressed in the scaling form \eqref{eqn_sinkhorn_solution}--\eqref{eqn_sinkhorn_marginal_constraints}:
\begin{eqnarray}
\label{eqn_sinkhorn_solution_repeat}
\gamma_{ij} = u_i \, G_{ij} \, v_j,
\end{eqnarray}
where the kernel matrix is defined as \(G_{ij} = \exp\{-c(x_i, z_j)/\omega\}\), with \(\omega > 0\) denoting the entropy regularization parameter. The scaling vectors \(\mathbf{u} = (u_1, \ldots, u_n)^T\) and \(\mathbf{v} = (v_1, \ldots, v_k)^T\) are chosen to satisfy the marginal constraints
\begin{eqnarray}
\label{eqn_sinkhorn_marginal_constraints_repeat}
\mathbf{u} \odot (G \mathbf{v}) = \mathbf{a}, \qquad
\mathbf{v} \odot (G^\top \mathbf{u}) = \mathbf{p},
\end{eqnarray}
where \(\mathbf{a} = (1/n, \ldots, 1/n)^T\).
\\\\
The multiplicative structure in \eqref{eqn_sinkhorn_solution_repeat} shows that the transport plan depends on the spatial locations only through the kernel matrix \(G\), which is a smooth function of the pairwise distances between points. This property allows us to extend the probabilistic cluster assignments to new, unseen locations without having to solve an additional optimal transport problem.
\\\\
Specifically, for a new location \(x \in \mathbb{S}^p\), let \(\mathbf{q}(x) = (q_j(x))_{j=1}^{k}\) denote its probabilistic allocation vector, where \(q_j(x)\) represents the proportion of mass assigned to cluster \(j\). To maintain consistency with the learned transport structure, we adopt the same kernel-based multiplicative form:
\[
q_j(x) \propto \exp\{-c(x, z_j)/\omega\} \, v_j,
\]
where \(\omega\) is the entropy regularization parameter and \(v_j\) is the scaling vector from the Sinkhorn solution. Normalizing so that \(\sum_{j=1}^{k} q_j(x) = 1\) gives
\[
q_j(x) = \frac{\exp\{-c(x, z_j)/\omega\} \, v_j}{\sum_{l=1}^{k} \exp\{-c(x, z_l)/\omega\} \, v_l}.
\]
This defines a smooth, distance-based interpolation across clusters, with \(\omega\) controlling the degree of spatial smoothing and uncertainty in the assignments.
\\\\
Finally, we assign the new location \(x\) to the cluster with the highest allocation probability by setting \(j := \arg\max_{j'=1,\ldots,k} q_{j'}(x)\), and define the predicted response as
\[
f(x) = R_j x.
\]
This approach provides a principled way to extend predictions to new covariate locations while preserving the learned transport structure and cluster assignments.
\\\\
\textbf{Solution based on an augmented source measure}
\\
An alternative approach is to augment the source measure by combining the existing covariates with new covariates to form an expanded source measure. Suppose we wish to predict responses at new covariates \(x'_1, \ldots, x'_m\). We construct the augmented source measure
\[
\mu_{n+m} = \frac{1}{n+m} \left( \sum_{i=1}^n \delta_{x_i} + \sum_{i=1}^m \delta_{x'_i} \right),
\]
and solve the entropic-regularized discrete OT problem with this source measure and the target measure \(\nu\). This produces a \((n+m) \times k\) coupling matrix \(\boldsymbol{\gamma}\), which naturally extends the learned transport plan to the new covariates. 
\\\\
For each new point \(x'_i\), we assign it to the cluster receiving the largest portion of its mass by setting
\[
j := \arg \max_{j'=1,\ldots,k} \gamma_{n+i,j'},
\]
and define the predicted response as
\[
f(x'_i) = R_j x'_i.
\]
This approach is reasonable if new points are drawn from the same distribution as the existing covariates, i.e., \(x'_1, \ldots, x'_m \sim \mu\). 

\section{Extension to non-Rotation models}
\label{sec_non_rotation_models}
The OT-SOS-Rotation model represents each region in the partition using a separate rotation model. However, this framework allows for greater flexibility. In particular, since the OT-based approach models the partition of the spherical domain, one can choose an arbitrary model within each region. As an example, we consider an extension of the rotation model that uses projective linear transformations, as proposed by \citet{Rosenthal2014}.
\\\\
This approach replaces rotations with general linear transformations drawn from the special linear group
\[
\mathbb{SL}(p+1) = \{ P \in \mathbb{R}^{(p+1)\times(p+1)} : \det(P) = 1 \}.
\]
Since \(\mathbb{SO}(p+1) \subset \mathbb{SL}(p+1)\), this formulation naturally generalizes the rotation-based model. By allowing the full group \(\mathbb{SL}(p+1)\), we can model more general linear transformations and thereby capture richer geometric deformations of the sphere.
\\\\
Given \(P_1, \ldots, P_k \in \mathbb{SL}(p+1)\), the regression map for a covariate \(x_i\) is defined as
\[
f(x_i) = \frac{ P_j x_i}{\| P_j x_i \|}, 
\qquad 
j := \arg\max_{j'=1,\ldots,k} \gamma_{ij'},
\]
where \(\gamma_{ij'}\) is the assignment weight from the OT coupling $\boldsymbol{\gamma}$. We refer to this extension as the \emph{Optimal Transport Sphere-on-Sphere Projective Linear Transformation} (OT-SOS-PLT) model.
\\\\
The group \(\mathbb{SL}(p+1)\) is a Lie group with associated Lie algebra
\[
\mathfrak{sl}(p+1) := \{ A \in \mathbb{R}^{(p+1)\times(p+1)} : \operatorname{tr}(A) = 0 \}.
\]
The matrix exponential provides a smooth map
\[
\exp : \mathfrak{sl}(p+1) \rightarrow \mathbb{SL}(p+1),
\]
which offers a convenient parameterization for constructing proposals in our MCMC algorithm, analogous to the parameterization used for rotation matrices in the OT-SOS-Rotation model. We assume a uniform prior distribution on \(\mathbb{SL}(p+1)\), which is improper due to the non-compactness of \(\mathbb{SL}(p+1)\).
\\\\
The MCMC scheme for the projective linear transformation model proceeds similarly to the rotation-based model, with the primary difference being that we now update the matrices \(P_1, \ldots, P_k \in \mathbb{SL}(p+1)\). To propose a new value for \(P_j\), we first generate a random matrix \(A' \in \mathbb{R}^{(p+1)\times(p+1)}\) with independent standard normal entries:
\[
A'_{ij} \sim N(0,1), \qquad i,j = 1,\ldots,p+1.
\]
We then construct a trace-zero matrix
\[
A = A' - \frac{\operatorname{tr}(A')}{p+1} I_{p+1},
\]
so that \(\operatorname{tr}(A)=0\) and hence \(A \in \mathfrak{sl}(p+1)\).
\\\\
Given a small step size \(\epsilon > 0\), the proposed update for \(P_j\) is defined as
\[
P'_j = P_j \exp(\epsilon A).
\]
Because \(\exp(\epsilon A) \in \mathbb{SL}(p+1)\), this ensures that \(P'_j \in \mathbb{SL}(p+1)\). The reverse move is given by
\[
P_j = P'_j \exp(-\epsilon A).
\]
By Gaussian assumption, we have \( A \overset{d}{=} -A \), hence the proposal distribution in the Lie algebra is symmetric. For small $\epsilon>0$, the exponential map acts as a local chart near the identity, and its Jacobian varies smoothly, so that the associated volume distortion is approximately constant up to $O(\epsilon^2)$.
\\\\
These imply that for step size small $\epsilon > 0$, the proposal density satisfies
\[ q(P|P') \approx q(P'|P) \] where the density is taken with respect to the Haar measure on $\mathbb{SL}(p+1)$.

\section{Simulation Studies}
\label{sec_simulation}

We conduct simulation studies to empirically assess the performance of the proposed methods under controlled settings where the true data-generating mechanism is known. In particular, we simulate data from the OT-SOS-Rotation and OT-SOS-PLT models and apply the corresponding MCMC sampling procedures for posterior inference. To quantify estimation accuracy, we evaluate the discrepancy between the estimated regression map $\hat{f}$ and the true regression map $f_0$ using the metric $\tilde{d}(\hat{f}, f_0)$. This provides an empirical validation of the theoretical posterior contraction rates established in our analysis. The metric \(\tilde{d}(\hat{f}, f_0)\) is estimated using a sample of test (or new) covariates. In our simulation studies and real data applications, we use the augmented-source-measure approach for prediction described in Section~\ref{sec_pred_new_cov}; however, we observe comparable performance with the scaling-form approach.

\subsection{OT-SOS-Rotation}

For the OT-SOS-Rotation model, we generate covariates $x_i \in \mathbb{S}^2$ independently from the uniform distribution on $\mathbb{S}^2$, so that the source measure $\mu$ is uniform over the sphere. To construct the target measure, we fix an integer $k$ and sample atom weights $(p_1,\ldots,p_k) \sim \mathrm{Dir}(1/2,\ldots,1/2)$, together with atom locations $z_1,\ldots,z_k \in \mathbb{S}^2$ drawn independently from the uniform distribution on $\mathbb{S}^2$. This yields the discrete target measure
\[
\nu = \sum_{j=1}^k p_j \delta_{z_j}.
\]
We additionally generate independent random rotation matrices $R_1,\ldots,R_k \in \mathbb{SO}(3)$, which define local transformations associated with each atom. Let
\[
\mu_{\tilde{n}} = \frac{1}{\tilde{n}}\sum_{i=1}^{\tilde{n}} \delta_{x_i}
\]
denote the empirical approximation of $\mu$ where $\tilde{n}$ is a large integer. We then solve the entropic-regularized optimal transport problem between $\mu_{\tilde{n}}$ and $\nu$ (with regularization parameter $\omega$), which partitions the sample $\{x_i\}_{i=1}^{\tilde{n}}$ into $k$ clusters. Let $c_i \in \{1,\ldots,k\}$ denote the cluster membership of $x_i$.
\\\\
We define the regression map used in the simulation as
\[
f_0(x_i) = R_{c_i} x_i.
\]
We note that the regression function $f_0$ is not a fixed map from $\mathbb{S}^2$ to itself, but rather a random function depending on both the realised covariate sample (through the entropic OT coupling between $\mu_n$ and $\nu$) and the regularisation parameter $\omega$. However, for large $\tilde{n}$ and small $\omega$, this construction provides a good approximation of the case of a fixed map induced by semi-discrete optimal transport. Finally, conditional on $x_i$, responses $y_i$ are generated from a vMF distribution with mean direction $f_0(x_i)$ and concentration parameter $\kappa$. This setting is challenging for existing sphere-on-sphere regression approaches because the true regression function contains sharp discontinuities at the cluster boundaries; in particular, the rotations associated with neighboring clusters can differ arbitrarily.
\\\\
In our simulations, we consider $k=5$ and $k=10$, with the expectation that larger values of $k$ lead to a more difficult estimation problem. We also examine two concentration levels, $\kappa=100$ and $\kappa=1000$. We set regularization parameter for entropic-regularized optimal transport $\omega = 0.01$. 
\\\\
For both simulated settings, where the data are generated using either $k=5$ or $k=10$ clusters, we fit the OT-SOS-Rotation model under two specifications: a rotation model which corresponds to the OT-SOS-Rotation model with $k=1$ and the the OT-SOS-Rotation model with correctly specified number of clusters ($k=5$ or $k=10$, respectively). We additionally fit the nonparametric rotation model of \cite{DiMarzio2019} using the R package implementation provided in \cite{Taylor2025} (version 1.1.1) where bandwidth is selected using cross-validation.
\\\\
We apply our MCMC algorithm to fit the OT-SOS-Rotation model on simulated datasets with sample sizes $n=100$, $500$, and $1000$, in the regime where $n \ll \tilde{n}$. Although our posterior contraction results are derived under the assumption that $\kappa$ is fixed, in these experiments $\kappa$ is treated as unknown and estimated jointly with other model parameters.
\\\\
The integrated $L_2$ distance between the estimated regression function $\hat{f}$, obtained from the MCMC samples, and the true regression function $f_0$ is computed using the remaining $\tilde{n}-n$ data points that are not used for model fitting.
The results, reported in Table \ref{tab_sim1_1} for the setting with true $K=5$ and in Table \ref{tab_sim1_2} for true $K=10$, show that for the OT-SOS-Rotation model (with correctly specified numbers of clusters $k=5$ and $k=10$, respectively), the error $\tilde{d}(\hat{f},f_0)$ decreases as $n$ increases. As expected, estimation is more accurate in the simpler $K=5$ setting than in the $K=10$ setting, while performance remains broadly similar across the two concentration levels $\kappa=100$ and $\kappa=1000$.
\\\\
In contrast, for the rotation model without clustering (corresponding to the OT-SOS-Rotation model with $k=1$), the estimation error does not improve substantially as $n$ increases. A similar pattern is observed for the nonparametric rotation method of \cite{DiMarzio2019}. This is due to the fact that these approaches are not designed to capture the discontinuous cluster structure present in the true regression function. Additional experiments using mismatched values of the regularization parameter $\omega$ for data generation and model fitting led to qualitatively similar behavior, suggesting a degree of robustness to moderate misspecification of $\omega$.

\begin{table}[ht]
\small
    \centering
    \caption{Simulation results where data is generated from OT-SOS-Rotation Model with $k=5$. Reported values correspond to the estimated integrated $L_2(\mathbb{S}^2)$ distance $\tilde{d}(\hat{f},f_0)$ between the fitted regression function and the true regression function, together with posterior standard errors, across different sample sizes, and concentration parameters.}
    \label{tab_sim1_1}
    \begin{tabular}{lccc}
        \toprule
        \textbf{Model}  & $\kappa $ & $n$ & $\tilde{d}(\hat{f}, f_0)$ \\
        \midrule
        OT-SOS-Rotation ($k=5$) & 100  & 100 & 0.516 (0.004) \\
             &   100 & 500 & 0.296 (0.003) \\
              & 100  & 1000 & 0.199 (0.003) \\
         & 1000
        & 100 & 0.476 (0.008) \\
         & 1000
        & 500 & 0.305 (0.005) \\
& 1000
        & 1000 &  0.221 (0.005) \\
        \midrule
        Rotation & 100  & 100 & 1.476 (0.024)  \\
             &   100 & 500 & 1.444 (0.009) \\
              & 100  & 1000 & 1.435 (0.003) \\
         & 1000
        & 100 & 1.540 (0.027) \\
         & 1000
        & 500 & 1.430 (0.012) \\
& 1000
        & 1000 & 1.427 (0.006) \\
           \midrule
        Nonparametric Rotation & 100  & 100 &  0.487 \\
             &   100 & 500 &  0.459 \\
              & 100  & 1000 & 0.453 \\
         & 1000
        & 100 & 0.457 \\
         & 1000
        & 500 &  0.436 \\
& 1000
        & 1000 & 0.433 \\
        \bottomrule
    \end{tabular}
\end{table}

\begin{table}[ht]
\small
    \centering
    \caption{Simulation results where data is generated from OT-SOS-Rotation Model with $k=10$. Reported values correspond to the estimated integrated $L_2(\mathbb{S}^2)$ distance $\tilde{d}(\hat{f},f_0)$ between the fitted regression function and the true regression function, together with posterior standard errors, across different sample sizes, and concentration parameters.}
    \label{tab_sim1_2}
    \begin{tabular}{lccc}
        \toprule
        \textbf{Model}  & $\kappa $ & $n$ & $\tilde{d}(\hat{f}, f_0)$ \\
        \midrule
        OT-SOS-Rotation ($k=10$) & 100  & 100 & 0.599 (0.014) \\
             &   100 & 500 &  0.368 (0.005) \\
              & 100  & 1000 & 0.303 (0.005)  \\
         & 1000
        & 100 & 0.612 (0.008) \\
         & 1000
        & 500 & 0.382 (0.004) \\
& 1000
        & 1000 & 0.336 (0.004) \\
        \midrule
        Rotation & 100  & 100 & 1.329 (0.005) \\
             &   100 & 500 & 1.320 (0.005) \\
              & 100  & 1000 &  1.317 (0.003) \\
         & 1000
        & 100 &  1.375 (0.010) \\
         & 1000
        & 500 & 1.346 (0.009) \\
& 1000
        & 1000 & 1.333 (0.003) \\
           \midrule
        Nonparametric Rotation & 100  & 100 & 0.728   \\
             &   100 & 500 & 0.626  \\
              & 100  & 1000 & 0.584 \\
         & 1000
        & 100 & 0.751 \\
         & 1000
        & 500 & 0.642 \\
& 1000
        & 1000 & 0.602 \\
        \bottomrule
    \end{tabular}
\end{table}

\subsection{OT-SOS-PLT}
For the OT-SOS-PLT model, the generation of covariates and the target measure is the same as in the OT-SOS-Rotation setting. Likewise, we solve the same entropic-regularized optimal transport problem (with $\omega = 0.01$) to induce a partition of the covariate space. The key distinction lies in the class of transformations used to define the regression function. Specifically, we generate random matrices $P_1,\ldots,P_k \in \mathbb{SL}(3)$, where
\[
\mathbb{SL}(3)=\{P \in \mathbb{R}^{3\times 3} : \det(P)=1\}.
\]
To construct such matrices, we first simulate matrices $A_j \in \mathbb{R}^{3\times 3}$ for $j=1,\ldots,k$, with independent entries
\[
(A_j)_{l_1,l_2} \sim N(0,1), \qquad l_1,l_2=1,2,3.
\]
We then rescale each matrix to have unit determinant by setting
\[
P_j=\frac{A_j}{\det(A_j)^{1/3}},
\]
so that $P_j \in \mathbb{SL}(3)$.
\\\\
The true regression function $f_0$ is defined piecewise through the cluster-specific transformations,
\[
f_0(x_i)=\frac{P_{c_i}x_i}{\|P_{c_i}x_i\|},
\]
where $c_i \in \{1,\ldots,k\}$ denotes the cluster assignment induced by the optimal transport partition.
We fit the OT-SOS-PLT model, together with the PLT model (corresponding to the OT-SOS-PLT model with $k=1$) and the nonparametric rotation model. The results are reported in Table \ref{tab_sim2_1} for the setting with true $k=5$ and in Table \ref{tab_sim2_2} for true $k=10$.
\\\\
We observe a pattern similar to that seen for the OT-SOS-Rotation model: for the OT-SOS-PLT model with the correctly specified number of clusters, estimation accuracy improves as the sample size increases. In contrast, the PLT model without clustering and the nonparametric rotation model show little improvement in estimation accuracy as $n$ increases.

\begin{table}[t!]
\small
    \centering
    \caption{Simulation results where data is generated from OT-SOS-PLT Model with $k=5$. Reported values correspond to the estimated integrated $L_2(\mathbb{S}^2)$ distance $\tilde{d}(\hat{f},f_0)$ between the fitted regression function and the true regression function, together with posterior standard errors, across different sample sizes, and concentration parameters.}
    \label{tab_sim2_1}
    \begin{tabular}{lccc}
        \toprule
        \textbf{Model}  & $\kappa $ & $n$ & $\tilde{d}(\hat{f}, f_0)$ \\
        \midrule
        OT-SOS-PLT ($k=5$) & 100  & 100 & 0.585 (0.011)  \\
             &   100 & 500 & 0.337 (0.006)  \\
              & 100  & 1000 & 0.227 (0.004)   \\
         & 1000
        & 100 & 0.453 (0.008) \\
         & 1000
        & 500 & 0.246 (0.005) \\
& 1000
        & 1000 & 0.203 (0.005) \\
        \midrule
        PLT & 100  & 100 & 0.808 (0.015) \\
             &   100 & 500 & 0.774 (0.012)  \\
              & 100  & 1000 & 0.777 (0.013)  \\
         & 1000
        & 100 &  0.982 (0.023)  \\
         & 1000
        & 500 & 0.937 (0.010)  \\
& 1000
        & 1000 & 0.902 (0.007) \\
           \midrule
        Nonparametric Rotation & 100  & 100 &  0.558   \\
             &   100 & 500 &   0.447 \\
              & 100  & 1000 &  0.443 \\
         & 1000
        & 100 & 0.527 \\
         & 1000
        & 500 & 0.484 \\
& 1000
        & 1000 & 0.472 \\
        \bottomrule
    \end{tabular}
\end{table}

\begin{table}[t!]
\small
    \centering
    \caption{Simulation results where data is generated from OT-SOS-PLT Model with $k=10$. Reported values correspond to the estimated integrated $L_2(\mathbb{S}^2)$ distance $\tilde{d}(\hat{f},f_0)$ between the fitted regression function and the true regression function, together with posterior standard errors, across different sample sizes, and concentration parameters.}
    \label{tab_sim2_2}
    \begin{tabular}{lccc}
        \toprule
        \textbf{Model}  & $\kappa $ & $n$ & $\tilde{d}(\hat{f}, f_0)$ \\
        \midrule
        OT-SOS-PLT ($k=10$) & 100  & 100 & 0.638 (0.018)  \\
             &   100 & 500 &  0.476 (0.008) \\
              & 100  & 1000 &  0.381 (0.006) \\
         & 1000
        & 100 & 0.588 (0.009)\\
         & 1000
        & 500 & 0.445 (0.007) \\
& 1000
        & 1000 & 0.402 (0.007) \\
        \midrule
        PLT & 100  & 100 & 1.447 (0.007) \\
             &   100 & 500 & 1.423 (0.005)   \\
              & 100  & 1000 & 1.424 (0.006)   \\
         & 1000
        & 100 & 1.557 (0.007)  \\
         & 1000
        & 500 &  1.551 (0.010) \\
& 1000
        & 1000 & 1.534 (0.012)  \\
           \midrule
        Nonparametric Rotation & 100  & 100 &   0.715 \\
             &   100 & 500 &    0.531\\
              & 100  & 1000 &   0.536\\
         & 1000
        & 100 &  0.575\\
         & 1000
        & 500 &  0.496 \\
& 1000
        & 1000 & 0.502 \\
        \bottomrule
    \end{tabular}
\end{table}

\section{Data Applications}
\label{sec_data_app}

In this section, we illustrate the applications of the OT-SOS-Rotation and OT-SOS-PLT models across two real-world datasets. In particular, we compare the predictive performance of our proposed models with that of existing parametric and nonparametric approaches. The Bayesian framework naturally provides uncertainty quantification for predictions. Moreover, the partitions induced by optimal transport reveal heterogeneous regions on the sphere, highlighting areas with distinct local behaviors and enabling more interpretable modeling of complex spherical relationships.
\\\\
We assess the predictive performance of the OT-SOS-Rotation and OT-SOS-PLT models in comparison with the classical rotation model and projective linear transformation (PLT) model \citep{Rosenthal2014}, as well as the nonparametric rotation model \citep{DiMarzio2019}. It is worth noting that the rotation model corresponds to a special case of OT-SOS-Rotation in which the number of atoms in the target measure is set to one. Similarly, the PLT model is a special case of OT-SOS-PLT with a single atom in the target measure. These comparisons highlight the advantage of incorporating a partition of the sphere, as it allows the model to capture heterogeneous local structures and provides greater flexibility in representing complex sphere-to-sphere mappings. As with the simulation studies, the nonparametric rotation method is implemented using the R package of \citet{Taylor2025}, where cross-validation is used to determine the smoothing parameters.
\\\\
We employ the Root Mean Squared Error (RMSE) and Mean Absolute Error (MAE) as 
performance metrics, measured in terms of  the 
geodesic distance $d(\cdot, \cdot)$ on the sphere. Specifically, for $n_{test}$ 
test observations, these metrics are defined as:
\[
\text{RMSE} = \sqrt{\frac{1}{n_{test}} \sum_{i=1}^{n_{test}} d(y_i, \hat{y}_i)^2}, \quad 
\text{MAE} = \frac{1}{n_{test}} \sum_{i=1}^{n_{test}} d(y_i, \hat{y}_i)
\]
where $y_i$ is the observed response and $\hat{y}_i$ is the predicted value.
\\\\
We focus on uncertainty quantification for the predicted response direction at a given covariate value. For each covariate $x_i$, we assess uncertainty in the associated predicted response direction using posterior samples obtained from the fitted model. Specifically, let
\[
\hat{y}_i^{(1)}, \ldots, \hat{y}_i^{(M)} \in \mathbb{S}^p
\]
denote the predicted mean directions corresponding to $M$ posterior draws, and let 
\[ \kappa^{(1)}, \ldots, \kappa^{(M)} \]
denote the $M$ posterior draws of concentration parameter. For $j=1,\ldots,M$, we sample $y_i^{(j)}$ from the vMF distribution with mean direction $\hat{y}_i^{(j)}$ and concentration parameter $\kappa^{(j)}$. 
\\\\
We then compute the overall mean predicted direction
\[
\bar{y}_i
=
\frac{\sum_{j=1}^{M} y_i^{(j)}}
{\left\|\sum_{j=1}^{M} y_i^{(j)}\right\|}
\]
which serves as a point estimate of the posterior mean direction at the covariate value $x_i$. To assess posterior uncertainty around this direction, we compute the angular deviations between each posterior sample and the mean predicted direction:
\[
\theta_i^{(j)}
=
\cos^{-1}\!\left(
(y_i^{(j)})^{\top}\bar{y}_i
\right),
\quad j=1,\ldots,M.
\]
These angles represent the geodesic distances on the sphere between each sampled predicted direction and the posterior predicted direction.
\\\\
Next, we compute the empirical $(1-\alpha)$-quantile of the set $\{\theta_i^{(j)}\}_{j=1}^{M}$. This quantile defines the half-angle of a spherical cap centered at $\bar{y}_i$, which provides a posterior credible region for the predicted response direction at the covariate value $x_i$. Specifically, the resulting region
\[
\mathcal{C}_{i,\alpha}
=
\left\{
y \in \mathbb{S}^2 :
y^{\top}\bar{y}_i \ge \cos(\theta_{i,\alpha})
\right\}
\]
represents a $100(1-\alpha)\%$ posterior credible region on the sphere describing uncertainty in the predicted direction direction.

\subsection{Cyclone Data}
The first application we consider arises in the statistical analysis of tropical cyclone trajectories, an important problem in climate science and natural hazard modelling due to its relevance for understanding storm genesis, movement patterns, and associated risks to coastal populations and infrastructure. For this purpose, we analyse data obtained from the United States National Hurricane Center\footnote{https://www.nhc.noaa.gov/data/hurdat/}. The data contains details of cyclones in the North Pacific Ocean and comprises six-hourly data points with information on the location, maximum wind speeds, and central pressure of all identified tropical and subtropical cyclones, totaling 1207 instances between the year 1949 and 2022. Previously, this dataset was examined by \cite{Ng2022} within the framework of Poisson point process intensity estimation. In contrast, our approach focuses on modelling the relationship between cyclone initiation and termination locations: specifically, we treat the starting location of each cyclone as the covariate and the corresponding end location as the response, allowing us to investigate how spatial origin influences subsequent displacement patterns.
\\\\
For predictive performance evaluation, we partition the dataset into 1,107 training observations and 100 test observations, preserving the original temporal ordering to avoid information leakage from future to past events. The models are trained on the training partition and evaluated on the corresponding test set to determine their predictive accuracy. For the OT-SOS-Rotation and OT-SOS-PLT models, it is necessary to determine an appropriate number of atoms in the target measure. To do so, we fit a sequence of candidate models with $k = 2, 3, \ldots$ using the training data and increment $k$ iteratively until the Watanabe-Akaike Information Criterion (WAIC) ceases to improve, thereby implementing a greedy model selection procedure. The model associated with the selected optimal value of $k$ is then used to generate predictions on the held-out test set.
\\\\
The results of the predictive performance comparison are presented in Table~\ref{tab_cyclone}. In addition to reporting the value of $k$ corresponding to the best-performing OT-SOS-Rotation and OT-SOS-PLT models, we also include results for the case $k = 2$. This facilitates comparison with the rotation and PLT models, which correspond to the special case $k = 1$. We observe that, for both the OT-SOS-Rotation and OT-SOS-PLT models, increasing $k$ from 1 to 2 leads to noticeably improved performance. Compared with the nonparametric rotation model, both the best-performing OT-SOS-Rotation model ($k = 14$) and the best-performing OT-SOS-PLT model ($k = 10$) achieve smaller MAE values, although they exhibit slightly larger RMSE values.
\\\\
In addition, we illustrate the model fit on the training set and the predictive performance on the test set using the best-performing OT-SOS-PLT model, corresponding to $k = 10$. Specifically, we display randomly selected covariate–response pairs, together with their associated predicted mean directions, from both the training set used for model fitting and the test set used for prediction. These results are shown in Figure~\ref{fig:hur_fit_pred}.
\\\\
We further illustrate the posterior credible regions for the predicted mean responses at randomly selected covariate locations, along with the corresponding observed responses, in Figure~\ref{fig:hur_conf_pred}. These credible regions take the form of spherical caps centred at the posterior mean predicted directions, thereby quantifying uncertainty in the predicted mean responses on the sphere.
\\\\
The OT-SOS-Rotation and OT-SOS-PLT models naturally induce a clustering of covariate locations through the OT-based partitioning of the spherical domain. This clustering is useful for identifying potentially heterogeneous regions of the sphere in which covariate–response relationships vary spatially. By assigning each covariate location to a region associated with a local transformation, the models provide an interpretable segmentation of the domain that highlights spatial variation in the underlying regression structure. We illustrate the resulting clustering in Figure~\ref{fig_cluster_hur}.

\begin{table}[ht]
\small
    \centering
    \caption{Predictive Performance Comparison on the Test Set for Tropical Cyclone dataset}
    \label{tab_cyclone}
    \begin{tabular}{lccc}
        \toprule
        \textbf{Model} & \textbf{Number of Atoms} & \textbf{MAE} & \textbf{RMSE} \\
        \midrule
        OT-SOS-Rotation & 2  & 0.122 & 0.150 \\
           OT-SOS-Rotation  &   14 & 0.116 & 0.140 \\
        OT-SOS-PLT      & 2  & 0.118 & 0.143 \\
         OT-SOS-PLT      & 10
        & \textbf{0.112} & 0.140 \\
        \midrule
        Rotation      &  1 & 0.125 & 0.152 \\
        PLT      & 1  & 0.120 & 0.145 \\
        Nonparametric Rotation      &  NA & 0.120 & \textbf{0.138} \\
        \bottomrule
    \end{tabular}
\end{table}

\begin{figure}[htbp]
\centering

\begin{minipage}{0.04\textwidth}
    \centering
    \rotatebox{90}{Training}
\end{minipage}
\begin{minipage}{0.95\textwidth}
    \subfloat[]{
        \includegraphics[width=0.3\textwidth]{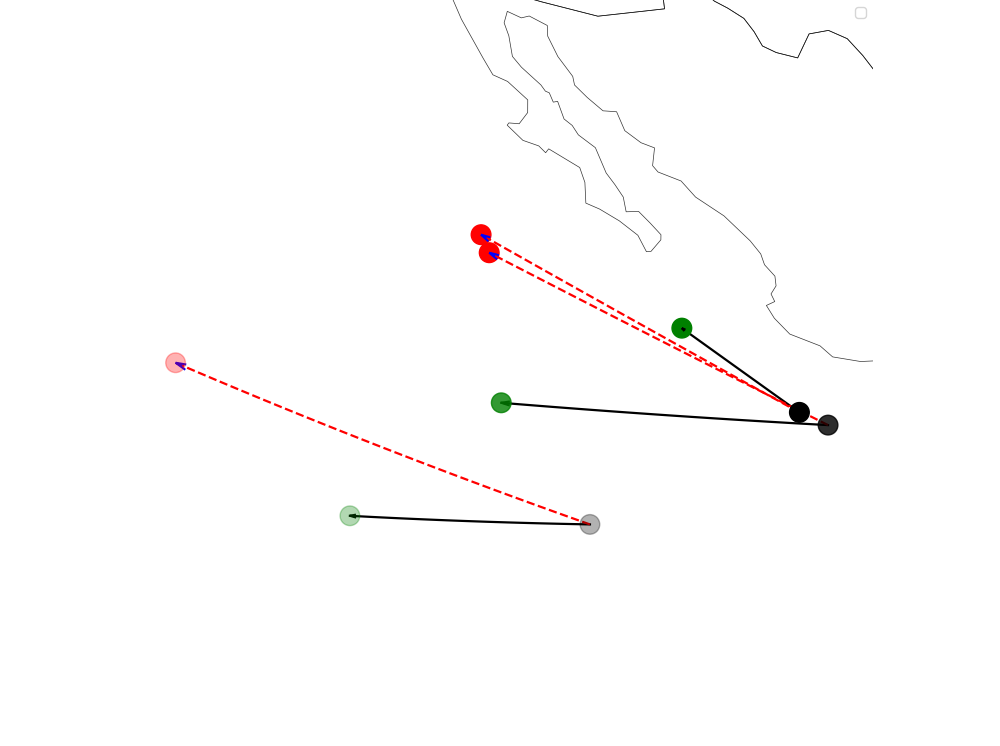}
    }
    \hfill
    \subfloat[]{
        \includegraphics[width=0.3\textwidth]{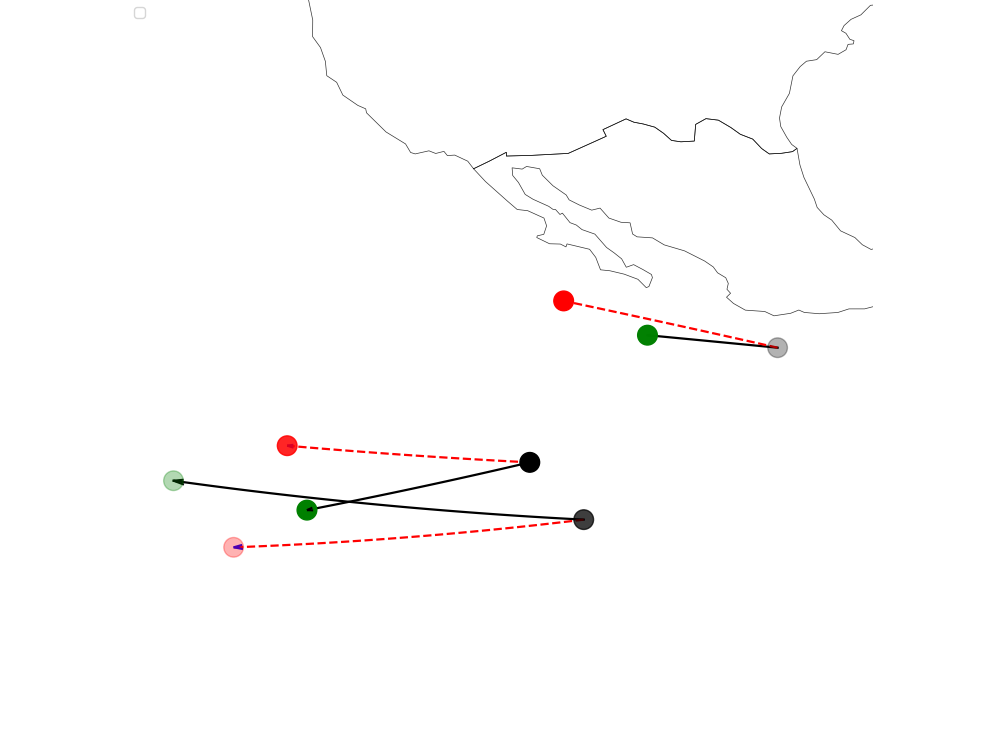}
    }
    \hfill
    \subfloat[]{
        \includegraphics[width=0.3\textwidth]{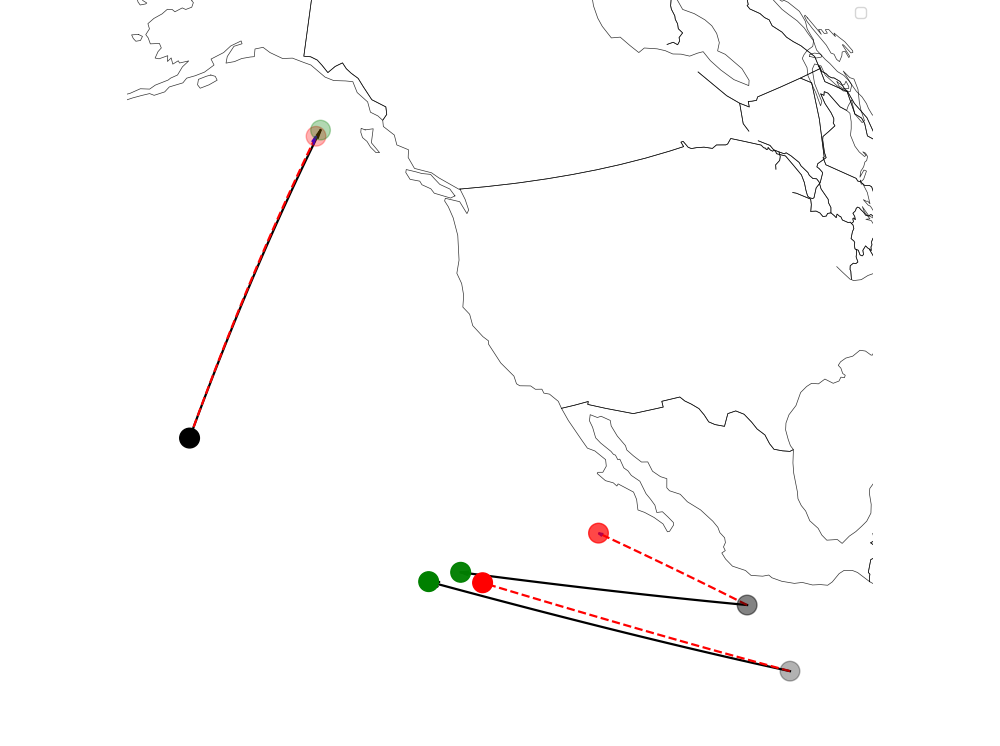}
    }
\end{minipage}

\vspace{0.5cm}

\begin{minipage}{0.04\textwidth}
    \centering
    \rotatebox{90}{Testing}
\end{minipage}
\begin{minipage}{0.95\textwidth}
    \subfloat[]{
        \includegraphics[width=0.3\textwidth]{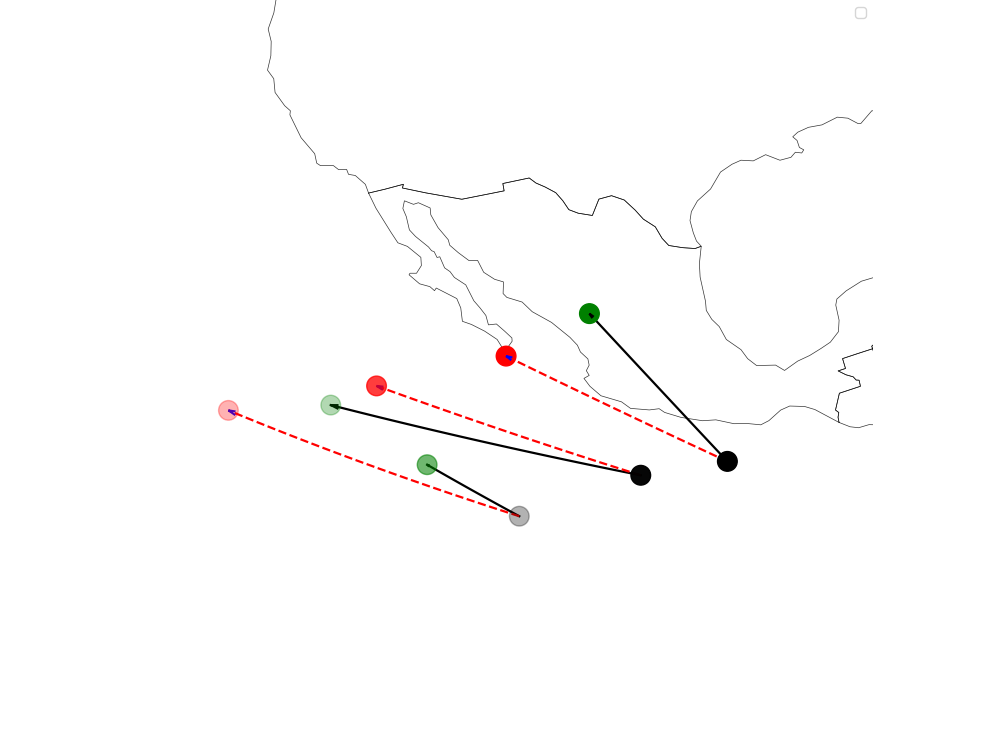}
    }
    \hfill
    \subfloat[]{
        \includegraphics[width=0.3\textwidth]{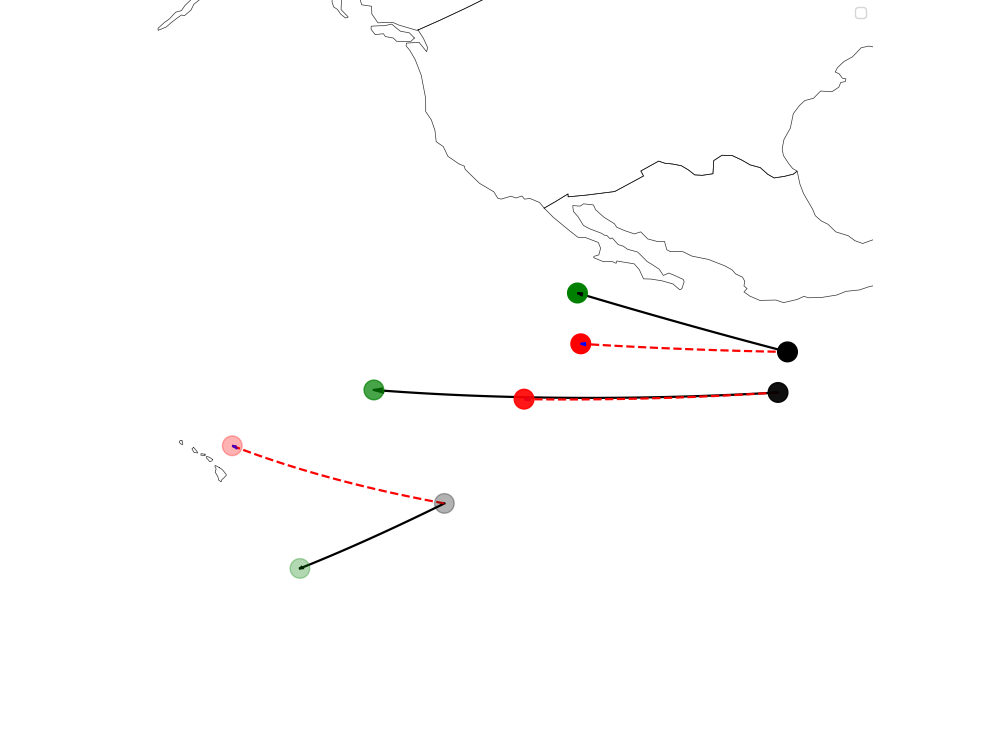}
    }
    \hfill
    \subfloat[]{
        \includegraphics[width=0.3\textwidth]{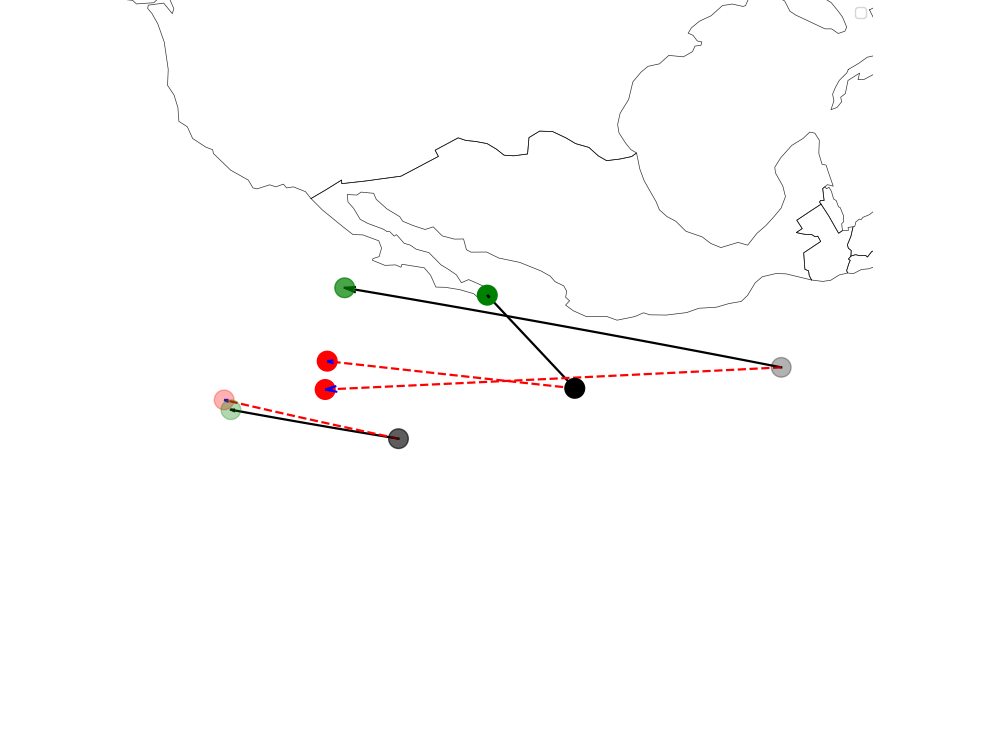}
    }
\end{minipage}

\caption{
Top row: Randomly selected covariate-response pairs with their predicted mean responses for the training set of the cyclone dataset. Bottom row: Randomly selected covariate-response pairs with predicted mean responses for the test set. Black dots denote covariates, green dots denote responses, and red dots denote predicted mean responses. Solid lines connect covariates to responses, while dashed lines connect covariates to predicted mean responses.
}
\label{fig:hur_fit_pred}

\end{figure}

\begin{figure}[htbp]
\centering

\subfloat[]{
    \includegraphics[width=0.3\textwidth]{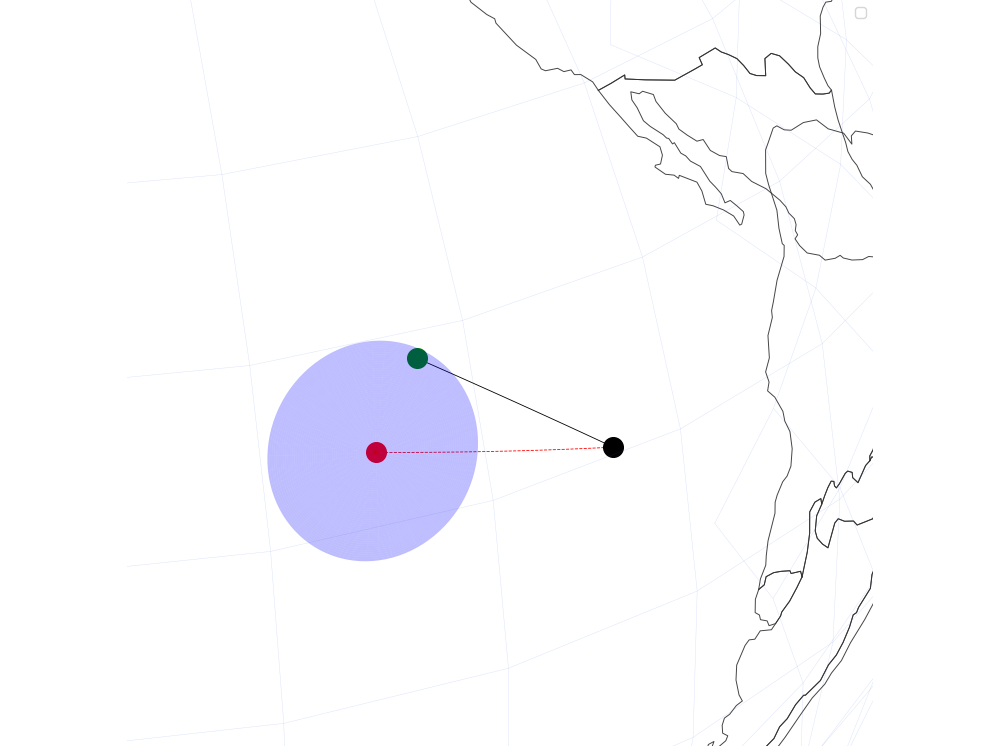}
}
\hfill
\subfloat[]{
    \includegraphics[width=0.3\textwidth]{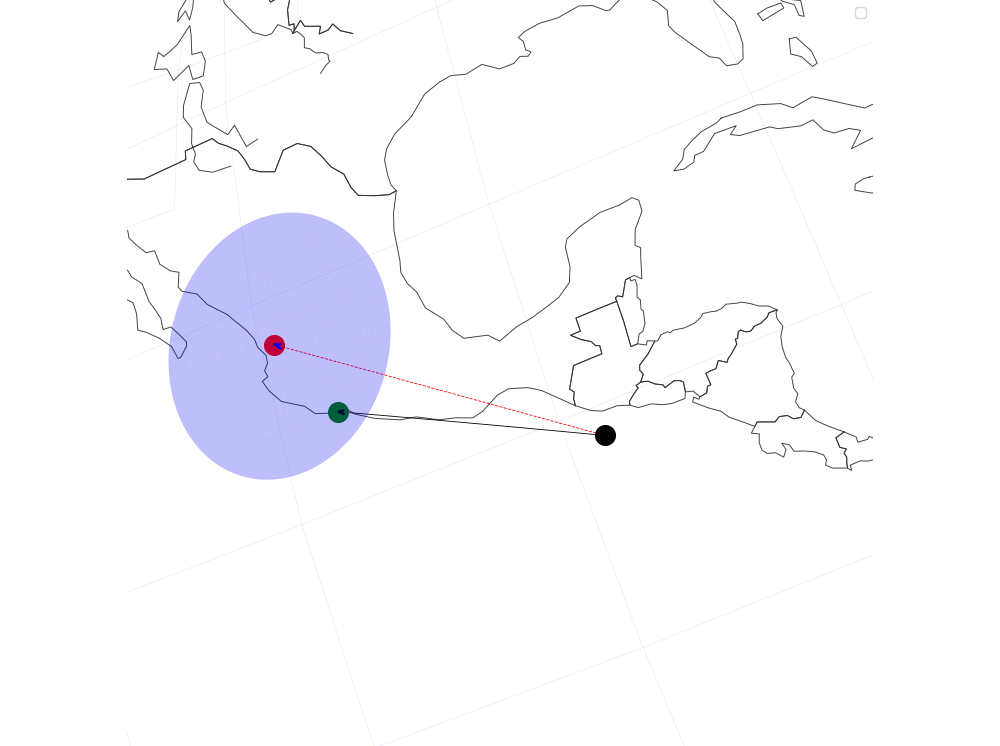}
}
\hfill
\subfloat[]{
    \includegraphics[width=0.3\textwidth]{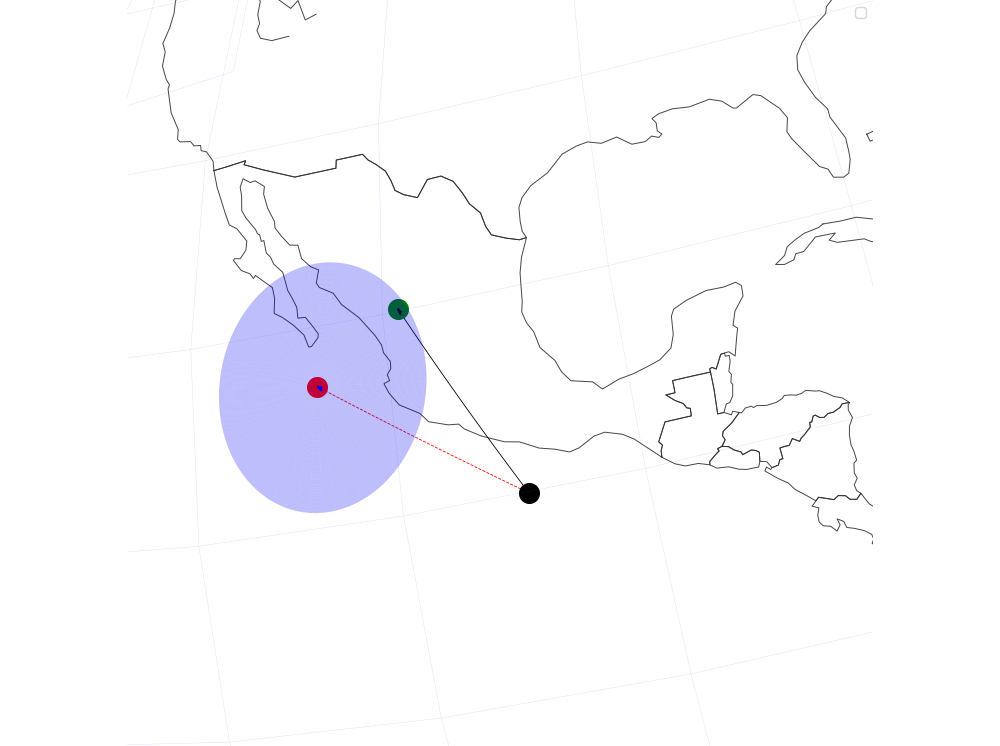}
}

\caption{
Posterior 99\% credible regions for the predicted mean response direction for the Tropical Cyclone dataset on the test set. The spherical caps are centered at the posterior mean directions. Black dots denote covariates, green dots denote observed responses, and red dots denote predicted mean responses.
}
\label{fig:hur_conf_pred}

\end{figure}

\begin{figure}[htbp]
    \centering

    \subfloat[Tropical Cyclone\label{fig_cluster_hur}]{
        \includegraphics[width=0.48\textwidth]{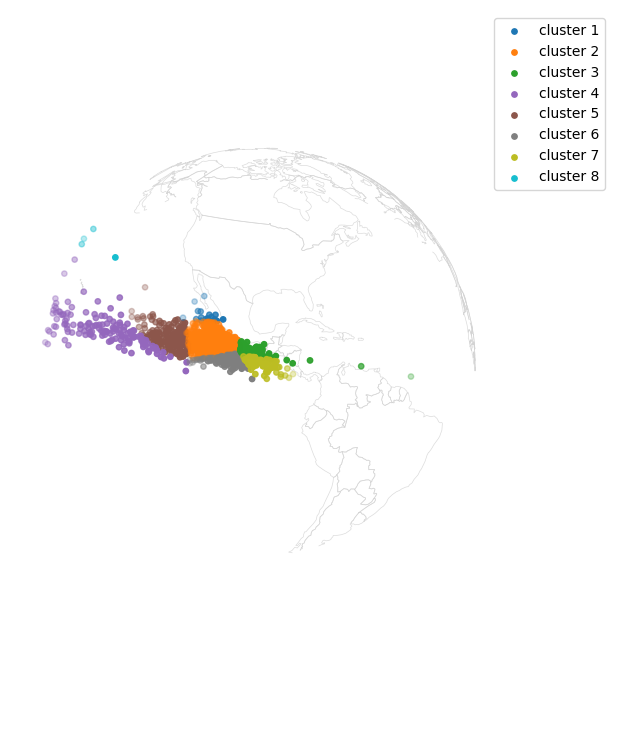}
    }
    \hfill
    \subfloat[Vector-Cardiogram\label{fig_cluster_cardio}]{
        \includegraphics[width=0.48\textwidth]{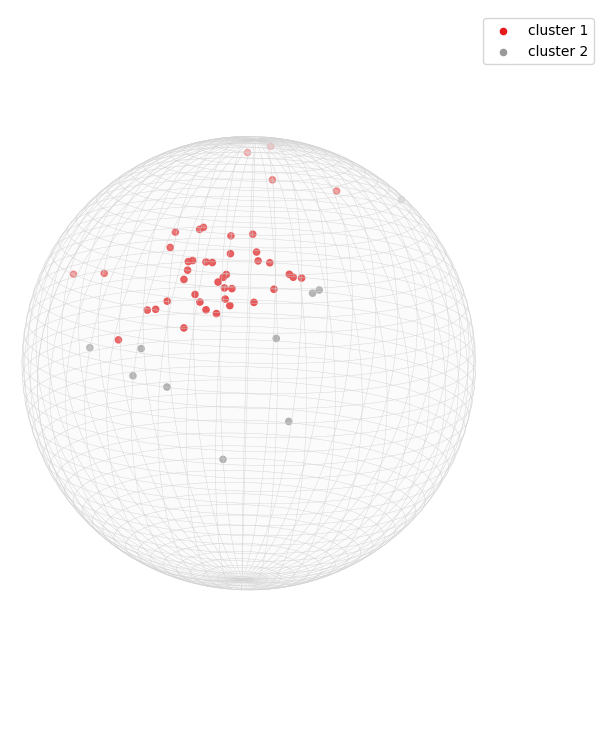}
    }

    \caption{Clustering of covariate locations induced by the OT-SOS-PLT model across two datasets.}
    \label{fig_cluster_all}
\end{figure}

\subsection{Vector-Cardiogram Data}
The second application we consider arises in vectorcardiography, a branch of electrocardiology that represents the heart’s electrical activity as time-varying vectors in three-dimensional space. Such representations are inherently directional and provide clinically informative summaries for diagnosing and monitoring cardiac abnormalities. We analyse data originally described by \cite{Downs2003} and further studied in \cite{Rosenthal2017}. The dataset consists of vectorcardiogram recordings from a cohort of 53 children aged between 11 and 19 years. For each subject, two recordings are available: one obtained using the Frank system and the other using the McFee system. These two systems differ in the spatial configuration of lead placement, leading to distinct but related representations of the same underlying cardiac electrical activity.
\\\\
The primary objective of this analysis is to characterise and quantify the relationship between these two measurement systems. From a statistical perspective, this corresponds to learning a mapping between two sets of spherical-valued responses arising from different observation mechanisms. We therefore treat the directional vectors obtained under the Frank system as covariates, and the corresponding directional vectors obtained under the McFee system as responses, enabling us to model the systematic transformation between these two clinically relevant representations of cardiac activity.
\\\\
The dataset has a relatively small sample size ($n = 53$), and therefore we assess predictive performance using a leave-one-out cross-validation (LOOCV) approach. Specifically, the model is fitted to 52 observations and used to predict the held-out response, with this procedure repeated for each observation in the dataset. In this low-sample regime, WAIC-based model selection is less stable, and we instead evaluate predictive performance for $k = 2, 3$, selecting the smallest value of $k$ beyond which no further improvement in predictive accuracy is observed.
\\\\
The results of the predictive performance comparisons are presented in Table~\ref{tab_cardio}. Overall, predictive accuracy is lower than that observed in the cyclone dataset, which is likely attributable to the limited sample size as well as increased heterogeneity in the covariate–response relationship across the sphere. In this setting, both the OT-SOS-Rotation and OT-SOS-PLT models outperform the nonparametric rotation approach, indicating improved flexibility in capturing systematic differences between the two vectorcardiographic systems.
\\\\
Finally, we fit the best-performing OT-SOS-PLT model with $k = 2$ to the full dataset and display randomly selected covariate–response pairs together with their corresponding predicted mean directions in Figure~\ref{fig:cardio_fit}. The induced clustering of covariate locations is shown in Figure~\ref{fig_cluster_cardio}, illustrating the spatial segmentation of the sphere learned by the model and highlighting regions with distinct transformation behaviour between the two recording systems.

\begin{table}[ht]
\small
    \centering
    \caption{Predictive Performance Comparison using LOO approach for Vector-Cardiogram dataset}
    \label{tab_cardio}
    \begin{tabular}{lccc}
        \toprule
        \textbf{Model} & \textbf{Number of Atoms} & \textbf{LOO MAE} & \textbf{LOO RMSE} \\
        \midrule
        OT-SOS-Rotation & 2  & 0.312 & 0.407  \\
           OT-SOS-Rotation  & 3   & 0.284 & 0.386  \\
        OT-SOS-Rotation  & 4   & 0.307 & 0.392 \\
        OT-SOS-PLT      & 2  & \textbf{0.266} & \textbf{0.321} \\
         OT-SOS-PLT      & 3
        & 0.273 & 0.327 \\
        \midrule
        Rotation      &  1 & 0.322 & 0.420 \\
        PLT      &  1 & 0.298 & 0.374 \\
        Nonparametric Rotation      &  NA & 0.317 &  0.418 \\
        \bottomrule
    \end{tabular}
\end{table}

\begin{figure}[htbp]
\centering

\subfloat[]{
    \includegraphics[width=0.3\textwidth]{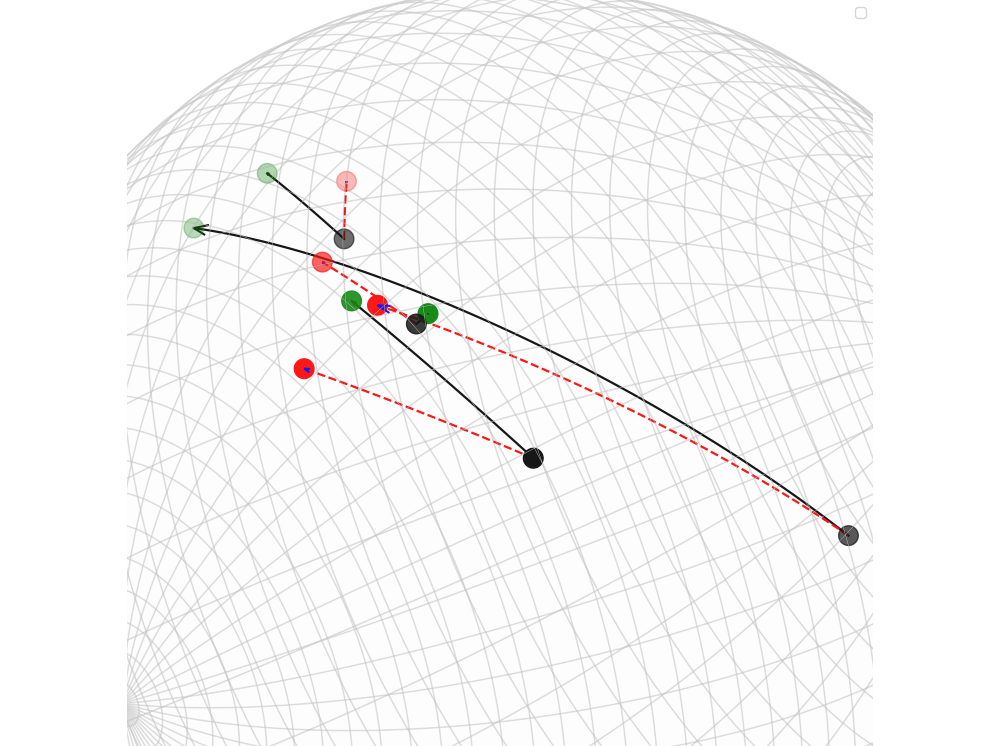}
}
\hfill
\subfloat[]{
    \includegraphics[width=0.3\textwidth]{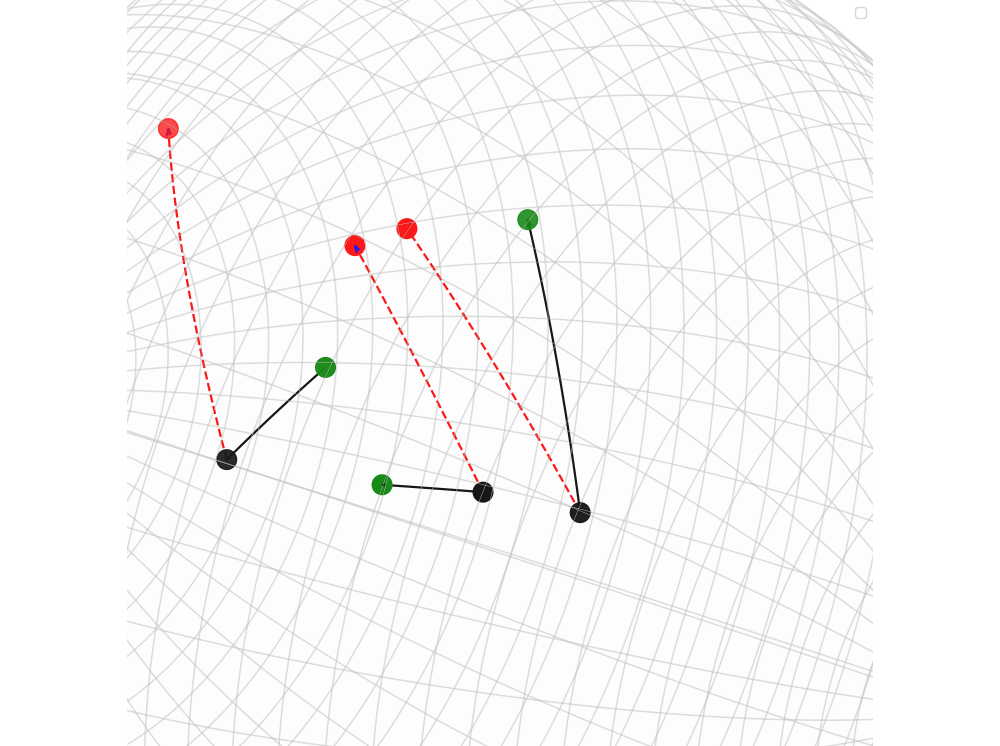}
}
\hfill
\subfloat[]{
    \includegraphics[width=0.3\textwidth]{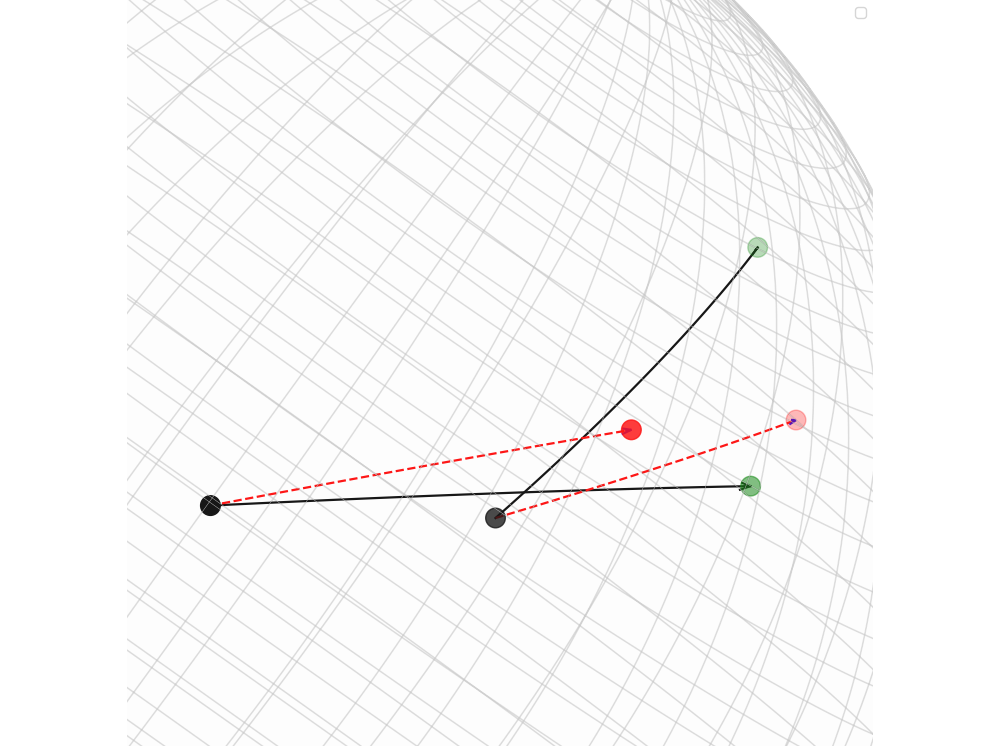}
}

\caption{
Randomly selected covariate-response pairs with their predicted mean responses for the Vector-Cardiogram dataset. Black dots denote covariates, green dots denote responses, and red dots denote predicted mean responses. Solid lines connect covariates to responses, while dashed lines connect covariates to predicted mean responses.
}
\label{fig:cardio_fit}

\end{figure}

\subsection{Computational Time}
We report the computational times of the MCMC algorithms for the OT-SOS-Rotation and OT-SOS-PLT models across the two datasets, using a Snapdragon\textsuperscript{\textregistered} X Elite X1E80100 processor with a Qualcomm\textsuperscript{\textregistered} Oryon\texttrademark\ CPU. For each dataset, the reported times correspond to the model fitted with the selected optimal number of atoms. The results are shown in Table \ref{tab_comp_time}. As expected, computational cost increases with both the number of atoms $k$ and the sample size $n$.

\begin{table}[ht]
\small
    \centering
    \caption{Computational Time of MCMC run with 2000 iterations}
    \label{tab_comp_time}
    \begin{tabular}{lccc}
        \toprule
    \textbf{Dataset} &    \textbf{Model} & \textbf{Number of Atoms} & \textbf{Computational Time (mins)}  \\
        \midrule
       Cyclone ($n=1107$) & OT-SOS-Rotation & 14 & 193.83  \\
       & OT-SOS-PLT & 10 & 181.36  \\
         Vector-Cardiogram ($n=53$) & OT-SOS-Rotation & 3 & 2.57 \\
         & OT-SOS-PLT & 2 & 1.72 \\
        \bottomrule
    \end{tabular}
\end{table}

\section{Discussion}
\label{sec_discussion}
In this work, we proposed an optimal transport–based approach for partitioning the spherical domain, enabling flexible modeling for the problem of sphere-on-sphere regression. Our framework leverages partitioning to define locally adaptive regression models, specifically using rotation-based transformations and projective linear transformations within each region. Together with full Bayesian inference, our approach yields coherent uncertainty quantification for both the induced segmentation and predictions; moreover, the near-parametric posterior contraction rates established here and the empirical performance in simulations and applications illustrate that the method can match or improve upon global parametric specifications while remaining competitive with flexible nonparametric alternatives.
\\\\
The proposed method has some limitations. First, computational cost increases with sample size $n$ and the number of atoms $k$, as each MCMC iteration requires solving an entropic-regularized optimal transport problem. Second, the entropic regularization parameter $\omega$ introduces a trade-off between computational efficiency and approximation accuracy to the semi-discrete solution. Third, model selection for $k$ currently relies on information criteria rather than formal trans-dimensional inference, though the WAIC-based approach performed well in our studies.
\\\\
Several extensions of this work are possible. Alternative local models could be employed within each partition region, including nonlinear mappings or other Lie group--based transformations. The partitioning framework naturally extends to related settings, such as spherical covariates with Euclidean responses or regression between spheres of different dimensions. Finally, connections to Bayesian clustering and mixture models suggest potential for developing joint inference procedures that simultaneously learn partitions and local regression structures.

\newpage
\appendix

\begin{center}
 \Large Supplementary Material for ``Bayesian Inference for Sphere-on-Sphere Regression with Optimal Transport Map''
\end{center}

\section{Notations}
For $1 \le r \le \infty$, we let $||\cdot||_{L^r({\cal X})}$ denote the $L^r$-norm of a measurable function (with respect to the Lebesgue measure) on a domain ${\cal X}$. We use the notation $a \lesssim b$ and $a \gtrsim b$ to denote inequalities up to a positive
multiplicative constant, and $a \asymp b$ if $a \lesssim b$ and $b \lesssim a$. We denote ${\cal F}$ as the space of regression maps from the sphere $\mathbb{S}^p$ to itself, that is, ${\cal F} := \{f: \mathbb{S}^p \rightarrow \mathbb{S}^p \}$. For each $f \in {\cal F}$, we let $\mathbb{P}_f$ denote the joint distribution of a covariate-response pair $(x,y)$ where the covariate $x$ is uniformly distributed on $\mathbb{S}^p$. Note that although the assumptions state that the covariate \(x\) follows a distribution \(\mu\) whose density is bounded away from \(0\) and \(\infty\), there is no loss of generality in assuming throughout the proof that \(\mu\) is the uniform measure. We let $\Pi(\cdot)$ denote the prior distribution on ${\cal F}$ and, given independently and identically distributed (i.i.d.) observations ${\cal D}_n := \{(x_i, y_i)\}_{i=1}^{n}$ from $\mathbb{P}_f$ for some $f \in {\cal F}$, we let $\Pi(\cdot|{\cal D}_n)$ denote the posterior distribution on ${\cal F}$. Given a function $h: \mathbb{S}^p \rightarrow \mathbb{R}$, we define $\mathbb{P}_n h := \frac{1}{n} \sum_{i=1}^{n} h(x_i)$ for i.i.d. uniformly distributed points $\{x_i\}_{i=1}^{n}$. Let $\tilde{d}$ denote a semi-metric on ${\cal F}$, and let $f \in {\cal F}$ be an arbitrary regression map; we define the $\epsilon$-ball around  map $f$ with respect to the semi-metric $\tilde{d}$ as 
$$ B(f; \epsilon, \tilde{d}) = \{f' \in {\cal F}: \tilde{d}(f', f) \le \epsilon\}.$$
When the underlying semi-metric is obvious from context, we use the simplified notation $B(f; \epsilon)$ instead. For any $\epsilon > 0$ and any set $A \subset {\cal F}$, the $\epsilon$-covering number of $(A, \tilde{d})$, denoted by ${\cal N}(\epsilon, A, \tilde{d})$, is defined to be the minimum number of $\epsilon$-balls of the form $B(f; \epsilon, \tilde{d})$ that are needed to cover $A$.

\section{Assumptions for Posterior Contraction Rate Result}
\label{sec_assumptions}
\begin{assumption}
    \label{source_measure_assumption}
The absolutely continuous source probability measure $\mu$ on $\mathbb{S}^p$ is known, with a probability density function that is bounded above and below by positive constants.
\end{assumption}

\begin{assumption}
\label{true_reg_map_assumption}
The true regression map $f_0$ is determined by source measure $\mu$, a discrete target measure
\[
\nu_0 = \sum_{j=1}^{k_0} p_{0,j} \delta_{z_{0,j}},
\]
and rotations $R_{0,j}$ for $j = 1, \ldots, k_0$. The pair $(\mu, \nu_0)$ induces a Laguerre partition of the sphere,
\[
\mathbb{S}^p = \bigcup_{j=1}^{k_0} {\cal A}_{0,j},
\]
with $f_0$ acting piecewise as
\[
f_0(x) = R_{0,j} x, \qquad x \in {\cal A}_{0,j}.
\]
Thus, on each cell ${\cal A}_{0,j}$, $f_0$ is a rigid rotation.
\end{assumption}

\begin{assumption}    
    \label{prior_k_assumption}
The prior distribution $\Pi_k$ is supported on the set of integers $\{1, \ldots, K_{\text{max}}\}$ where $K_\text{max} \ge k_0$ and has probability mass function $\pi_k$ satisfying 
\begin{eqnarray}
    \pi_k(\tilde{k}) > 0, \quad \tilde{k} = 1,\ldots, K_\text{max}. 
\end{eqnarray} 
\end{assumption}

\begin{assumption}
\label{prior_measure_assumption}
Conditional on the number of atoms $k$, the atoms $\{z_1, z_2, \ldots, z_k\}$ of the measure $\nu$ are uniformly distributed on $\mathbb{S}^p$, subject to the constraint 
\begin{eqnarray}
\label{eqn_prior_atoms_constraint}
 \min_{j_1 \ne j_2} \big\{ d(z_{j_1}, z_{j_2}) \big\} \ge \eta_n ,
\end{eqnarray}
for a positive decreasing sequence $\eta_n > 0$,satisfying
\[  \eta_n \asymp \epsilon_n^{\alpha} \]
for $\alpha \in (0, \frac{1}{9}) $
where $\epsilon_n \asymp n^{-1/2} \log n$.
The probability vector $(p_1, \ldots, p_k)$ corresponding to the atoms have a Dirichlet distribution $\mbox{Dir}(\alpha_p, \ldots, \alpha_p)$ for some $\alpha_p < 1$.
\end{assumption}

\begin{remark}
   Assumption \ref{prior_measure_assumption} implies that the prior on the discrete measure, and consequently on the regression map, depends on the sample size $n$. Strictly speaking, we should therefore write $\Pi_n$ instead of $\Pi$ for the prior distribution. However, for notational simplicity, we continue to write $\Pi$ and suppress the dependence on $n$.
\end{remark}

\begin{remark}
   The minimum distance constraint between atoms \eqref{eqn_prior_atoms_constraint} is related to repulsive mixture models in the literature \citep{Petralia2012, Xie2020, Huang2025} and the non-local prior for choosing mixture components \citep{Fuquene2019}. In mixture modeling, a common issue arises when independent priors for the location parameters of mixture components lead to potentially redundant components. 
\end{remark}

\begin{assumption}
\label{prior_rotation_assumption}
Conditional on the number of atoms $k$, the rigid rotation matrices $R_1,\ldots,R_k$ are uniformly distributed with respect to the Haar measure on $\mathbb{SO}(p+1)$.
\end{assumption}

\section{Proof of Main Results}
\label{sec_proof_main}
In this section, we derive a general result on the contraction rate for Bayesian regression models on the sphere. We assume that the model takes the form
\begin{align}
Y_i \mid (X_i = x_i) &\stackrel{\text{i.i.d.}}{\sim} g_p(y_i; f(x_i), \kappa), \quad i = 1, \ldots, n,
\end{align}
and that the true data-generating model is
\begin{align}
Y_i \mid (X_i = x_i) &\stackrel{\text{i.i.d.}}{\sim} g_p(y_i; f_0(x_i), \kappa), \quad i = 1, \ldots, n,
\end{align}
where $f_0 : \mathbb{S}^p \to \mathbb{S}^p$ is the unknown true regression map.

\begin{theorem}
\label{general_thm}
Assume the design points $(x_i)_{i=1}^n$ are i.i.d. distributed on $\mathbb{S}^p$ from a density that is bounded away from 0 and $\infty$. Assume there exists sequences $(\epsilon_n)_{n=1}^{\infty}$, $(\underline{\epsilon}_n)_{n=1}^{\infty}$ such that $0 \le \underline{\epsilon}_n \le \epsilon_n \rightarrow 0$ and $\min(n \epsilon_n^2, n \underline{\epsilon}_n^2) \rightarrow \infty$ as $n \rightarrow \infty$. Let $({\cal F}_n)_{n=1}^{\infty}$ be ``sieves'' for ${\cal F}$, that is, a sequence of sets ${\cal F}_n \subset {\cal F}$. Assume for some constants $D_1 > 0$ and for sufficiently large $M$, the following three conditions hold.
\begin{condition}
\label{cond_covering}
    
\begin{eqnarray}
\sum_{j=M}^{\infty} N_{nj} \exp(-D_1 n j^2 \epsilon_n^2) \rightarrow 0 
\end{eqnarray}
as $n \rightarrow \infty$, where $N_{nj} = {\cal N}(\xi j \epsilon_n, {\cal S}_{nj}(\epsilon_n), \tilde{d})$, for  $\xi = \frac{1}{32}$, is the covering number of the set
$$ {\cal S}_{nj}(\epsilon_n) = \{f \in {\cal F}_n: j \epsilon_n < \tilde{d}(f, f_0) \le (j+1) \epsilon_n \} .$$
\end{condition}
\begin{condition}
  \label{cond_remain_mass}
For sufficiently large $n$, 
    \begin{eqnarray}
    \Pi({\cal F}_n^{c}) \lesssim \exp(-3D_1 n \underline{\epsilon}_n^2).
\end{eqnarray}
\end{condition}

\begin{condition}
  \label{cond_prior_mass}
  For sufficiently large $n$,
    \begin{eqnarray}
    \Pi(B_n(f_0; \underline{\epsilon}_n)) \ge \exp(-D_1 n \underline{\epsilon}_n^2) .
\end{eqnarray}
\end{condition}

\noindent Then,
$$ \mathbb{E}_0 \big( \Pi( \tilde{d}(f,f_0) > M \epsilon_n \mid {\cal D}_n) \big) \rightarrow 0$$
as $n \rightarrow \infty$. 
\end{theorem}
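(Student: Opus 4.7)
The plan is to follow the standard Ghosal--Ghosh--van der Vaart testing-and-KL-neighborhood machinery as used by Xie (2019), adapted to the vMF regression model on the sphere. The proof proceeds in three main stages: (i) translate the geometric semi-metric $\tilde d$ into statistical distances between the joint laws $\mathbb{P}_f$; (ii) build a single global test with exponentially small errors using Condition~\ref{cond_covering}; and (iii) combine these with Conditions~\ref{cond_remain_mass} and~\ref{cond_prior_mass} to bound the expected posterior mass of $\{\tilde d(f,f_0)>M\epsilon_n\}$.

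First I would establish the key bridge between the geometric and information-theoretic worlds. Because $p_f(x,y) = g_p(y;f(x),\kappa)$ times the uniform density in $x$, direct computation with the vMF density shows that for any two mean directions $m_1,m_2\in\mathbb{S}^p$, both the Hellinger distance and the Kullback--Leibler divergence between $g_p(\cdot;m_1,\kappa)$ and $g_p(\cdot;m_2,\kappa)$ are equivalent to $1-m_1^\top m_2$, which is itself equivalent to $d^2(m_1,m_2)$ on the whole sphere (with constants depending only on $p,\kappa$). Integrating against $\mu$ then yields
\begin{equation*}
h^2(\mathbb{P}_f,\mathbb{P}_{f_0}) \asymp K(\mathbb{P}_0\Vert\mathbb{P}_f) \asymp V(\mathbb{P}_0\Vert\mathbb{P}_f) \asymp \tilde d^2(f,f_0),
\end{equation*}
where $V$ denotes the second KL moment. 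This immediately turns the prior-mass hypothesis in Condition~\ref{cond_prior_mass} (which is phrased for a KL-type neighborhood $B_n$) into one phrased for $\tilde d$-balls, and conversely lets $\tilde d$-separated alternatives be tested.

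Second, I would construct tests. For the annular slice $S_{nj}(\epsilon_n)$ with $j\ge M$, take an $\xi j\epsilon_n$-cover; by Le Cam--Birg\'e there exists, for each center $f'$ of the cover, a test $\phi_{n,j,f'}$ with $\mathbb{E}_0\phi_{n,j,f'}\le\exp(-c n h^2(\mathbb{P}_0,\mathbb{P}_{f'}))$ and $\sup_{f''\in B(f';\xi j\epsilon_n,\tilde d)}\mathbb{E}_{f''}(1-\phi_{n,j,f'})\le\exp(-c n h^2(\mathbb{P}_0,\mathbb{P}_{f'}))$, and by the equivalence above both bounds are $\le\exp(-c' n j^2\epsilon_n^2)$ provided $\xi$ is small. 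Taking the maximum over the cover and then over $j\ge M$ gives a global test $\phi_n$ whose Type~I error is controlled by Condition~\ref{cond_covering} and whose Type~II error on $\{f\in\mathcal F_n:\tilde d(f,f_0)>j\epsilon_n\}$ is $\le\exp(-c' n j^2\epsilon_n^2)$.

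Third, I would lower-bound the denominator of the posterior using Condition~\ref{cond_prior_mass}. The standard Ghosal--Ghosh--van der Vaart evidence lower bound, together with a Chebyshev argument applied to the second KL-moment bound, shows that the event
\begin{equation*}
\int_{\mathcal F}\prod_{i=1}^{n}\frac{p_f(x_i,y_i)}{p_0(x_i,y_i)}\,\Pi(\intd f)\ge \Pi(B_n(f_0;\underline\epsilon_n))\exp(-(1+D_1)n\underline\epsilon_n^2)\ge\exp(-C n\underline\epsilon_n^2)
\end{equation*}
has $\mathbb{P}_0$-probability tending to one. Decomposing
\begin{equation*}
\Pi(\tilde d(f,f_0)>M\epsilon_n\mid\mathcal D_n)\le\phi_n+(1-\phi_n)\Pi(\mathcal F_n^c\mid\mathcal D_n)+\sum_{j\ge M}(1-\phi_n)\Pi(S_{nj}(\epsilon_n)\mid\mathcal D_n),
\end{equation*}
the $\phi_n$ term vanishes by the Type~I bound; the sieve-complement term vanishes by Condition~\ref{cond_remain_mass} together with the denominator bound (choosing $D_1$ so that $3D_1>C$); and each annular term is controlled by combining the Type~II bound with the denominator bound and then summing, which converges by Condition~\ref{cond_covering}.

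The main obstacle I expect is the first stage: getting \emph{global} (not merely local) two-sided equivalence between $\tilde d^2(f,f_0)$ and the Hellinger/KL functionals of $(\mathbb{P}_0,\mathbb{P}_f)$, uniformly in $f$. The local equivalence near the identity is routine Taylor expansion of the vMF log-density, but on the sphere the geodesic distance is bounded and the mean directions $f(x),f_0(x)$ can be antipodal for some $x$, so one must argue that the vMF ratio $\log(g_p(y;f(x),\kappa)/g_p(y;f_0(x),\kappa))$ and its second moment remain controlled by $1-f(x)^\top f_0(x)$ uniformly, using the explicit form $\kappa(f(x)-f_0(x))^\top y$ of the log-ratio. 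Once this uniform equivalence is in hand, the remainder of the proof is a direct transcription of the Euclidean argument to the spherical setting.
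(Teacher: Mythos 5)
Your plan would work, but it takes a genuinely different route from the paper's. The paper never passes through Hellinger distance or the generic Le Cam--Birg\'e testing machinery. Instead it exploits the fact that the vMF log-likelihood ratio $\log\big(p_{f_1}/p_{f_0}\big)(x,y)=\kappa\,(f_1(x)-f_0(x))^{\top}y$ is \emph{linear} in $y$ and bounded, and builds an explicit test statistic
$$T_n=\sum_{i=1}^{n}\Big[(f_1(X_i)-f_0(X_i))^{\top}Y_i-A_p(\kappa)\big(f_1(X_i)^{\top}f_0(X_i)-1\big)\Big]-\tfrac{1}{16}A_p(\kappa)\,t,$$
whose summands are bounded and mean-zero under $\mathbb{P}_0$; the Type I and Type II bounds are then obtained by Hoeffding and Bernstein applied directly, after passing to an equivalent semi-metric $\hat d^2(f_0,f_1)=\int(1-f_0^{\top}f_1)\,d\mu$. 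The global test is assembled from these local tests exactly as in your second stage. Likewise, the evidence lower bound in the paper comes from applying Bernstein's inequality directly to $\sum_i\int(f_0(X_i)-f(X_i))^{\top}Y_i\,\Pi(\intd f\mid B_n)$, not from the Chebyshev-on-second-KL-moment argument you propose. The trade-off is clear: your route is more generic and lets the off-the-shelf GGvdV theorems apply once the two-sided equivalence $h^2\asymp K\asymp V\asymp\tilde d^2$ is in hand, whereas the paper's route sidesteps that equivalence entirely, gives cleaner constants, and mirrors Xie (2019). I would also flag that the ``obstacle'' you raised at the end is in fact benign, and for the very reason you identify: since $\kappa$ is fixed and known, the log-likelihood ratio is bounded by $2\kappa$ uniformly, so $K\lesssim h^2$ (via the bounded-likelihood-ratio refinement of the $h^2$--KL comparison) and $V\lesssim K$ hold globally, and $1-m_1^{\top}m_2\asymp d^2(m_1,m_2)$ on all of $\mathbb{S}^p$ because $1-\cos\theta\asymp\theta^2$ on $[0,\pi]$ (precisely the content of the paper's Lemma on metric equivalence). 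So your proof closes; it simply is not the route the authors took.
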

The three conditions are analogous to those needed for existing results on posterior contraction rates \citep{Ghosal2000, Ghosal2007}. Condition \ref{cond_covering}~corresponds to the entropy condition, stipulating that the sieve space ${\cal F}_n$ should not be excessively large. Condition \ref{cond_remain_mass}, combined with Condition \ref{cond_covering},~asserts that ${\cal F}_n^{c}$, the complement of the sieve space, should attract minimal prior probability mass. Condition~\ref{cond_prior_mass}. requires that the prior distribution puts a sufficient amount of probability mass around the true map $f_0$. 
\\\\
Theorem \ref{general_thm} could likely be used for studying contraction rates of posterior distributions for other Bayesian sphere-on-sphere regression models. Similar to \cite{Ghosal2000, Ghosal2007,Xie2019}, the proof of Theorem \ref{general_thm}, which we give in Appendix~\ref{sec:Proof_Thm1} in the Supplementary Material, involves constructing a suitable test statistic and obtaining exponentially small Type I and Type II error probability bounds. The proof, along with the construction of the test statistic, builds on the approach of \cite{Xie2019} and adapts it to the sphere context.

\subsection{Proof of Theorem \ref{general_thm}}\label{sec:Proof_Thm1}
We present several preliminary lemmas that will be utilized in the proof of Theorem \ref{general_thm}, and the proof of these lemmas are presented in Appendix \ref{sec_proof_auxiliary}.
\\\\
As in the case of \cite{Xie2019}, by our assumption that $(x_i)_{i=1}^n$ are i.i.d. distributed from some density bounded away from 0 and $\infty$, there is no loss of generality to work under the assumption that $(x_i)_{i=1}^n$ are uniformly distributed on $\mathbb{S}^p$.
We first define the semi-metric $\hat{d}$ on the space of regression maps ${\cal F} = \{f: \mathbb{S}^p \rightarrow \mathbb{S}^p \}$ as
$$ \hat{d}^2(f_0, f_1) = \int_{\mathbb{S}^p} (1 - f_0(x)^T f_1(x) )  \intd\rho(x) .$$ We first show that the $\hat{d}$ and $\tilde{d}$ are equivalent. 
\begin{lemma}
\label{lemma_equi_metric}
For all $f_0, f_1: \mathbb{S}^p \rightarrow \mathbb{S}^p$, we have
$$ \tilde{d}(f_0, f_1) \lesssim \hat{d}(f_0, f_1) \lesssim \tilde{d}(f_0, f_1) .$$
\end{lemma}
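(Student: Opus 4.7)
The plan is to reduce both squared semi-metrics to integrals of pointwise functions of the angle between $f_0(x)$ and $f_1(x)$, and then establish the equivalence via a pointwise two-sided inequality that can be integrated.

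First, I would fix $x \in \mathbb{S}^p$, set $a = f_0(x)$, $b = f_1(x)$, and $\theta = \theta(x) := \cos^{-1}(a^T b) \in [0,\pi]$. By definition, the integrand of $\tilde{d}^2$ is $\theta^2$, while the integrand of $\hat{d}^2$ rewrites as
$$ 1 - a^T b \;=\; 1 - \cos\theta \;=\; 2\sin^2(\theta/2). $$
So everything reduces to comparing $\theta^2$ and $2\sin^2(\theta/2)$ on $[0,\pi]$.

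Next, I would establish the two-sided pointwise bound. From the standard inequality $\sin u \le u$ for $u \ge 0$ I get $2\sin^2(\theta/2) \le \theta^2/2$, giving the upper bound $\hat{d}^2(f_0,f_1) \le \tfrac12 \tilde{d}^2(f_0,f_1)$. For the reverse direction, I would invoke Jordan's inequality $\sin u \ge (2/\pi) u$ for $u \in [0,\pi/2]$, so that $\sin(\theta/2) \ge \theta/\pi$ on $[0,\pi]$, whence $2\sin^2(\theta/2) \ge 2\theta^2/\pi^2$. Integrating this pointwise inequality against $\mu$ yields
\begin{equation*}
\tfrac{2}{\pi^2}\,\tilde{d}^2(f_0,f_1) \;\le\; \hat{d}^2(f_0,f_1) \;\le\; \tfrac12\,\tilde{d}^2(f_0,f_1),
\end{equation*}
and taking square roots gives $\tilde{d}(f_0,f_1) \lesssim \hat{d}(f_0,f_1) \lesssim \tilde{d}(f_0,f_1)$ with explicit constants $\pi/\sqrt{2}$ and $1/\sqrt{2}$.

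There is no real obstacle here; the content of the lemma is precisely the elementary two-sided comparison of the chord-based quantity $1-\cos\theta$ with the arc-length squared $\theta^2$ on $[0,\pi]$. The only point to be mildly careful about is that the upper endpoint $\theta = \pi$ is attained (antipodal points), so one needs an inequality valid on the closed interval $[0,\pi]$ rather than only near $0$; Jordan's inequality on $[0,\pi/2]$ applied to $\theta/2$ covers exactly this range, which is why it is the right tool.
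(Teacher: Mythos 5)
Your proof is correct and follows essentially the same approach as the paper: reduce both semi-metrics to pointwise functions of the angle $\theta = \cos^{-1}(f_0(x)^T f_1(x))$, establish a two-sided pointwise comparison of $1 - \cos\theta$ with $\theta^2$ on $[0,\pi]$, and integrate. The only difference is the elementary tool used for the pointwise bound: you use the half-angle identity $1-\cos\theta = 2\sin^2(\theta/2)$ together with Jordan's inequality, whereas the paper uses the Taylor bounds $1 - \theta^2/2 \le \cos\theta \le 1 - \theta^2/2 + \theta^4/24$ combined with $\theta^4 \le \pi^2\theta^2$; both routes are valid, and yours in fact yields a slightly sharper constant in the lower bound.
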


Lemma \ref{lemma_equi_metric} allows us to work with the semi-metric $\hat{d}$ in instead of $\tilde{d}$. We have the following lemma concerning the conditional expectation of response $Y$ given covariate $x$. Recall that the conditional distribution of $Y$ given covariate $x$ follows the vMF distribution with mean direction $f_0(x)$ and concentration parameter $\kappa$.
\begin{lemma}
\label{lemma_cond_direction}
Let $f_0: \mathbb{S}^{p} \rightarrow \mathbb{S}^{p}$ be the true regression function. Conditional on $X=x$, We have
$$ \mathbb{E}_0(Y \mid X=x) = A_p(\kappa) f_0(x) ,$$
where $$ A_p(\kappa) = \frac{I_{p/2}(\kappa)}{I_{p/2-1}(\kappa)},$$
and where $I_{v}(\cdot)$ is the modified Bessel function of the first kind of order $v$.
\end{lemma}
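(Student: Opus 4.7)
The plan is to compute $\mathbb{E}_0(Y\mid X=x)$ directly from the vMF density and reduce it to a one-dimensional integral expressible as a ratio of modified Bessel functions. Conditional on $X=x$, the response $Y$ has density $g_p(y;f_0(x),\kappa)=C_p(\kappa)\exp\bigl(\kappa\,f_0(x)^T y\bigr)$ on $\mathbb{S}^p$, where $C_p(\kappa)$ is the usual normalising constant depending only on $\kappa$ and $p$.

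First I would invoke rotational equivariance of the vMF family: if $R\in\mathbb{SO}(p+1)$ then $RY\sim\mathrm{vMF}(Rf_0(x),\kappa)$. Applying this to any rotation $R$ that fixes $f_0(x)$ shows that $\mathbb{E}_0(Y\mid X=x)$ is itself fixed by every such $R$, so it must lie in the one-dimensional subspace spanned by $f_0(x)$. Thus $\mathbb{E}_0(Y\mid X=x)=\alpha\,f_0(x)$ with
\[
\alpha \;=\; f_0(x)^T\,\mathbb{E}_0(Y\mid X=x) \;=\; \mathbb{E}_0\bigl(f_0(x)^T Y\mid X=x\bigr),
\]
and by the same equivariance argument $\alpha=\alpha(\kappa,p)$ depends neither on $x$ nor on $f_0(x)$.

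Next, I would evaluate $\alpha$ by choosing an orthonormal frame in which $f_0(x)$ is the north pole and parametrising $y=(\sin\theta\,\omega,\cos\theta)$ for $\theta\in[0,\pi]$ and $\omega\in\mathbb{S}^{p-1}$. Then $f_0(x)^T y=\cos\theta$, the surface measure factors as a constant multiple of $\sin^{p-1}\theta\,\intd\theta\,\intd\sigma(\omega)$, and the $\omega$-integrals cancel between numerator and denominator, yielding
\[
\alpha \;=\; \frac{\int_0^\pi \cos\theta\,e^{\kappa\cos\theta}\,\sin^{p-1}\theta\,\intd\theta}{\int_0^\pi e^{\kappa\cos\theta}\,\sin^{p-1}\theta\,\intd\theta}.
\]

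Finally, both integrals are instances of the classical integral representation $I_\nu(\kappa)=\bigl[(\kappa/2)^\nu/(\Gamma(\nu+\tfrac12)\sqrt{\pi})\bigr]\int_0^\pi e^{\kappa\cos\theta}\sin^{2\nu}\theta\,\intd\theta$. With the appropriate index $\nu$, the denominator is proportional to a Bessel function of order $\nu$; differentiating the identity $\tfrac{\intd}{\intd\kappa}[\kappa^{-\nu}I_\nu(\kappa)]=\kappa^{-\nu}I_{\nu+1}(\kappa)$ in $\kappa$ then identifies the numerator with a Bessel function of order $\nu+1$, and the common prefactors cancel to leave $\alpha$ as the stated ratio of modified Bessel functions. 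The argument is entirely routine; the only step requiring care is matching the Bessel-function index convention adopted in the paper to that of the standard integral representation, and no genuine obstacle is expected.
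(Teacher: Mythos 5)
Your proof is correct in approach and would succeed, but it differs in character from the paper's, which offers essentially no derivation: the paper's proof of Lemma~\ref{lemma_cond_direction} is a one-line appeal to the standard fact that a $\mathrm{vMF}(\mu,\kappa)$ random variable has mean $A(\kappa)\mu$, whereas you actually derive that fact. Your route — rotational equivariance to show the mean lies along $f_0(x)$, a spherical-coordinate reduction to the scalar ratio
$\alpha=\bigl[\int_0^\pi\cos\theta\,e^{\kappa\cos\theta}\sin^{p-1}\theta\,\intd\theta\bigr]\big/\bigl[\int_0^\pi e^{\kappa\cos\theta}\sin^{p-1}\theta\,\intd\theta\bigr]$, and then the Bessel integral representation combined with $\tfrac{\intd}{\intd\kappa}[\kappa^{-\nu}I_\nu(\kappa)]=\kappa^{-\nu}I_{\nu+1}(\kappa)$ — is the classical computation underlying the cited fact, so what you gain is a self-contained argument at the cost of length.

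One caution: carrying your calculation through to the end gives $\nu=(p-1)/2$ and hence $\alpha=I_{(p+1)/2}(\kappa)/I_{(p-1)/2}(\kappa)$, which is the mean resultant length of the vMF distribution on the $p$-dimensional sphere $\mathbb{S}^p\subset\mathbb{R}^{p+1}$ (i.e.\ $A_{p+1}(\kappa)$ in the $S^{d-1}\subset\mathbb{R}^d$ convention where $A_d=I_{d/2}/I_{d/2-1}$). The lemma as written states $I_{p/2}(\kappa)/I_{p/2-1}(\kappa)$, which corresponds instead to $\mathbb{S}^{p-1}\subset\mathbb{R}^p$. You anticipated exactly this by flagging the index-convention matching as the only delicate step; your derivation would in fact surface what appears to be an off-by-one in the paper's indexing, so you should state explicitly which convention you are adopting and reconcile the final Bessel indices with the paper's displayed formula rather than leaving it as an implicit ``routine'' matching.
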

Let $({\cal F}_n)_{n=1}^{\infty}$ be the sieves for ${\cal F}$. We first establish the following local test lemma.

\begin{lemma}
\label{lemma_exist_test}
Let $f_1 \in {\cal F}_n$ be a regression map that satisfies $\hat{d}(f_0, f_1) > 0$. For each $n=1,2,\ldots$, there exists a test functions $\phi_n: ( \mathbb{S}^p \times \mathbb{S}^p )^n \rightarrow [0,1]$ such that
\begin{align*}
    \mathbb{E}_0 \phi_n &\le \exp(-C n \hat{d}^2(f_0, f_1)), \\
    \sup_{\{ f \in {\cal F}_n: \hat{d}(f,f_1) \le \xi \hat{d}(f_0,f_1) \} } \mathbb{E}_f(1 - \phi_n) &\le 2 \exp\bigg( - C \frac{n \hat{d}^2(f_0, f_1) }{ \eta + \hat{d}^2(f_0, f_1) } \bigg) + \exp(-C n \hat{d}^2(f_0,f_1)),
\end{align*}
for some constant $C > 0$ and $\xi \in (0,1)$.
\end{lemma}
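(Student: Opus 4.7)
The plan is to construct an explicit test from the empirical inner product between responses and the direction $f_1-f_0$, and then bound both error types through Bernstein's inequality.

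By Lemma~\ref{lemma_cond_direction}, $\mathbb{E}_f[Y\mid X]=A_p(\kappa)f(X)$ under any $\mathbb{P}_f$. I would take as test statistic
\[
T_n := \frac{1}{n}\sum_{i=1}^{n} Y_i^{\top}\bigl(f_1(X_i)-f_0(X_i)\bigr), \qquad \phi_n := \mathbb{1}\{T_n > 0\}.
\]
Since $X$ is uniformly distributed on $\mathbb{S}^p$ under every $\mathbb{P}_f$, a direct calculation gives
\[
\mathbb{E}_0 T_n = -A_p(\kappa)\,\hat{d}^2(f_0,f_1), \qquad \mathbb{E}_f T_n = A_p(\kappa)\bigl[\hat{d}^2(f,f_0)-\hat{d}^2(f,f_1)\bigr].
\]
Because $\sqrt{2}\,\hat{d}$ coincides with the $L^2(\mu)$-distance between $f_0$ and $f_1$ viewed as $\mathbb{R}^{p+1}$-valued maps, $\hat{d}$ obeys the triangle inequality. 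For any $f$ with $\hat{d}(f,f_1)\le\xi\hat{d}(f_0,f_1)$ this gives $\hat{d}(f,f_0)\ge(1-\xi)\hat{d}(f_0,f_1)$, and hence $\mathbb{E}_f T_n\ge A_p(\kappa)(1-2\xi)\hat{d}^2(f_0,f_1)$, which is strictly positive once $\xi<1/2$. The threshold $0$ therefore separates $\mathbb{E}_0 T_n$ from $\inf_f\mathbb{E}_f T_n$ by a gap of order $\hat{d}^2(f_0,f_1)$.

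Next I would verify the concentration of $T_n$ around its mean. Each summand $U_i := Y_i^{\top}(f_1(X_i)-f_0(X_i))$ satisfies $|U_i|\le\|f_1(X_i)-f_0(X_i)\|\le 2$ pointwise, and, since the marginal of $X$ is uniform irrespective of $f$, its second moment is uniformly bounded by $\mathbb{E}\|f_1(X)-f_0(X)\|^{2}=2\hat{d}^{2}(f_0,f_1)$. A finer computation using the explicit form of $\mathrm{Cov}(Y\mid X)$ for the von Mises--Fisher distribution yields a variance bound of the form $\eta\,\hat d^2(f_0,f_1)$, where $\eta$ depends only on $\kappa$ through $A_p(\kappa)/\kappa$. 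Bernstein's inequality applied to $T_n-\mathbb{E}_0 T_n$ with deviation $A_p(\kappa)\hat{d}^2(f_0,f_1)$ delivers the Type I bound. Applied to $T_n-\mathbb{E}_f T_n$, and combined with the uniform lower bound on $\mathbb{E}_f T_n$ established above, it delivers the Type II bound: the subgaussian regime contributes the $\exp\bigl(-C\,n\hat{d}^2/(\eta+\hat{d}^2)\bigr)$ term and the bounded (Bennett) regime contributes the $\exp(-Cn\hat{d}^2)$ term.

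The main subtlety is passing from a pointwise bound (for the single $f_1$) to a bound that is uniform over the entire shrinking ball $\{f\in\mathcal{F}_n:\hat{d}(f,f_1)\le\xi\hat{d}(f_0,f_1)\}$. The key observation that rescues this is structural: the summand $U_i$ depends only on $f_0$ and $f_1$, not on $f$, so both the sup-norm bound and the variance bound on $U_i$ are $f$-independent. Only $\mathbb{E}_f T_n$ varies with $f$, and the triangle-inequality argument above furnishes an $f$-uniform lower bound on it across the ball. Consequently the Bernstein estimate transfers to the supremum without loss, yielding the Type II inequality as stated.
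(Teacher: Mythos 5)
Your proof is correct, and it takes a genuinely different and cleaner route than the paper. The paper centers its statistic conditionally---it defines $Z_i := (f_1(X_i)-f_0(X_i))^T Y_i - A_p(\kappa)(f_1(X_i)^Tf_0(X_i)-1)$, which has conditional mean zero given the $X_i$'s under $\mathbb{P}_0$, and then uses a \emph{data-dependent} threshold $t = n\bigl(\sqrt{\mathbb{P}_n(1-f_1^Tf_0)}\,\hat d(f_0,f_1) + \mathbb{P}_n(1-f_1^Tf_0)\bigr)$ that involves the empirical covariate measure. This forces a conditional Hoeffding bound for the Type I error and, for the Type II error, an auxiliary decomposition into events (${\cal E}_1, {\cal E}_2, {\cal G}_1$) that compare $\mathbb{P}_n(1-f_1^Tf_0)$ with its population analogue via Bernstein, which is precisely where the $\eta + \hat d^2$ denominator in the paper's bound comes from. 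You sidestep all of this by working unconditionally with $T_n = n^{-1}\sum_i Y_i^T(f_1(X_i)-f_0(X_i))$, a fixed zero threshold, and a single application of Bernstein. The observation that rescues uniformity---that the $X$-marginal is uniform under every $\mathbb{P}_f$, so both the a.s.\ bound $|U_i|\le 2$ and the second-moment bound $\mathbb{E}[U_i^2]\le 2\hat d^2(f_0,f_1)$ (via Cauchy--Schwarz and $\|Y\|=1$) are $f$-independent---is exactly right and is the heart of why the simpler argument goes through. You also correctly exploit that $\hat d = \|\cdot\|_{L^2(\mu)}/\sqrt{2}$ obeys the triangle inequality to get the $f$-uniform separation $\mathbb{E}_f T_n \ge A_p(\kappa)(1-2\xi)\hat d^2(f_0,f_1)$ for $\xi < 1/2$, which is compatible with the lemma's quantifier ``for some $\xi\in(0,1)$.'' Two small remarks: (i) the ``finer computation using $\mathrm{Cov}(Y\mid X)$'' you mention is unnecessary, since Cauchy--Schwarz already gives the required variance bound; and (ii) your closing sentence attributing the $\exp\bigl(-Cn\hat d^2/(\eta+\hat d^2)\bigr)$ term to a ``subgaussian regime'' and the $\exp(-Cn\hat d^2)$ term to a ``Bennett regime'' does not really describe your own estimate---a single Bernstein application with $\sigma^2\lesssim\hat d^2$, $b=O(1)$, and deviation $\propto\hat d^2$ collapses directly to $\exp(-Cn\hat d^2)$, which is a single (and in fact stronger) bound that trivially implies the two-term bound in the lemma. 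So your bound is cleaner than what the lemma asks for; the two-term form in the paper is an artifact of its data-dependent threshold, which you do not need.
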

We next extend the local test lemma to the following global test lemma.

\label{sec_gloabl_test}
\begin{lemma}
\label{lemma_global_test}
Let $(\epsilon_n)_{n=1}^{\infty}$ be a sequence with $\epsilon_n \rightarrow 0$ and $n \epsilon_n^2 \rightarrow \infty$. For any $M>0$, there exists a sequence of tests $(\phi_n)_{n=1}^{\infty}$ where $\phi_n: (\mathbb{S}^p \times \mathbb{S}^p)^n \rightarrow [0,1]$ such that for some constants and $C>0, \eta > 0$
\begin{align*}
    \mathbb{E}_0 \phi_n  &\le \sum_{j=M}^{\infty} N_{nj} \exp(-C n j^2 \epsilon_n^2 ),\\
 \sup_{ \{f \in {\cal F}_n: \hat{d}(f, f_0) > M \epsilon_n \} } \mathbb{E}_f(1 - \phi_n)
    & \le  \exp(-C M^2 n \epsilon_n^2) + 2 \exp\bigg( - C \frac{n M^2 \epsilon_n^2}{M^2 \epsilon_n^2 + \eta}   \bigg) ,
\end{align*}
where we recall that $N_{nj} = {\cal N}(\xi j \epsilon_n, {\cal S}_{nj}(\epsilon_n), \tilde{d})$ is the covering number of the set
$$ {\cal S}_{nj}(\epsilon_n) = \{f \in {\cal F}_n: j \epsilon_n < \tilde{d}(f, f_0) \le (j+1) \epsilon_n \} ,$$

\end{lemma}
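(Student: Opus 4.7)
The plan is to use the standard peeling-and-covering strategy: slice the alternative set into shells of $\tilde{d}$-distance from $f_0$ at scales $j\epsilon_n$, cover each shell by a minimal $\xi j\epsilon_n$-net, apply the local test of Lemma \ref{lemma_exist_test} to every net center, and take the supremum of all these tests. By Lemma \ref{lemma_equi_metric} the metrics $\tilde d$ and $\hat d$ are equivalent, so the covering numbers $N_{nj}$ (defined via $\tilde d$) and the local-test bounds (stated in $\hat d$) can be used interchangeably at the cost of adjusting the constant $C$.

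Concretely, for each integer $j\ge M$ let $\{f_{j,1},\ldots,f_{j,N_{nj}}\}\subset {\cal S}_{nj}(\epsilon_n)$ be a minimal $\xi j\epsilon_n$-net of ${\cal S}_{nj}(\epsilon_n)$ in the semi-metric $\tilde d$. For every center $f_{j,\ell}$, Lemma \ref{lemma_exist_test} yields a test $\phi_{n,j,\ell}$ with
\begin{align*}
\mathbb{E}_0\phi_{n,j,\ell} &\le \exp\!\bigl(-Cn\,\hat d^2(f_0,f_{j,\ell})\bigr),\\
\sup_{\{f\in{\cal F}_n:\hat d(f,f_{j,\ell})\le \xi \hat d(f_0,f_{j,\ell})\}}\mathbb{E}_f(1-\phi_{n,j,\ell}) &\le 2\exp\!\Bigl(-C\tfrac{n\hat d^2(f_0,f_{j,\ell})}{\eta+\hat d^2(f_0,f_{j,\ell})}\Bigr)+\exp\!\bigl(-Cn\hat d^2(f_0,f_{j,\ell})\bigr).
\end{align*}
Because $f_{j,\ell}\in {\cal S}_{nj}(\epsilon_n)$, one has $\tilde d(f_0,f_{j,\ell})\ge j\epsilon_n$, and by Lemma \ref{lemma_equi_metric} also $\hat d(f_0,f_{j,\ell})\gtrsim j\epsilon_n$. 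Define the global test $\phi_n:=\sup_{j\ge M}\sup_{1\le \ell\le N_{nj}}\phi_{n,j,\ell}$.

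The Type I bound follows from a union bound: $\mathbb{E}_0\phi_n\le\sum_{j\ge M}\sum_{\ell}\mathbb{E}_0\phi_{n,j,\ell}\le\sum_{j\ge M}N_{nj}\exp(-Cnj^2\epsilon_n^2)$ (absorbing the metric-equivalence factor into $C$). For Type II, fix any $f\in{\cal F}_n$ with $\hat d(f,f_0)>M\epsilon_n$; by the peeling, $f$ lies in some shell ${\cal S}_{nj_0}(\epsilon_n)$ with $j_0\ge M$, and the net supplies a center $f_{j_0,\ell_0}$ with $\tilde d(f,f_{j_0,\ell_0})\le \xi j_0\epsilon_n\le \xi\,\tilde d(f_0,f_{j_0,\ell_0})$ so, after passing to $\hat d$, the local bound applies. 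Using $\hat d(f_0,f_{j_0,\ell_0})\gtrsim j_0\epsilon_n\ge M\epsilon_n$ and the monotonicity of $t\mapsto t/(t+\eta)$, one arrives at
\[
\mathbb{E}_f(1-\phi_n)\le \mathbb{E}_f(1-\phi_{n,j_0,\ell_0})\le \exp(-CM^2 n\epsilon_n^2)+2\exp\!\Bigl(-C\tfrac{nM^2\epsilon_n^2}{M^2\epsilon_n^2+\eta}\Bigr),
\]
which is the claimed inequality (matching the form in the statement up to the constant $C$).

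The only nontrivial points are the metric-equivalence bookkeeping between $\tilde d$ (used to define the covers) and $\hat d$ (used in the local test), and ensuring that the chosen net center $f_{j_0,\ell_0}$ satisfies the hypothesis $\hat d(f,f_{j_0,\ell_0})\le\xi\hat d(f_0,f_{j_0,\ell_0})$ of Lemma \ref{lemma_exist_test}. Both reduce to choosing $\xi\in(0,1)$ small enough so that the constants in Lemma \ref{lemma_equi_metric} do not destroy the contraction; no further analytic obstacle arises, since the tail sum in $j$ and the uniformity in the shell are handled by the usual union bound.
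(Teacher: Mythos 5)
Your proof is correct and follows essentially the same peeling–covering–union-bound strategy as the paper's own argument: slice $\{\hat d(f,f_0)>M\epsilon_n\}$ into shells, cover each by a minimal net, build a local test at each net center via Lemma~\ref{lemma_exist_test}, and take the supremum. The paper actually slips between $\tilde d$ and $\hat d$ in its proof (covering in $\hat d$ while the lemma statement uses $\tilde d$); your explicit invocation of Lemma~\ref{lemma_equi_metric} with a sufficiently small $\xi$ makes that bookkeeping cleaner, but it is the same argument, and you may also note that the exponent $\exp(-CM^2n\epsilon_n)$ in the paper's displayed Type~II bound is evidently a typo for $\exp(-CM^2n\epsilon_n^2)$, which is what both your derivation and the paper's produce.
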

We also need the following lemma concerning the exponential lower bound of $\int_{{\cal F}}\prod_{i=1}^{n} \frac{p_f(x_i, y_i)}{p_0(x_i,y_i)} \Pi(\intd f) $.

\begin{lemma}
\label{lemma_exp_lower_bnd}
 Let $(\epsilon_n)_{n=1}^{\infty}$ be a sequence with $\epsilon_n \rightarrow 0, n \epsilon_n^2 \rightarrow \infty$, we have
 \begin{eqnarray*}
    \mathbb{P}_0 \bigg( \int_{{\cal F}} \prod_{i=1}^{n} \frac{p_f(x_i, y_i)}{p_0(x_i,y_i)} \Pi(\mathrm{d} f) \le \Pi(B(f_0; \epsilon_n)) \exp(-C n \epsilon_n^2) \bigg) \rightarrow 0
\end{eqnarray*}
for all sufficiently large constant $C > 0$. 
\end{lemma}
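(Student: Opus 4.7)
The plan is to follow the classical Ghosal--Ghosh--van der Vaart (2000) strategy for bounding marginal likelihoods from below. First I would restrict the integral over $\mathcal{F}$ to the smaller ball $B := B(f_0; \epsilon_n^2)$, giving
$$\int_{\cal F} \prod_{i=1}^n \frac{p_f(x_i,y_i)}{p_0(x_i,y_i)} \Pi(\mathrm{d} f) \ge \Pi(B) \int_{B} \prod_{i=1}^n \frac{p_f(x_i,y_i)}{p_0(x_i,y_i)} \frac{\Pi(\mathrm{d} f)}{\Pi(B)}.$$
Since $\log$ is concave, Jensen's inequality applied to the probability measure $\Pi(\cdot)/\Pi(B)$ gives
$$\log \int_{B} \prod_{i=1}^n \frac{p_f(x_i,y_i)}{p_0(x_i,y_i)} \frac{\Pi(\mathrm{d} f)}{\Pi(B)} \ge \sum_{i=1}^n W_i, \qquad W_i := \int_{B} \log \frac{p_f(x_i,y_i)}{p_0(x_i,y_i)} \frac{\Pi(\mathrm{d} f)}{\Pi(B)}.$$
It then suffices to prove that $\mathbb{P}_0(\sum_{i=1}^n W_i \le -Cn\epsilon_n^2) \to 0$ for $C$ large enough, which I will do via a mean/variance analysis and Chebyshev's inequality.

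For the mean, the vMF structure yields the explicit form $\log(p_f/p_0)(x,y) = \kappa\, y^{T}(f(x)-f_0(x))$, so Lemma \ref{lemma_cond_direction} gives $\mathbb{E}_0 \log(p_f/p_0) = -\kappa A_p(\kappa)\,\hat{d}^2(f_0,f)$. Combined with Lemma \ref{lemma_equi_metric}, this is bounded below by $-c\,\tilde d^2(f_0,f) \ge -c\,\epsilon_n^4$ uniformly for $f \in B$, so $\mathbb{E}_0 W_i \ge -c\epsilon_n^4$ and hence $\mathbb{E}_0 \sum_i W_i \ge -cn\epsilon_n^4$. For the variance, I would use Jensen (applied to $z \mapsto z^2$) to bring the square inside the integral over $B$:
$$\mathrm{Var}_0(W_i) \le \mathbb{E}_0 W_i^2 \le \int_B \mathbb{E}_0 \bigl[\log(p_f/p_0)(X_i,Y_i)\bigr]^2 \frac{\Pi(\mathrm{d} f)}{\Pi(B)}.$$
Because $|Y|=1$ and $|f(x)-f_0(x)|^2 = 2(1-f(x)^T f_0(x))$, the inner second moment is bounded by $\kappa^2 \int_{\mathbb{S}^p} |f(x)-f_0(x)|^2 \mathrm{d}\mu(x) \lesssim \tilde d^2(f_0,f) \le \epsilon_n^4$. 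Therefore $\mathrm{Var}_0(\sum_i W_i) \lesssim n\epsilon_n^4$.

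Combining these with Chebyshev's inequality applied to $\sum_i W_i$ with deviation $t = \tfrac{C}{2} n\epsilon_n^2$ gives
$$\mathbb{P}_0\Bigl(\sum_{i=1}^n W_i \le -Cn\epsilon_n^2\Bigr) \le \mathbb{P}_0\Bigl(\Bigl|\sum_i W_i - \mathbb{E}_0 \sum_i W_i\Bigr| \ge \tfrac{C}{2}n\epsilon_n^2 - cn\epsilon_n^4\Bigr) \lesssim \frac{n\epsilon_n^4}{(n\epsilon_n^2)^2} = \frac{1}{n\epsilon_n^2 \cdot n/n} = O\!\left(\tfrac{1}{n}\right),$$
where I absorbed the $-cn\epsilon_n^4$ term using $\epsilon_n \to 0$ and the assumption $n\epsilon_n^2 \to \infty$. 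This tends to zero, completing the proof.

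The main obstacle is the variance bound, since $W_i$ is itself a prior average and one must avoid losing a factor that would kill the Chebyshev step. Jensen's inequality for the square handles this cleanly, reducing the problem to a uniform control of $\mathrm{Var}_0(\log(p_f/p_0))$ over $f \in B$, which is in turn controlled by $\tilde d^2(f_0,f)$ via the explicit vMF form and the equivalence of $\hat d$ and $\tilde d$ from Lemma \ref{lemma_equi_metric}.
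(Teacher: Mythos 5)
Your proof is correct and follows essentially the same strategy as the paper: restrict the prior to a small ball around $f_0$, apply Jensen's inequality to lower-bound the log of the marginal likelihood by the prior-averaged log-likelihood ratio, bound its mean and second moment via the explicit vMF form together with Lemmas \ref{lemma_equi_metric} and \ref{lemma_cond_direction}, and conclude by concentration. The only differences are cosmetic: you use Chebyshev where the paper invokes Bernstein (giving polynomial rather than exponential decay, but still $\to 0$), you make the Jensen step explicit where the paper elides it, and your final arithmetic is slightly garbled ($n\epsilon_n^4/(n\epsilon_n^2)^2 = 1/n$ directly) though the conclusion is right.
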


We are now in a position to prove Theorem \ref{general_thm}.
\begin{proof}[Proof of Theorem \ref{general_thm}]
Define the log likelihood ratio function
$$ \Lambda_n(f|{\cal D}_n) := \sum_{i=1}^{n} ( \log p_f(x_i, y_i) - \log p_0(x_i, y_i) ),$$
and the event
$$ {\cal H}_n := \bigg\{ \int_{{\cal F}} \exp( \Lambda_n(f \mid {\cal D}_n) ) \Pi(\intd f) \ge \exp(- 2 D_1 n \underline{\epsilon}_n^2 ) \bigg\} .$$
We have
$$ {\cal H}_n^{c} \subset \bigg\{ \int_{{\cal F}} \exp( \Lambda_n(f \mid {\cal D}_n) ) \Pi(\intd f) < \Pi(B_n(f_0, \underline{\epsilon}_n)) \exp(- D_1 n \underline{\epsilon}_n^2) \bigg\} .$$
We have by Lemma \ref{lemma_exp_lower_bnd} that
$$ \mathbb{P}_0({\cal H}_n^{c}) \rightarrow 0.$$
We now apply Lemma \ref{lemma_global_test}
to obtain the global test function $\phi_n$. We have that
\begin{eqnarray*}
    \mathbb{E}_0 \big( \Pi( \hat{d}(f,f_0) > M \epsilon_n | {\cal D}_n) \big) &\le& \mathbb{E}_0 \bigg( (1- \phi_n) 1({\cal H}_n) \Pi( \hat{d}(f,f_0) > M \epsilon_n | {\cal D}_n) \bigg)\\
    && + \mathbb{E}_0 \phi_n + \mathbb{E}_0 \big( (1-\phi_n) 1({\cal H}_n^c) \big) \\
    & \le & \mathbb{E}_0 \bigg( (1- \phi_n) 1({\cal H}_n) \frac{ \int_{ \hat{d}(f,f_0) > M \epsilon_n } \exp(\Lambda_n(f|{\cal D}_n)) \Pi(\intd f) }{ \int_{{\cal F}} \exp(\Lambda_n(f|{\cal D}_n)) \Pi(\intd f) } \bigg) \\
    && + \mathbb{E}_0 \phi_n + \mathbb{P}_0({\cal H}_n^c) ,
\end{eqnarray*}
and
$$ \mathbb{E}_0 \phi_n \le \sum_{j=M}^{\infty} N_{nj} \exp(-D_1 n j^2 \epsilon_n^2) \rightarrow 0 ,$$
where the convergence above follows from Condition \eqref{cond_covering}. 
\\\\
On ${\cal H}_n$, by Lemma \ref{lemma_exp_lower_bnd} we can lower bound 
$$ \int_{{\cal F}} \exp(\Lambda_n(f|{\cal D}_n)) \Pi(\intd f)$$
to obtain
\begin{eqnarray*}
    && \mathbb{E}_0 \bigg( (1- \phi_n) 1({\cal H}_n) \frac{ \int_{ \hat{d}(f,f_0) > M \epsilon_n} \exp(\Lambda_n(f|{\cal D}_n)) \Pi(\intd f) }{ \int_{{\cal F}} \exp(\Lambda_n(f|{\cal D}_n)) \Pi(\intd f) } \bigg) \\
    &\le& \exp(2 D_1 n \underline{\epsilon}_n^2) \mathbb{E}_0 \bigg( (1- \phi_n) \int_{f \in {\cal F}_n: \hat{d}(f,f_0) > M \epsilon_n} \exp(\Lambda_n(f|{\cal D}_n)) \Pi(\intd f) \bigg) \\
    && + \exp(2 D_1 n \underline{\epsilon}_n^2) \mathbb{E}_0 \bigg( \int_{f \in {\cal F}_n^c} \exp(\Lambda_n(f|{\cal D}_n) \Pi(\intd f) \bigg).
\end{eqnarray*}
Using Fubini's theorem and Lemma \ref{lemma_global_test}, we have
\begin{eqnarray*}
   && \mathbb{E}_0 \bigg( (1- \phi_n) \int_{f \in {\cal F}_n: \hat{d}(f,f_0) > M \epsilon_n} \exp(\Lambda_n(f|{\cal D}_n)) \Pi(\intd f) \bigg) \\
   &= & \int_{f \in {\cal F}_n: \hat{d}(f,f_0) > M \epsilon_n} \mathbb{E}_0 \big( (1-\phi_n) \exp(\Lambda_n(f|{\cal D}_n) \big) \Pi(\intd f) \\
   & \le & \sup_{ f \in {\cal F}_n: \hat{d}(f, f_0) > M \epsilon_n} \mathbb{E}_f (1 - \phi_n) \\
   & \le & \exp(-C M^2 n \epsilon_n^2) + 2 \exp\bigg( - C \frac{n M^2 \epsilon_n^2}{M^2 \epsilon_n^2 + \eta}   \bigg) \\
   & \le & \exp(- \tilde{C} M^2 n \epsilon_n^2),
\end{eqnarray*}
for some constant $\tilde{C}>0$ and for all sufficiently large $n$.
\\\\
We apply Fubini's theorem once more to obtain that
\begin{eqnarray*}
    \mathbb{E}_0 \bigg( \int_{f \in {\cal F}_n^c} \exp(\Lambda_n(f|{\cal D}_n) \Pi(\intd f) \bigg) &=& \int_{f \in {\cal F}_n^c} \mathbb{E}_0 \exp(\Lambda_n(f|{\cal D}_n)) \Pi(\intd f) \\
    & = & \Pi({\cal F}_n^c) \lesssim \exp(-3 D_1 n \underline{\epsilon}_n^2)
\end{eqnarray*}
by Condition \eqref{cond_remain_mass}. Putting it together we have
\begin{eqnarray*}
   &&  \mathbb{E}_0 \bigg( (1- \phi_n) 1({\cal H}_n) \frac{ \int_{ \hat{d}(f,f_0) > M \epsilon_n} \exp(\Lambda_n(f|{\cal D}_n)) \Pi(\intd f) }{ \int_{{\cal F}} \exp(\Lambda_n(f|{\cal D}_n)) \Pi(\intd f) } \bigg) \\
    &\le& \exp(2D_1 n \underline{\epsilon}_n^2 - \tilde{C} M^2 n \epsilon_n^2) + \exp(2D_1 n \underline{\epsilon}_n^2 - 3D_1 n \underline{\epsilon}_n^2) \rightarrow 0
\end{eqnarray*}
as $n \rightarrow \infty$, for large enough $M > 0$. Therefore, we have
$$ \mathbb{E}_0 \big( \Pi( \hat{d}(f,f_0) > M \epsilon_n | {\cal D}_n) \big) \rightarrow 0,$$
as $n \rightarrow \infty$.

\end{proof}

\subsection{Proof of Contraction Rate Result for OT-SOS-Rotation Model}

A key ingredient in establishing the contraction rates is the following stability result for optimal transport maps with respect to perturbations of the target measure.

\begin{lemma}
\label{lemma_stability}
Let $\mu$ be an absolutely continuous measure. Given two probability measures $\nu_1,\nu_2$ on $\mathbb{S}^p$, and let $S_{\nu_1}$ and $S_{\nu_2}$ denote the optimal transport maps from $\mu$ to $\nu_1$ and $\nu_2$, respectively, under the squared geodesic distance cost. Then
\begin{eqnarray}
\label{eqn_quantitative_stability}
   \left( \int_{\mathbb{S}^{p}} d^2(S_{\nu_1}(x), S_{\nu_2}(x)) \, \mathrm{d}\mu(x) \right)^{1/2} \lesssim W_1(\nu_1,\nu_2)^{\tau} 
\end{eqnarray}
for $\tau=1/9$.
\end{lemma}

\begin{remark}
    After we uploaded our preprint to arXiv, \cite{Kitigawa2025} posted a preprint establishing a quantitative stability result in the more general setting of Riemannian manifolds, obtaining an exponent of $\tau = \frac{1}{6}$ for inequality \eqref{eqn_quantitative_stability}. This improved exponent in \cite{Kitigawa2025} only slightly relaxes the atom separation condition in our Assumption \ref{prior_measure_assumption}, which is inconsequential in practice, and importantly does not affect our contraction rate. Our proof of Lemma \ref{lemma_stability}, which is specific to the sphere, may also be of independent interest.
\end{remark}

Lemma \ref{lemma_stability} allows us to control the distance between two optimal transport maps through the 1-Wasserstein distance between the corresponding target measures. The integrated distance between two optimal transport maps is relative to the source measure $\mu$. By Assumption \ref{source_measure_assumption}, we can have the following Corollary where the integrated distance between two optimal transport maps is relative to the uniform measure $\rho$. 

\begin{corollary}
\label{cor_stability}
Let $\mu$ be an absolutely continuous measure satisfies Assumption \ref{source_measure_assumption}. Given two probability measures $\nu_1,\nu_2$ on $\mathbb{S}^p$, and let $S_{\nu_1}$ and $S_{\nu_2}$ denote the optimal transport maps from $\mu$ to $\nu_1$ and $\nu_2$, respectively, under the squared geodesic distance cost. Then
\[ \tilde{d}(S_{\nu_1}, S_{\nu_2}) = \left( \int_{\mathbb{S}^{p}} d^2(S_{\nu_1}(x), S_{\nu_2}(x)) \, \mathrm{d}\rho(x) \right)^{1/2} \lesssim W_1(\nu_1,\nu_2)^{\tau} \]
for $\tau=1/9$.
\end{corollary}
The following result provides a lower bound on the small ball probability for a Dirichlet distribution.

\begin{lemma}
\label{lemma_dirichlet}
Suppose $(p_1, \ldots, p_k) \sim \mathrm{Dir}(\alpha)$ with $\alpha < 1$. Let $(p_{0,1}, \ldots, p_{0,k})$ be any point on the $k$-simplex such that $p_{0,j} > 0$ for all $j = 1, \ldots, k$. Then, for any $0 < \epsilon < 1$, 
\[
\mathbb{P}\Big( |p_j - p_{0,j}| < \epsilon, \; \forall j = 1, \ldots, k \Big) 
\ge 
\frac{\Gamma(\alpha k)}{\Gamma(\alpha)^k} \left( \frac{\epsilon}{2} \right)^{2k}.
\]
\end{lemma}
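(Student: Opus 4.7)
The plan is to combine a pointwise lower bound on the Dirichlet density with a volume lower bound on the event of interest. I will parametrise the Dirichlet on the open $(k-1)$-simplex $\Delta$ by coordinates $(p_1, \ldots, p_{k-1})$ with $p_k := 1 - \sum_{j<k} p_j$; the density is
\[
f(p) = \frac{\Gamma(k \alpha_p)}{\Gamma(\alpha_p)^k} \prod_{j=1}^k p_j^{\alpha_p - 1}.
\]
Since $\alpha_p < 1$ (so $\alpha_p - 1 < 0$) and $p_j \in (0,1]$ on $\Delta$, every factor satisfies $p_j^{\alpha_p-1} \ge 1$, giving the uniform pointwise bound $f(p) \ge \Gamma(k \alpha_p)/\Gamma(\alpha_p)^k$ on all of $\Delta$. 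This is the key place where the assumption $\alpha_p < 1$ is used.

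Next, denote by $A$ the event $\{|p_j - p_{0,j}| < \epsilon/k,\ \forall j=1,\ldots,k\}$, and lower-bound its $(k-1)$-dimensional Lebesgue volume by exhibiting an inscribed box. I will use
\[
B := \bigl\{p \in \Delta : |p_j - p_{0,j}| < \epsilon/(2k^2),\ \forall j=1,\ldots,k-1\bigr\}.
\]
If $p \in B$ then $|p_k - p_{0,k}| \le \sum_{j<k}|p_j - p_{0,j}| < (k-1)\epsilon/(2k^2) < \epsilon/k$, so $B \subseteq A$. Since each $p_{0,j} > 0$, the interval $(p_{0,j}-\epsilon/(2k^2),\ p_{0,j}+\epsilon/(2k^2)) \cap (0,1)$ has length at least $\epsilon/(2k^2)$ (checking the two cases $p_{0,j} \ge \epsilon/(2k^2)$ and $0 < p_{0,j} < \epsilon/(2k^2)$ separately), so $\mathrm{Vol}_{k-1}(B) \ge (\epsilon/(2k^2))^{k-1}$.

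Combining these two bounds gives
\[
\mathbb{P}(A) \ \ge \ \frac{\Gamma(k\alpha_p)}{\Gamma(\alpha_p)^k}\,\bigl(\epsilon/(2k^2)\bigr)^{k-1},
\]
and a direct algebraic check using $\epsilon < 1$ (the ratio of the two right-hand sides reduces to a positive power of $1/\epsilon$ times a constant larger than $1$) verifies that $(\epsilon/(2k^2))^{k-1} \ge (\epsilon/(2k))^{2k}$, yielding the stated bound. The only minor obstacle is ensuring that the inscribed box remains inside the positive orthant when some $p_{0,j}$ is close to zero; this is handled by the two-case argument above, which shows that the effective side length is still at least $\epsilon/(2k^2)$ uniformly in the configuration of $p_0$.
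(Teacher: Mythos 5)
Your proof takes a genuinely different route from the paper's: the paper invokes Lemma~24 of Liu et al.\ to bound the probability of the $L^1$-ball $\{\sum_j |p_j - p_{0,j}| < \epsilon/k\}$ and then uses the trivial containment of that ball inside the $L^\infty$-ball of interest, whereas you aim for a direct estimate by combining a pointwise density lower bound with a volume lower bound. The density step is sound: for $\alpha_p < 1$ every factor $p_j^{\alpha_p-1} \ge 1$ on the open simplex, so $f \ge \Gamma(k\alpha_p)/\Gamma(\alpha_p)^k$.

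The gap is in the volume step. You assert $\mathrm{Vol}_{k-1}(B) \ge (\epsilon/(2k^2))^{k-1}$ on the strength of a per-coordinate interval-length argument, but $B$ is not a product of intervals: it is that product intersected with the simplex constraint $\sum_{j<k} p_j < 1$, and near a vertex of the simplex that intersection is substantially smaller than the box. Concretely, take $p_{0,1} = 1 - (k-1)\delta$ and $p_{0,j} = \delta$ for $j \ge 2$, and let $\delta \to 0^+$. Then with $\eta := \epsilon/(2k^2)$ one has $I_1 \to (1-\eta, 1)$ and $I_j \to (0,\eta)$ for $2 \le j \le k-1$, and writing $u = 1 - p_1 \in (0,\eta)$ the simplex constraint becomes $\sum_{j \ge 2} p_j < u$, so
\[
\mathrm{Vol}_{k-1}(B) \;\longrightarrow\; \int_0^\eta \frac{u^{k-2}}{(k-2)!}\,\intd u \;=\; \frac{\eta^{k-1}}{(k-1)!},
\]
which is a factor $(k-1)!$ below your claimed bound. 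The two-case argument for each coordinate guarantees an interval of length $\eta$ inside $(0,1)$, but it cannot guarantee that those one-sided intervals can be chosen jointly so that their product lies inside the simplex; the constraint $\sum_{j \in S}(\eta - p_{0,j}) < p_{0,k}$ that this would require can fail when both $p_{0,k}$ and some of the $p_{0,j}$ with $j<k$ are small.

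Moreover this is not a repairable constant: with the correct worst-case volume $\eta^{k-1}/(k-1)!$, your final algebraic comparison against $(\epsilon/(2k))^{2k}$ reduces to $2^{k+1}k^2/(k-1)! \ge \epsilon^{k+1}$, which fails for $\epsilon$ close to $1$ and $k$ moderate (already at $k=10$, $\epsilon=0.99$ the left side is $\approx 0.56$). So the argument does not close as written. The paper's route sidesteps exactly this boundary effect because the cited $L^1$-ball bound already accounts for it; if you want a self-contained proof you would need either a substantially smaller inscribed box (or a non-box region adapted to the simplex constraint) together with a redone final comparison, or to reproduce the $L^1$-ball estimate directly.
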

\begin{proof}
    By Lemma 24 of \cite{Liu2023}, we have
$$ \mathbb{P} \bigg( \sum_{j=1}^{k} |p_j - p_{0,j}| < \epsilon  \bigg) \ge \frac{\Gamma(\alpha k)}{(\Gamma(\alpha))^k} \bigg( \frac{\epsilon}{2 } \bigg)^{2k} . $$
Since 
$$ \mathbb{P}\bigg( |p_j - p_{0,j}| < \epsilon, \forall j=1,\ldots,k \bigg) \ge  \mathbb{P} \bigg( \sum_{j=1}^{k} |p_j - p_{0,j}| < \epsilon  \bigg) ,$$
the result follows.
\end{proof}

\medskip

\begin{proof}[Proof of Theorem \ref{thm_contraction_rates_opt}]
The proof proceeds in several steps. We first construct suitable sieves and subsequently verify the three conditions of Theorem \ref{general_thm}.
\\\\
\textbf{Construction of the sieves}
\\
Let $(\eta_n)$ be a sequence satisfying $\eta_n \asymp \epsilon_n^{\alpha}$ for some $\alpha\in(0,1/9)$ and recall $K_{\max}$ is the maximal number of atoms. Define 
\[
{\cal V}_n :=
\Big\{
\nu=\sum_{j=1}^k p_j\delta_{z_j} :
p_j>0,\,
\sum_{j=1}^k p_j=1,\,
z_j\in\mathbb S^p,\,
\min_{j\neq l} d(z_j,z_l)\ge \eta_n,\,
k\le K_{\max}
\Big\}.
\]
Each $\nu\in{\cal V}_n$, together with rotations $R_1,\dots,R_k\in\mathbb{SO}(p+1)$, determines a regression map
$f:\mathbb S^p\to\mathbb S^p$. We define the corresponding class of mappings as
\[
{\cal F}_n :=
\Big\{
f:\mathbb S^p\to\mathbb S^p :
\nu\in{\cal V}_n,\,
R_1,\dots,R_k\in\mathbb{SO}(p+1)
\Big\}.
\]
For $k=1,\dots,K_{\max}$ define
\[
{\cal V}_{n,k} :=
\Big\{
\nu=\sum_{j=1}^k p_j\delta_{z_j}\in{\cal V}_n
\Big\},
\qquad
{\cal F}_{n,k} :=
\Big\{
f:\nu\in{\cal V}_{n,k},\,R_1,\dots,R_k\in\mathbb{SO}(p+1)
\Big\}.
\]
Then
\[
{\cal F}_n=\bigcup_{k=1}^{K_{\max}}{\cal F}_{n,k}.
\]
By construction, the remaining mass condition (Condition \ref{cond_remain_mass} of Theorem \ref{general_thm}) is trivially satisfied, since the prior assigns zero mass outside the sieve, that is, $\Pi(\mathcal{F}_n^c) = 0$. It therefore suffices to verify Conditions \ref{cond_covering} and \ref{cond_prior_mass}.
\\\\
\textbf{Verification of Condition \ref{cond_covering}}
\\
We first use the following straightforward upper bound on the covering number $N_{nj}$:

\begin{eqnarray*}
    N_{nj} = {\cal N}(\xi j \epsilon_n, {\cal S}_{nj}(\epsilon_n), \tilde{d} ) &\le& {\cal N}(\xi \epsilon_n, {\cal F}_n, \tilde{d}) .
\end{eqnarray*}
where $\xi \in (0,1)$ is some fixed constant.
\\\\
Consider $f_1,f_2\in{\cal F}_{n,k}$ with corresponding measures
\[
\nu_1=\sum_{j=1}^k p_{1,j}\delta_{z_{1,j}},
\qquad
\nu_2=\sum_{j=1}^k p_{2,j}\delta_{z_{2,j}} .
\]
It is straightforward to show that if two discrete measures have sufficiently small differences between their corresponding atom locations and associated probabilities,
\[
d(z_{1,j}, z_{2,j}) \le \delta_n/k,
\qquad
|p_{1,j} - p_{2,j}| \le \delta_n/k,
\]
for $j = 1, \dots, k$, then
\[
W_1(\nu_1, \nu_2) \lesssim \delta_n .
\]
Let $S_{\nu_1}$ and $S_{\nu_2}$ denote the optimal transport maps from $\mu$ to $\nu_1$ and $\nu_2$. By Corollary \ref{cor_stability}, one has
\[
\tilde d(S_{\nu_1},S_{\nu_2})
\lesssim
W_1(\nu_1,\nu_2)^{\tau}
\lesssim
\delta_n^{\tau},
\]
where $\tau = 1/9$.
\\\\
Define the mismatch region
\[
S_{\nu_1,\nu_2}
=
\{x\in\mathbb S^p:
S_{\nu_1}(x)=z_{1,j_1},
\,
S_{\nu_2}(x)=z_{2,j_2},
\,
j_1\neq j_2
\}.
\]
Because atoms of $\nu_1$ (and resp. $\nu_2$) are separated by at least $\eta_n$, we have the following inclusion of the mismatch region:
\[
S_{\nu_1,\nu_2}
\subset
\{x:d(S_{\nu_1}(x),S_{\nu_2}(x))>\eta_n/2\}.
\]

Using Markov's inequality and Assumption \ref{source_measure_assumption},
\[
\rho( S_{\nu_1,\nu_2}) \lesssim \mu(S_{\nu_1,\nu_2})
\lesssim
\frac{\tilde d(S_{\nu_1},S_{\nu_2})^2}{\eta_n^2}
\lesssim
\delta_n^{2(\tau-\alpha)}.
\]

Splitting the integral,
\[
\tilde d^2(f_1,f_2)
\lesssim
\rho(S_{\nu_1,\nu_2})
+
\int_{S_{\nu_1,\nu_2}^c}
(1-f_1(x)^T f_2(x))\,\intd\rho(x).
\]
On the set $S_{\nu_1,\nu_2}^c$, the two maps correspond to the same atom index $j$, hence
\[
f_1(x)=R_{1,j}x,
\qquad
f_2(x)=R_{2,j}x .
\]
Using the following inequality
\[
1-(R_{1,j}x)^T(R_{2,j}x)
=
\frac12\|R_{1,j}x-R_{2,j}x\|^2
\le
\frac12\|R_{1,j}-R_{2,j}\|_{op}^2,
\]
we obtain
\[
\tilde d^2(f_1,f_2)
\lesssim
\delta_n^{2(\tau-\alpha)}
+
\max_{j\le k}
\|R_{1,j}-R_{2,j}\|_{op}^2 .
\]
Thus if
\[
d(z_{1,j},z_{2,j})\lesssim \delta_n,
\quad
|p_{1,j}-p_{2,j}|\lesssim \delta_n,
\quad
\|R_{1,j}-R_{2,j}\|_{op}\lesssim \delta_n^{(\tau - \alpha)}
\]
for $j=1,\dots,k$, with
\[
\delta_n=\epsilon_n^{1/(\tau-\alpha)},
\]
then
\[
\tilde d(f_1,f_2)\lesssim \epsilon_n .
\]
\\\\
From the above analysis, we obtain the following upper bound on the covering number of ${\cal F}_{n,k}$:
\[
{\cal N}(\epsilon_n,{\cal F}_{n,k},\tilde d)
\lesssim
{\cal N}(\delta,\mathbb S^p,d)^k
{\cal N}(\delta,[0,1]^k,|\cdot|)
{\cal N}(\delta^{(\tau - \alpha)},\mathbb{SO}(p+1),\|\cdot\|_{op})^k .
\]
\\
Using standard covering bounds
\[
{\cal N}(\delta,\mathbb S^p,d)\lesssim \delta^{-p},
\qquad
{\cal N}(\delta,[0,1]^k,|\cdot|)\lesssim \delta^{-k},
\]
and Theorem 7 of \cite{Szarek1998}
\[
{\cal N}(\delta^{(\tau-\alpha)},\mathbb{SO}(p+1),\|\cdot\|_{op})
\le
\left(\frac{C}{\delta^{(\tau-\alpha)}}\right)^{\dim(\mathbb{SO}(p+1))},
\]
we obtain
\[
\log {\cal N}(\epsilon_n,{\cal F}_{n,k},\tilde d)
\lesssim
n\epsilon_n^2
\]
for $\epsilon_n\asymp n^{-1/2}\log n$.
Since
\[ {\cal N}(\epsilon_n,{\cal F}_{n},\tilde d) = \sum_{k=1}^{K_\text{max}} {\cal N}(\epsilon_n,{\cal F}_{n,k},\tilde d)  \]
it follows that
\[
\log {\cal N}(\epsilon_n,{\cal F}_n,\tilde d)
\lesssim
n\epsilon_n^2 .
\]
\\
Thus, we have
\begin{eqnarray*}
    \sum_{j=M}^{\infty} N_{nj} \exp(-D_1 n j^2 \epsilon_n^2) &\lesssim& \exp(D_1 n \epsilon_n^2) \sum_{j=M}^{\infty} \int_{j-1}^{j} \exp(-D_1 n \epsilon_n^2 x^2 ) \intd x  \\
    & \le & \exp(D_1 n \epsilon_n^2) \int_{M-1}^{\infty} \exp(-D_1 n \epsilon_n^2 x^2 ) \intd x \\
    & \le & \exp(D_1 n \epsilon_n^2) \exp\Big(-\frac{1}{2} D_1 (M-1)^2 n \epsilon_n^2 \Big) \rightarrow 0
\end{eqnarray*} 
as $n \rightarrow \infty$ for sufficiently large $M > 0$.
\\\\
\textbf{Verification of Condition \ref{cond_prior_mass}}
\\
We now verify the prior mass condition. Recall that the true regression map $f_0$ is determined by the corresponding discrete measure
\[
\nu_0=\sum_{j=1}^{k_0} p_{0,j}\delta_{z_{0,j}}\in {\cal V}_{n,k_0},
\]
with atoms $z_{0,1},\ldots,z_{0,k_0}\in\mathbb S^p$ and associated rotations
$R_{0,1},\ldots,R_{0,k_0}\in\mathbb{SO}(p+1)$. Our goal is to obtain a lower bound on the prior probability of the neighborhood
\[
B(f_0,\epsilon)=\{f:\tilde d(f,f_0)<\epsilon\}.
\]
\\
From the verification of entropy condition above, there exists a constant
$c>0$ such that if
\[
|p_j-p_{0,j}|<c\epsilon_n^{1/(\tau-\alpha)},\quad
d(z_j,z_{0,j})<c\epsilon_n^{1/(\tau-\alpha)},\quad
\|R_j-R_{0,j}\|_{op}<c\epsilon_n,
\]
for $j=1,\ldots,k_0$, then the corresponding regression map $f$ satisfies
$\tilde d(f,f_0)\le \epsilon_n$. Hence, let
\[
\delta_n = c\epsilon_n^{1/(\tau-\alpha)} ,
\]
 we have
\begin{align*}
\Pi(B(f_0,\epsilon_n))
&\ge
\Pi\!\left(
|p_j-p_{0,j}|<\delta_n,\, j=1,\ldots,k_0 \mid K=k_0
\right) \\
&\quad\times
\Pi\!\left(
d(z_j,z_{0,j})<\delta_n,\, j=1,\ldots,k_0 \mid K=k_0
\right) \\
&\quad\times
\Pi\!\left(
\max_j\|R_j-R_{0,j}\|_{op}<c\epsilon_n \mid K=k_0
\right)
\,\pi_K(k_0).
\end{align*}
We lower bound each of the four probabilities on the RHS of the above display.
\\\\
\textbf{Atom weights.}
By Assumption \ref{prior_measure_assumption} and Lemma \ref{lemma_dirichlet}, we have
\[
\Pi(|p_j-p_{0,j}|<\delta_n,\ j=1,\ldots,k_0\mid K=k_0)
\gtrsim \delta_n^{2k_0}  \ge \exp(-D_1 n\epsilon_n^2),
\]
for some constant $D_1>0$.
\\\\
\textbf{Atom locations.}
The atoms of $\nu_0$ are separated by at least $\eta_n$.
Since $\delta_n=o(\eta_n)$, the conditions
\[
d(z_j,z_{0,j})<\delta_n,\quad j=1,\ldots,k_0
\]
automatically imply
\[
d(z_j,z_{j'})>\eta_n \quad \text{for } j\neq j',
\]
so that the separation constraint defining ${\cal V}_{n,k_0}$ remains
satisfied. By Assumption \ref{prior_measure_assumption}, we have
\[
\Pi(d(z_j,z_{0,j})<\delta_n,\ j=1,\ldots,k_0\mid K=k_0)
\gtrsim \delta_n^{pk_0}
\ge \exp(-D_2 n\epsilon_n^2),
\]
for some $D_2 > 0$.
\\\\
\textbf{Rotations.}
Since $\mathbb{SO}(p+1)$ is a compact manifold of dimension
$d_R=\dim(\mathbb{SO}(p+1))$, the volume of a ball of radius $\epsilon_n$
under $\|\cdot\|_{op}$ satisfies
\[
\Pi(\|R_j-R_{0,j}\|_{op}<c\epsilon_n)
\gtrsim \epsilon_n^{d_R}.
\]
Hence
\[
\Pi(\max_j\|R_j-R_{0,j}\|_{op}<c\epsilon_n\mid K=k_0)
\gtrsim \epsilon_n^{d_R k_0}
\ge \exp(-D_3 n\epsilon_n^2).
\]
for some $D_3 > 0$.
\\\\
\textbf{Number of atoms.}
Finally, $\pi_K(k_0)>0$ by Assumption \ref{prior_k_assumption}.
\\\\
combining the above bounds yields
\[
\Pi(B(f_0,\epsilon_n))
\ge
\exp(-D n\epsilon_n^2)
\]
for some constant $D>0$, establishing the prior mass condition.
\end{proof}

\section{Proof of Auxiliary Results}
\label{sec_proof_auxiliary}
\subsection{Proof of Lemma \ref{lemma_equi_metric}}
\begin{proof}
Using the following inequalities
$$ 1 - \frac{x^2}{2} \le \cos x \le 1 - \frac{x^2}{2} + \frac{x^4}{24} ,$$
we obtain
$$ f_1(x)^{T} f_0(x) \le 1 - \frac{(\cos^{-1}(f_1(x)^{T} f_0(x)))^2}{2} + \frac{(\cos^{-1}(f_1(x)^{T} f_0(x)))^4}{24} ,$$
and
$$ f_1(x)^{T} f_0(x) \ge 1 - \frac{(\cos^{-1}(f_1(x)^{T} f_0(x)))^2}{2} . $$
Since
$$ (\cos^{-1}(f_1(x)^{T} f_0(x)))^4 \le \pi^2 (\cos^{-1}(f_1(x)^{T} f_0(x)))^2,$$
it follows that
$$  1 - f_1(x)^{T} f_0(x) \ge \frac{12-\pi^2}{24} \big( \cos^{-1}(f_1(x)^{T} f_0(x)) \big)^2. $$
Thus, we have
$$ \tilde{d}^2(f_0, f_1) = \int_{\mathbb{S}^p}\big( \cos^{-1}( f_0(x)^T f_1(x) ) \big)^2 \intd\rho(x) \ge \int_{\mathbb{S}^p}  2 (1 - f_0(x)^T f_1(x) )  \intd\rho(x) ,$$
and
$$ \tilde{d}^2(f_0, f_1) = \int_{\mathbb{S}^p}\big( \cos^{-1}( f_0(x)^T f_1(x) ) \big)^2 \intd\rho(x) \le \int_{\mathbb{S}^p} \frac{24}{12-\pi^2} (1 - f_0(x)^T f_1(x) )  \intd\rho(x) .$$
\end{proof}

\subsection{Proof of Lemma \ref{lemma_cond_direction}}
\begin{proof}
For a vMF random variable with mean direction $\mu \in \mathbb{S}^{p}$ and concentration $\kappa$, the expectation is $A_p(\kappa) \mu $. Therefore, the result follows.
\end{proof}

\subsection{Proof of Lemma \ref{lemma_exist_test}}

\begin{proof}
For an arbitrary regression function $f_1: \mathbb{S}^p \rightarrow \mathbb{S}^p$, we define
$$ Z_i := (f_1(X_i) - f_0(X_i))^{T} Y_i - A_p(\kappa)(f_1(X_i)^{T} f_0(X_i) - 1), \quad i=1,\ldots,n.$$
We first consider the case of fixed covariates $x_1, \ldots, x_n$. We note that the random variables $Z_i, i=1,\ldots,n$ are bounded and by Lemma \ref{lemma_cond_direction} have zero mean 
$$ \mathbb{E}_0(Z_i|X_i=x_i) = 0.$$
Define the test statistic $T_n$ as
$$ T_n := \sum_{i=1}^{n} Z_i - \frac{1}{16} A_p(\kappa) t ,$$
where 
$$ t = n \bigg( \sqrt{\mathbb{P}_n(1 - f_1^T f_0)} \hat{d}(f_0, f_1) + \mathbb{P}_n(1 - f_1^T f_0) \bigg),$$
and the test function as
$$ \phi_n = 1\{T_n > 0\}.$$
We note that $t$ satisfies 
$$ t^2 \ge n^2\mathbb{P}_n(1 - f_1^T f_0) \hat{d}^2(f_0, f_1) . $$
We first derive the Type I and Type II errors of the test.
\\\\
\textbf{Type I Error}
\\
By Cauchy-Schwarz inequality, we have
$$ |(f_1(x) - f_0(x))^{T}Y|^2 \le (f_1(x) - f_0(x))^T (f_1(x) - f_0(x)) = 2(1 - f_1(x)^T f_0(x) ) .$$
Using Hoeffding's inequality we have
\begin{eqnarray*} \mathbb{E}_0(\phi_n|x_1,\ldots,x_n) &=& \mathbb{P}_0(T_n > 0|x_1,\ldots,x_n)  \\
 &=& \mathbb{P}_0 \bigg( \sum_{i=1}^{n} Z_i > \frac{1}{16} A_p(\kappa) t \bigg| x_1,\ldots,x_n \bigg) \\
& \le & \exp \bigg(- C_1\frac{ n^2 \hat{d}^{2}(f_0,f_1) \mathbb{P}_n(1 - f_1^T f_0) }{2n \mathbb{P}_n(1 - f_1^T f_0)} \bigg) \\
& = & \exp\bigg(- C_1\frac{n \hat{d}^{2}(f_0,f_1)}{2}\bigg)
\end{eqnarray*}
It follows that the unconditional expectation can also be bounded as:
$$ \mathbb{E}_0(\phi_n) \le \exp\bigg(- C_1\frac{n \hat{d}^{2}(f_0,f_1)}{2}\bigg).$$
\\\\
\textbf{Type II Error}
\\
We consider an arbitrary $f: \mathbb{S}^{p} \rightarrow \mathbb{S}^{p}$ satisfying
$$ \hat{d}^2(f,f_1) \le \hat{d}^2(f_0,f_1) / 32 .$$ 
Let $\mathbb{P}_n$ be the empirical measure of the covariates $X_i$, we re-write the test statistic as
\begin{eqnarray*}
    T_n &:=& \sum_{i=1}^{n} (f_1(X_i) - f_0(X_i))^{T} Y_i - A_p(\kappa) \sum_{i=1}^{n} (f_1(X_i) -f_0(X_i))^{T} f(X_i) \\
    && + A_p(\kappa) \sum_{i=1}^{n} (f_1(X_i) -f_0(X_i))^{T} f(X_i) - A_p(\kappa) \sum_{i=1}^{n} (f_1(X_i) -f_0(X_i))^{T} f_0(X_i) \\
    && - \frac{1}{16} A_p(\kappa) t \\
    & = & \sum_{i=1}^{n} (f_1(X_i) - f_0(X_i))^{T} Y_i - A_p(\kappa) \sum_{i=1}^{n} (f_1(X_i) -f_0(X_i))^{T} f(X_i) \\
    && + A_p(\kappa) n(\mathbb{P}_n(f_1 - f_0)^T f) - A_p(\kappa) n(\mathbb{P}_n(f_1 - f_0)^T f_0) - \frac{1}{16} A_p(\kappa) t .
\end{eqnarray*}
We consider the following event
\begin{eqnarray*}
{\cal E}_1 &:=& \bigg\{ (\mathbb{P}_n(f_1 - f_0)^T f) - (\mathbb{P}_n(f_1 - f_0)^T f_0) - \frac{1}{16} \hat{d}^2(f_0, f_1) \le \frac{1}{16} \hat{d}^2(f_0, f_1)  \bigg\} \\
&=& \bigg\{ \mathbb{P}_n(1 - f_0^{T} f) - \mathbb{P}_n(1 - f_1^{T} f) + \mathbb{P}_n(1 - f_1^{T} f_0) \le \frac{1}{8} \hat{d}^2(f_0, f_1) \bigg\} .   
\end{eqnarray*}
It is easy to see that the event ${\cal E}_1$ is contained in the event ${\cal E}_2$ defined as
$$ {\cal E}_2 := \bigg\{ - \mathbb{P}_n(1 - f_1^{T} f) + \mathbb{P}_n(1 - f_1^{T} f_0) \le \frac{1}{8} \hat{d}^2(f_0, f_1) \bigg\} .$$
We also define the event
$$ {\cal G}_1 := \bigg\{ \mathbb{P}_n(1 - f_1^T f) - \frac{1}{16} \mathbb{P}_n(1 - f_0^T f_1) \le 0 \bigg\} .$$
and upper bound $\mathbb{P}({\cal G}_1^{c})$ where ${\cal G}_1^c$ is the complement of ${\cal G}_1$.
We have
\begin{eqnarray*}
    && \mathbb{P}\bigg( \mathbb{P}_n(1 - f_1^T f) - \frac{1}{16} \mathbb{P}_n(1 - f_0^T f_1) > 0 \bigg) \\
    &=& \mathbb{P}\bigg( \mathbb{P}_n(1 - f_1^T f) - \frac{1}{16} \mathbb{P}_n(1 - f_0^T f_1) - \hat{d}^2(f, f_1) + \frac{1}{16} \hat{d}^{2}(f_0, f_1) \\
    && \quad > - \hat{d}^2(f, f_1) + \frac{1}{16} \hat{d}^{2}(f_0, f_1) \bigg) \\
    & \le & \mathbb{P}\bigg( \mathbb{P}_n(1 - f_1^T f) - \frac{1}{16} \mathbb{P}_n(1 - f_0^T f_1) - \hat{d}^2(f, f_1) + \frac{1}{16} \hat{d}^{2}(f_0, f_1) > \frac{1}{32} \hat{d}^{2}(f_0, f_1) \bigg)
\end{eqnarray*}
where the last inequality follows from the assumption that
$ \hat{d}^2(f,f_1) \le \hat{d}^2(f_0,f_1) / 32 .$
Now, for $i=1,\ldots,n$, define 
$$ U_i := ( 1 - f_1(X_i)^{T} f(X_i) ) - \frac{1}{16}( 1 - f_1(X_i)^{T} f_0(X_i) ) - \hat{d}^2(f, f_1) + \frac{1}{16} \hat{d}^2(f_0, f_1),$$
and using the inequality $ \hat{d}^2(f,f_1) \le \hat{d}^2(f_0,f_1) / 32$, we have
\begin{eqnarray*}
    \mathbb{E}(U_i^2) &\lesssim& \int_{ \mathbb{S}^p} (1 - f_1(x)^{T} f(x) )^2 \intd\rho(x) + \int_{ \mathbb{S}^p} (1 - f_1(x)^{T} f_0(x) )^2 \intd\rho(x) + \hat{d}^4(f_0, f_1) \\
    & \lesssim & \int_{ \mathbb{S}^p} 1 - f_1(x)^{T} f(x)  \intd\rho(x) + \int_{ \mathbb{S}^p} 1 - f_1(x)^{T} f_0(x)  \intd\rho(x) + \hat{d}^4(f_0, f_1) \\
    & \le &  \hat{d}^2(f, f_1) + \hat{d}^2(f_0, f_1) + \hat{d}^4(f_0, f_1) \\
    & \lesssim & \hat{d}^2(f_0, f_1) + \hat{d}^4(f_0, f_1)
\end{eqnarray*} 
Applying Bernstein's inequality, we obtain
\begin{eqnarray*}
\mathbb{P}( {\cal G}_1^{c}) &=&  \mathbb{P}\bigg( \mathbb{P}_n(1 - f_1^T f) - \frac{1}{16} \mathbb{P}_n(1 - f_0^T f_1) > 0 \bigg) \\
    &\le& \exp\bigg( - K_1 \frac{n \hat{d}^2(f_0, f_1) }{ \eta_1 + \hat{d}^2(f_0, f_1) } \bigg)
\end{eqnarray*} 
for some $K_1, \eta_1 > 0$.
\\\\
On the other hand, we have by Bernstein's inequality
\begin{eqnarray*}
    \mathbb{P}({\cal E}_2|{\cal G}_1) &\le &\mathbb{P}\bigg(\frac{15}{16} \mathbb{P}_n(1 - f_1^T f_0) \le \frac{1}{8} \hat{d}^2(f_0, f_1) \bigg) \\
    &=& \mathbb{P} \bigg( \frac{15}{16} \mathbb{P}_n(1 - f_1^T f_0) - \frac{15}{16} \hat{d}^2(f_0, f_1) \le -\frac{13}{16} \hat{d}^2(f_0, f_1) \bigg) \\
    & \le & \exp\bigg( - K_2 \frac{n \hat{d}^2(f_0, f_1) }{ \eta_2 + \hat{d}^2(f_0, f_1) } \bigg),
\end{eqnarray*}
for $K_2, \eta_2 > 0$.
\\\\
It follows that
\begin{eqnarray*}
    \mathbb{P}({\cal E}_1) \le \mathbb{P}({\cal E}_2) &\le& \mathbb{P}({\cal E}_2 | {\cal G}_1) + \mathbb{P}({\cal G}_1^{c}) \\
    & \le & \exp\bigg( - K_1 \frac{n \hat{d}^2(f_0, f_1) }{ \eta_1 + \hat{d}^2(f_0, f_1) } \bigg).+ \exp\bigg( - K_2 \frac{n \hat{d}^2(f_0, f_1) }{ \eta_2 + \hat{d}^2(f_0, f_1) } \bigg) \\
    & \le & 2 \exp\bigg( - C_2 \frac{n \hat{d}^2(f_0, f_1) }{ \eta + \hat{d}^2(f_0, f_1) } \bigg).
\end{eqnarray*}
for some constant $C_2, \eta > 0$.
\\\\
On the other hand, we have
\begin{eqnarray*}
 &&   \mathbb{P}_f( T_n < 0|{\cal E}_1^c) \\ &\le& \mathbb{P}_f \bigg( \sum_{i=1}^{n} (f_1(x_i) - f_0(x_i))^{T} Y_i - A_p(\kappa) \sum_{i=1}^{n} (f_1(x_i) -f_0(x_i))^{T} f(x_i) + \frac{1}{16} A_p(\kappa) t < 0 \bigg) \\
 & \le & \exp(-C_3 n \hat{d}^2(f_0,f_1))
\end{eqnarray*}
for some constant $C_3 > 0$, by Hoeffding's inequality.  Therefore,
\begin{eqnarray*}
 \mathbb{P}_f( T_n < 0 ) &\le& \mathbb{P}({\cal E}_1) + \mathbb{P}_f(T_n < 0|{\cal E}_1^c)  \\
    & \le & 2 \exp\bigg( - C_2 \frac{n \hat{d}^2(f_0, f_1) }{ \eta + \hat{d}^2(f_0, f_1) } \bigg) + \exp(-C_3 n \hat{d}^2(f_0, f_1)).
\end{eqnarray*}
Thus, we have shown that
\begin{align*}
    \mathbb{E}_0 \phi_n &\le \exp(-C n \hat{d}^2(f_0, f_1))\\
    \sup_{\{ f \in {\cal F}_n: \hat{d}(f,f_1) \le \xi \hat{d}(f_0,f_1) \} } \mathbb{E}_f(1 - \phi_n) &\le 2 \exp\bigg( - C \frac{n \hat{d}^2(f_0, f_1) }{ \eta + \hat{d}^2(f_0, f_1) } \bigg) + \exp(-C n \hat{d}^2(f_0,f_1))  .
\end{align*}
for some constant $C > 0$ and $\xi \in (0,1)$.
\end{proof}

\subsection{Proof of Lemma \ref{lemma_global_test}}

\begin{proof}
For each $n$, we partition the alternative into disjoint sets:
\begin{eqnarray*}
 \{ f \in {\cal F}_n: \hat{d}(f, f_0) > M \epsilon_n \} &\subset& \bigcup_{j=M}^{\infty} \{f \in {\cal F}_n: j \epsilon_n < \hat{d}(f,f_0) \le (j+1) \epsilon_n \} \\
 &=:& \bigcup_{j=M}^{\infty} {\cal S}_{nj}(\epsilon_n) .   
\end{eqnarray*}
For each ${\cal S}_{nj}(\epsilon_n)$, we can find $N_{nj} = {\cal N}(\xi j \epsilon_n, {\cal S}_{nj}(\epsilon_n), \hat{d})$ many functions $f_{njl} \in {\cal S}_{nj}(\epsilon_n)$ such that
$$ {\cal S}_{nj}(\epsilon_n) \subset \bigcup_{l=1}^{N_{nj}} \{ f \in {\cal F}_n: \hat{d}(f, f_{njl}) \le \xi j \epsilon_n \} .$$
By Lemma \ref{lemma_exist_test}, we can construct individual test function $\phi_{njl}$ for each function $f_{njl}$ with type I and type II errors bounded by
\begin{align*}
    \mathbb{E}_0 \phi_{njl} &\le \exp(-C n j^2 \epsilon_n^2) \\
    \sup_{ \{f \in {\cal F}_n: \hat{d}(f,f_{njl}) \le \xi \hat{d}(f_0, f_{njl}) \} } \mathbb{E}_f(1 - \phi_{njl}) &\le 2 \exp\bigg( - C \frac{n j^2 \epsilon_n^2 }{ \eta + j^2 \epsilon_n^2 } \bigg) + \exp(-Cnj^2\epsilon_n^2) .
\end{align*}
We then define the global test as 
$$\phi_n = \sup_{j \ge M} \max_{1 \le l \le N_{nj}} \phi_{njl}.$$
We can upper bound the type I error of $\phi_n$ as
\begin{eqnarray*}
    \mathbb{E}_0 \phi_n \le \sum_{j=M}^{\infty} \sum_{l=1}^{N_{nj}} \mathbb{E}_0 \phi_{njl} \le \sum_{j=M}^{\infty} N_{nj} \exp(-C n j^2 \epsilon_n^2 ) .
\end{eqnarray*}
Type II error can also be upper bounded:
\begin{eqnarray*}
    && \sup_{ \{f \in {\cal F}_n: \hat{d}(f, f_0) > M \epsilon_n \} } \mathbb{E}_f(1 - \phi_n) \\
    & \le& \sup_{j \ge M} \sup_{l=1,\ldots,N_{nj}} \sup_{ \{f \in {\cal F}_n: \hat{d}(f, f_{njl}) \le \xi \hat{d}(f_0, f_{njl}) \} } \mathbb{E}_f(1-\phi_n) \\
    & \le & \exp(-C M^2 n \epsilon_n^2) + 2 \exp\bigg( - C \frac{n M^2 \epsilon_n^2}{M^2 \epsilon_n^2 + \eta}   \bigg) .
\end{eqnarray*}
\end{proof}

\subsection{Proof of Lemma \ref{lemma_exp_lower_bnd}}

\begin{proof}
Let $\Pi(\intd f|B_n)$ be the restriction of the prior to the set $B_n = B(f_0; \epsilon_n)$. For any $C>0$, there exists $C'>0$ such that the following equality holds
\begin{eqnarray*}
    && \mathbb{P}_0 \bigg( \int_{{\cal F}} \prod_{i=1}^{n} \frac{p_f(x_i, y_i)}{p_0(x_i,y_i)} \Pi(\intd f) \le \Pi(B_n) \exp(-C n \epsilon_n^2) \bigg) \\
    & = &\mathbb{P}_0 \bigg( \int_{{\cal F}} \sum_{i=1}^{n} (f_0(X_i) - f(X_i))^T Y_i \Pi(\intd f|B_n) \ge C' n \epsilon_n^2  \bigg) .
\end{eqnarray*}

Applying Fubini's theorem, we obtain
\begin{eqnarray*}
    \mathbb{E}_0 \bigg[\int_{{\cal F}} (f_0(X) - f(X))^T Y \Pi(\intd f|B_n)  \bigg] &=& \int_{{\cal F}} \mathbb{E}_0 \big( (f_0(X) - f(X))^T Y \big) \Pi(\intd f|B_n) \\
    &=& \int_{{\cal F}} \mathbb{E} \big( \mathbb{E}_0 \big( (f_0(X) - f(X))^T Y \big| X = x \big) \big) \Pi(\intd f|B_n) \\
    &=& \int_{{\cal F}} \int_{\mathbb{S}^p} A_p(\kappa) (1 - f_0(x)^T f(x)) d\mu(x) \Pi(\intd f|B_n) \\
    &=& \int_{{\cal F}} A_p(\kappa) \hat{d}^2(f, f_0) \Pi(\intd f|B_n) \\
    & \le & A_p(\kappa) \epsilon_n^2,
\end{eqnarray*}
where the inequality follows since $\hat{d}^2(f,f_0) \le \epsilon_n^2$ for $f \in B_n$. 
\\\\
We also have that by Cauchy-Schwarz inequality and Fubini's theorem
\begin{eqnarray*}
    \mathbb{E}_0 \bigg[ 
\bigg( \int_{{\cal F}} (f_0(X) - f(X))^T Y \Pi(\intd f|B_n) \bigg)^2  \bigg] &\le& \mathbb{E}_0 \bigg[ 
 \int_{{\cal F}} \big( (f_0(X) - f(X))^T Y \big)^2 \Pi(\intd f|B_n)  \bigg] \\
 & = & \int_{{\cal F}} \mathbb{E}_0 \bigg( \big( (f_0(X) - f(X))^T Y \big)^2 \bigg) \Pi(\intd f|B_n) \\
  & \le & \int_{{\cal F}} A_p(\kappa) \hat{d}^2(f, f_0) \Pi(\intd f|B_n) \\
    & \le & A_p(\kappa) \epsilon_n^2 .
\end{eqnarray*}
We apply Bernstein's inequality and the two bounds above:
\begin{eqnarray*}
    && \mathbb{P}_0 \bigg(    \sum_{i=1}^{n} \int_{{\cal F}}  (f_0(X_i) - f(X_i))^T Y_i \Pi(\intd f|B_n)  - n \int_{{\cal F}} A_p(\kappa) \hat{d}^2(f_0,f) \Pi(\intd f|B_n) \\
    && \quad \ge C n \epsilon_n^2 - n \int_{{\cal F}} A_p(\kappa) \hat{d}^2(f_0,f) \Pi(\intd f|B_n)  \bigg) \\
   & \le &  \exp \bigg( \frac{-\frac{1}{2} (C n \epsilon_n^2 - n A_p(\kappa) \epsilon_n^2)^2}{n A_p(\kappa) \epsilon_n^2 + \frac{1}{3} C n \epsilon_n^2}  \bigg) \\
   & \le & \exp(-C_1 n \epsilon_n^2) \rightarrow 0
\end{eqnarray*}
for all $C > A_p(\kappa)$ and $C_1 > 0$ is some constant. 
\end{proof}

\subsection{Proof of Lemma \ref{lemma_stability}}
\begin{proof}
    
By Lemma 3.2 of \cite{Merigot2020}, it is sufficient to consider two discrete measures $\nu^0, \nu^1$ supported on the same fixed set ${\cal Z} := \{ z_1, \ldots, z_N \} \subset \mathbb{S}^{p}$. Let $c(x,z) = d(x,z)^2/2$ be the cost function where $d$ is the geodesic distance on $\mathbb{S}^{p}$. Recall that $\mu$ is the uniform measure on $\mathbb{S}^{p}$, and $S_{\nu^k}$ be the optimal transport map from $\mu$ to $\nu^k$, $k=0,1$. 

\vskip5pt
For each $k=0,1$, let $\phi^k$ be the Brenier potential such that
\begin{equation}\label{OTmap}
    S_{\nu^k}(x)=\exp_x[-\nabla\phi^k(x)].
\end{equation}
Let $\psi^k$ be the dual potential function on ${\cal Z}$:
$$\psi^k(z)=\inf_{x\in\mathbb{S}^{p}}c(x,z)-\phi^k(x), \quad z \in {\cal Z}.$$
Note that $\psi^k$ can be identified with the vector $\boldsymbol{\psi}^k$ defined by $\boldsymbol{\psi}^k_{i} = \psi^k(z_i)$, $i=1,\cdots,N$. 
Meanwhile,
\begin{equation}\label{potenPhi}
    \phi^k(x)=\inf_{z\in\cal Z}c(x,z)-\psi^k(z)\quad\text{ for }x\in\mathbb{S}^{p}. 
\end{equation}

\vskip5pt
Recall that the vector $\boldsymbol{\psi}^k$ induces a partition of $\mathbb{S}^{p}$ into Laguerre cells:
\begin{equation}\label{Lagu}
    \textrm{Lag}_{z_i}(\boldsymbol{\psi}^k) = \{ x \in \mathbb{S}^{p}: \boldsymbol{\psi}_j^k \le \boldsymbol{\psi}_i^k + c(z_j, x) - c(z_i, x) \quad \forall\,j=1, \ldots, N \}.
\end{equation} 

\vskip5pt
\noindent
For a given $\boldsymbol{\psi}^k$, let $$ G_i(\boldsymbol{\psi^k}) = \mu(\textrm{Lag}_{z_i}(\boldsymbol{\psi}^k))  =: \nu^k_i $$ 
be the measure of the set $\textrm{Lag}_{z_i}(\boldsymbol{\psi}^k)$, and let 
$$ \nu^k = (\nu^k_i)_{1 \le i \le N} \in \mathbb{R}^N .$$ 
We let ${\cal S}_{+} \subset \mathbb{R}^N$ be the set of potentials such that all Laguerre cells $\textrm{Lag}_{z_i}(\boldsymbol{\psi}^k)$ have positive measures
$$ {\cal S}_{+} = \{ \boldsymbol{\psi} \in \mathbb{R}^N: G_i(\boldsymbol{\psi}) > 0 \quad \forall\, i=1,\cdots,N \}. $$
Similarly, we define $\mathcal{P}^+(\cal Z)$ to be the set of positive measures on $\cal Z$
$$\mathcal{P}^+({\cal Z}) = \{\nu=({\nu}_1,\cdots,{\nu}_N)\in\mathbb{R}^N : \nu_i=\nu(\{z_i\})>0 \quad\forall\,i=1,\cdots,N\}.$$

\vskip5pt
By the uniqueness of optimal transportation, for each $\nu\in\mathcal{P}^+({\cal Z})$, there exists a unique optimal mapping $S_\nu : \mathbb{S}^{p}\to\mathcal{Z}$. Our aim is to show the \textbf{stability of the optimal mappings}, namely for any $\nu^0, \nu^1\in\mathcal{P}^+({\cal Y})$,
    \begin{equation}\label{stabOT}
        \tilde{d}( S_{\nu^0}, S_{\nu^1}) \leq CW_1(\nu^0,\nu^1)^{\frac{1}{9}}.
    \end{equation}
where we recall that 
$$  \tilde{d}^2( S_{\nu^0}, S_{\nu^1}) = \int_{\mathbb{S}^p} d^2(S_{\nu^0}(x), S_{\nu^1}(x)) \intd\mu(x) . $$

\vskip5pt
From \eqref{OTmap} and \eqref{potenPhi}, the key part of proving \eqref{stabOT} is to obtain the \textbf{stability of dual potentials}, namely
    \begin{equation}\label{stabPo}
        \|\psi^0-\psi^1\|_{L^2(\nu^0)}^2 \lesssim W_1(\nu^0,\nu^1)^{\frac{2}{3}},
    \end{equation}
provided their difference is normalised (by adding a proper constant)
    \begin{equation}\label{normPo}
        \sum_{i=1}^N(\psi^0_i-\psi^1_i)\nu^0_i=0.
    \end{equation}

\vskip5pt
Assuming \eqref{stabPo} for the moment, we first prove \eqref{stabOT} as follows.
\begin{proof}[Proof of \eqref{stabOT}]
Heuristically, \eqref{stabPo} says $\psi^0$ and $\psi^1$ are close in $L^2$. By Chebyshev's inequality, they are close in $L^\infty$ in a large portion. Hence, their Legendre transforms $\phi^0$ and $\phi^1$ are close in $L^\infty$ in a large portion. By a Gagliardo-Nirenberg type inequality, this implies that $\nabla\phi^0$ and $\nabla\phi^1$, and thus $T_{\nu^0}$ and $T_{\nu^1}$ are close in $L^2$, namely \eqref{stabOT}.

\vskip5pt
Below, we divide the proof into several steps:

\vskip5pt
\noindent\textbf{($i$): Although \eqref{stabPo} assumed $\phi^0$ and $\phi^1$ are close in $L^2$ measured by $\mu^0$, we show that this also holds when measuring by $\mu^1$.}

Note that the difference $(\psi^0-\psi^1)$ is normalised w.r.t. $\nu^0$, i.e. \eqref{normPo}, but not w.r.t. $\nu^1$. Let's define
    \begin{equation*}
        \tilde\psi^0_i = \psi^0_i - \sum_{i=1}^N(\psi^0_i-\psi^1_i)\nu^1_i, \qquad
        \tilde\psi^1_i = \psi^1_i.
    \end{equation*}
It's easy to see the difference $(\tilde\psi^0-\tilde\psi^1)$ is now normalised w.r.t. $\nu^1$ in the sense that $\sum_{i=1}^N(\tilde\psi^0_i-\tilde\psi^1_i)\nu^1_i=0.$ Hence, by \eqref{stabPo} we obtain
    $$\|\tilde\psi^0-\tilde\psi^1\|_{L^2(\nu^1)}\lesssim W_1(\nu^0,\nu^1)^{\frac{1}{3}}.$$

Therefore, 
    \begin{align}\label{stabL2u1}
        \|\psi^0-\psi^1\|_{L^2(\nu^1)} &= \|(\psi^0-\tilde\psi^0)-(\psi^1-\tilde\psi^1)+(\tilde\psi^0-\tilde\psi^1)\|_{L^2(\nu^1)} \nonumber\\
            &\le \left|\sum_{i=1}^N(\psi^0_i-\psi^1_i)\nu^1_i\right| + \|\tilde\psi^0-\tilde\psi^1\|_{L^2(\nu^1)} \nonumber\\
            &= \left|\int_{\mathcal{Z}}(\psi^0-\psi^1)\intd(\nu^1-\nu^0)\right| + \|\tilde\psi^0-\tilde\psi^1\|_{L^2(\nu^1)} \\
            &\lesssim W_1(\nu^0,\nu^1) + W_1(\nu^0,\nu^1)^{\frac{1}{3}} \lesssim W_1(\nu^0,\nu^1)^{\frac{1}{3}}, \nonumber
    \end{align}
where we used Kantorovich-Rubinstein's theorem and assumed $W_1(\nu^0,\nu^1)$ is sufficiently small.

\vskip5pt
\noindent\textbf{($ii$): From $L^2$-close to $L^\infty$-close in a large portion.}

From \eqref{stabPo} and \eqref{stabL2u1}, for $k\in\{0,1\}$
    \begin{equation}\label{stabPPP}
        \|\psi^0-\psi^1\|_{L^2(\nu^k)}^2 \lesssim W_1(\nu^0,\nu^1)^{\frac{2}{3}} =: \varepsilon.
    \end{equation}
For $\alpha\in(0,1)$, define
    \begin{equation}\label{setYa}
        \mathcal{Z}_\alpha:=\{z\in\mathcal{Z} : |\psi^0(z)-\psi^1(z)| \le \varepsilon^\alpha\}.
    \end{equation}
By Chebyshev's inequality and \eqref{stabPPP} we have that for $k\in\{0,1\}$,
    $$\varepsilon^{2\alpha}\nu^k(\mathcal{Z}\setminus\mathcal{Z}_\alpha) \le \|\psi^0-\psi^1\|_{L^2(\nu^k)}^2\lesssim\varepsilon.$$
This implies that
    \begin{equation}\label{largeY}
        1-\nu^k(\mathcal{Z}_\alpha) \lesssim \varepsilon^{1-2\alpha}.
    \end{equation}

Recall that $\mathcal{Z}=\{z_1,\cdots,z_N\}$ and from \eqref{potenPhi},
    $$\phi^k(x) = \min_i c(x,z_i) - \psi^k(z_i),\quad x\in\mathbb{S}^{p}.$$
Define
    \begin{equation}\label{resLegen}
        \phi^{k,\alpha}(x) = \min_{z\in\mathcal{Z}_\alpha} c(x,z) - \psi^k(z), \quad x\in\mathbb{S}^{p}.
    \end{equation}
One can see that $\phi^{k,\alpha} \geq \phi^k$, and if $z_i\in\mathcal{Z}_\alpha$,
    \begin{equation}\label{same1}
        \phi^{k,\alpha}(x)=\phi^k(x)\quad\text{ for } x\in \textrm{Lag}_{z_i}(\psi^k),
    \end{equation}
since the minimums are both attained at the point $z_i$. This implies that $\phi^{k,\alpha} \equiv \phi^k$ on the subset 
    \begin{equation}\label{same2}
        \mathcal{X}^k_\alpha = \bigcup_{z_i\in\mathcal{Z}_\alpha}\textrm{Lag}_{z_i}(\psi^k)\subset\mathbb{S}^{p}.
    \end{equation}
By measure-preserving one has $\mu(\textrm{Lag}_{z_i}(\psi^k))=\nu^k(z_i)$, and thus
$$\mu(\mathcal{X}^k_\alpha)=\sum_{z_i\in\mathcal{Z}_\alpha}\mu(\textrm{Lag}_{z_i}(\psi^k))=\sum_{z_i\in\mathcal{Z}_\alpha}\nu^k(z_i)=\nu^k(\mathcal{Z}_\alpha).$$
Then by \eqref{largeY} we obtain
    \begin{equation}\label{smallcomp}
        \mu(\mathbb{S}^{p}\setminus\mathcal{X}^k_\alpha) = 1-\mu(\mathcal{X}^k_\alpha) \lesssim \varepsilon^{1-2\alpha}.
    \end{equation}

For $x\in\mathbb{S}^{p}$, by definitions \eqref{resLegen} and \eqref{setYa} we have
    \begin{align*}
        \phi^{0,\alpha}(x) &=  \min_{z\in\mathcal{Z}_\alpha} c(x,z) - \psi^0(z)    \\
                    &\ge \min_{z\in\mathcal{Z}_\alpha} c(x,z) - \psi^1(z) - \varepsilon^\alpha \\
                    &= \phi^{1,\alpha}(x) -\varepsilon^\alpha,
    \end{align*}
thus by symmetry 
    \begin{equation}\label{Linfa}
        \|\phi^{0,\alpha}-\phi^{1,\alpha}\|_{L^\infty(\mathbb{S}^{p})} \le \varepsilon^\alpha. 
    \end{equation}

\vskip5pt
\noindent\textbf{($iii$): From $L^\infty$-close to $H^2$-close.}

We are now ready to prove \eqref{stabOT}. We first show that the exponential map $\exp_x$ is Lipschitz continuous with Lipschitz constant 1. 
In particular, $\exp_x$ does not increase distances. Therefore, 
    \begin{align}\label{mainest}
       \tilde{d}( S_{\nu^0}, S_{\nu^1}) &= \left(\int_{\mathbb{S}^{p}} d^2(\exp_x[-\nabla\phi^0(x)],\exp_x[-\nabla\phi^1(x)])\,\intd\mu(x)\right)^{\frac{1}{2}} \nonumber \\
            &\le \left(\int_{\mathbb{S}^{p}} |\nabla\phi^0(x)-\nabla\phi^1(x)|^2\,\intd\mu(x)\right)^{\frac{1}{2}} \\
            &\le \|\nabla\phi^0-\nabla\phi^{0,\alpha}\|_{L^2(\mathbb{S}^{p})} + \|\nabla\phi^{0,\alpha}-\nabla\phi^{1,\alpha}\|_{L^2(\mathbb{S}^{p})} + \|\nabla\phi^{1,\alpha}-\nabla\phi^1\|_{L^2(\mathbb{S}^{p})} \nonumber\\
            &=: I + I\!I + I\!I\!I. \nonumber
    \end{align}

Estimate on $I,I\!I\!I$: By \eqref{same1} and \eqref{same2}, $\nabla\phi^{k,\alpha}=\nabla\phi^k$ on $\mathcal{X}^k_\alpha$. Since both the gradients $\nabla\phi^{k,\alpha}$ and $\nabla\phi^k$ are bounded by $\pi$, from \eqref{smallcomp} we have
    \begin{align}\label{est1-3}
        I,I\!I\!I &= \|\nabla\phi^{k,\alpha}-\nabla\phi^k\|_{L^2(\mathbb{S}^{p})} \nonumber \\
            &= \|\nabla\phi^{k,\alpha}-\nabla\phi^k\|_{L^2(\mathbb{S}^{p}\setminus\mathcal{X}^k_\alpha)} \nonumber \\
            &\le 2\pi\mu^{1/2}(\mathbb{S}^{p}\setminus\mathcal{X}^k_\alpha) \\
            &\lesssim \varepsilon^{\frac{1}{2}-\alpha}. \nonumber
    \end{align}

Estimate on $I\!I$: We need to utilise \eqref{Linfa}, and use the \emph{claim} that 
\begin{equation}\label{claim}
    \|\nabla\phi^{0,\alpha}-\nabla\phi^{1,\alpha}\|_{L^2(\mathbb{S}^{p})} \leq C \|\phi^{0,\alpha}-\phi^{1,\alpha}\|_{L^\infty(\mathbb{S}^{p})}^{\frac{1}{2}},
\end{equation}
and thus obtain that 
\begin{equation}\label{est2}
    \|\nabla\phi^{0,\alpha}-\nabla\phi^{1,\alpha}\|_{L^2(\mathbb{S}^{p})} \lesssim \varepsilon^\alpha. 
\end{equation}

Combining \eqref{est1-3} and \eqref{est2} into \eqref{mainest}, we have
\begin{equation*}
\tilde{d}( S_{\nu^0}, S_{\nu^1})
 \lesssim \varepsilon^{\frac{1}{2}-\alpha} + \varepsilon^{\frac{\alpha}{2}}.
\end{equation*}
Setting $\alpha=\frac{1}{3}$ so that $\frac{1}{2}-\alpha=\frac{\alpha}{2}$, and recalling $\varepsilon=W_1(\nu^0,\nu^1)^{\frac{2}{3}}$, we then obtain the desired estimate
\begin{equation*}
   \tilde{d}( S_{\nu^0}, S_{\nu^1})  \lesssim W_1(\nu^0,\nu^1)^{\frac{1}{9}}.
\end{equation*}

Therefore, it suffices to prove the claim \eqref{claim}. 
By calculation,
    \begin{align*}
        \int_{\mathbb{S}^{p}} |\nabla\phi^{0,\alpha}(x) - \nabla\phi^{1,\alpha}(x)|^2\,\intd\mu(x) &= - \int_{\mathbb{S}^{p}} (\phi^{0,\alpha}-\phi^{1,\alpha}) \Delta (\phi^{0,\alpha}-\phi^{1,\alpha} ) \,\intd\mu \\
            &\le \|\phi^{0,\alpha}-\phi^{1,\alpha}\|_{L^\infty(\mathbb{S}^{p})} \left( \int_{\mathbb{S}^{p}}|\Delta\phi^{0,\alpha}| +| \Delta\phi^{1,\alpha}|\,\intd\mu\right).
    \end{align*}
Note that for each $k=\{0,1\}$, $\phi^{k,\alpha}$ defined in \eqref{resLegen} is a Brenier potential of the optimal mapping $S^{k,\alpha} : \mathbb{S}^{p} \to \mathcal{Z}_\alpha$. Correspondingly, the Laguerre cells will be
$$ \textrm{Lag}_{z_i}^{k,\alpha} = \{x\in\mathbb{S}^{p} : \text{the minimum in \eqref{resLegen} is attained at $z_i\in\mathcal{Z}_\alpha$}\} $$
and the target measure will be
$$\nu^{k,\alpha}(\{z_i\})=\mu(\textrm{Lag}_{z_i}^{k,\alpha}),\quad\text{ if } z_i\in\mathcal{Z}_\alpha;\qquad \nu^{k,\alpha}(\{z_i\})=0,\quad\text{ if } z_i\notin\mathcal{Z}_\alpha.$$
One can see that $\nu^{k,\alpha}$ is supported on $\mathcal{Z}_\alpha$ and $\nu^{k,\alpha} \ge \nu^k>0$ on $\mathcal{Z}_\alpha$.
Therefore, Lemma \ref{stayaway} applies, and thus for each $z_i\in\mathcal{Z}_\alpha$, 
\begin{equation}\label{3ks}
    d(-z_i, \textrm{Lag}_{z_i}^{k,\alpha}) \ge \delta
\end{equation}
for some constant $\delta>0$.
\\\\
Again, from the definition \eqref{resLegen},
$$\phi^{k,\alpha}(x) = \frac{1}{2}d^2(x,z_i)-\psi^k(z_i)\quad\text{ for } x\in \textrm{Lag}_{z_i}^{k,\alpha},\ \ z_i\in\mathcal{Z}_\alpha.$$
Let $(r, \theta)$ be the polar coordinates around $z_i$. It can be shown that if $d\left(x, z_i\right)=r$ and $\phi^{k, \alpha}(x)=\frac{1}{2} d^2\left(x, z_i\right)-\psi^k\left(z_i\right)$, then
$\Delta \phi^{k, \alpha}(x)=1+(p-1) r \cot r$. By \eqref{3ks}, $\textrm{Lag}_{z_i}^{k, \alpha} \subset\{0 \leq r \leq \pi-\delta\}$. In particular, $\left|\Delta \phi^{k, \alpha}(x)\right| \leq 1+(p-1) r|\cot r|$ is uniformly bounded by $C=C(p, \delta)=\max \{p,1+(p-1)(\pi-\delta)|\cot (\pi-\delta)|\}$ on $\textrm{Lag}_{z_i}^{k, \alpha}$. Therefore
\begin{align*}
\int_{\mathbb{S}^{p}}\left|\Delta \phi^{k, \alpha}\right| \intd\mu & =\sum_{z_i \in \mathcal{Z}_\alpha} \int_{\textrm{Lag}_{z_i}^{k, \alpha}}\left|\Delta \phi^{k, \alpha}\right| \intd\mu \\
& \leq \sum_{z_i \in \mathcal{Z}_\alpha} C \mu\left(\textrm{Lag}_{z_i}^{k, \alpha}\right) \\
& =C \mu\left(\mathbb{S}^{p}\right)=C.
\end{align*}
Hence, we obtain $\|\nabla\phi^{0,\alpha}-\nabla\phi^{1,\alpha}\|_{L^2(\mathbb{S}^{p})} \lesssim  \|\phi^{0,\alpha}-\phi^{1,\alpha}\|_{L^\infty(\mathbb{S}^{p})}^{\frac{1}{2}}$.
\end{proof}

\begin{lemma}
\label{lemma_exp_map_lipschitz}
 For any $x \in \mathbb{S}^p$, the exponential map  $\exp_x$ is Lipschitz continuous with Lipschitz constant 1.
\end{lemma}

\begin{proof}
    Let $M=\mathbb{S}^p$. We claim that the exponential map has Lipschitz constant 1, which is implied by showing $\left|\left(d \exp_x\right)\right|_v(w)|\le |w|$. As $\left(d \exp_x\right)_0=\mathrm{id}$, we only have to consider $\left(d \exp_x\right)_v$ for $v \ne 0 \in T_x M$. Let $0 \ne v \in T_x M$ and $w \in T_v\left(T_x M\right)$, then
$$
\left. \left(d \exp_x\right)\right|_v(w)=\left. \frac{d}{d s}\right|_{s=0} \exp_x(v+s w).
$$

Now, consider the variation of geodesics

$$
\Gamma(s, t):=\exp_x(t(v+s w)).
$$

The corresponding Jacobi field is

$$
J(t)=\left. \frac{\partial}{\partial s}\right|_{s=0} \Gamma(s, t).
$$

From this we see that
$$
\left. d \exp_x\right|_v(w)=J(1).
$$

To estimate $d \exp_x$, it is equivalent to estimate $J(1)$.

Suppose $w$ is parallel to $v$, then we have
$|J(1)|=|(d\exp_x)_v(w)|=|w|$, thanks to the radial isometry of the exponential map.

If $w$ is perpendicular to $v$, then by Lemma 10.8 in \cite{lee2006riemannian}, we have

$$
|J(1)| =\frac{\sin (|v|)}{|v|}\left|J^{\prime}(0)\right| \le \left|J^{\prime}(0)\right|.
$$

We claim that $J^{\prime}(0)=w$. This follows from the computation

\begin{align*}
J^{\prime}(0) & =\left. \nabla_{t=0} \frac{\partial}{\partial s}\right|_{s=0} \Gamma(s, t) \\
& =\left. \nabla_{s=0} \frac{\partial}{\partial t}\right|_{t=0} \Gamma(s, t) \\
& =\left. \nabla_{s=0}\left(d \exp_x\right)\right|_0(v+s w) \\
& =\nabla_{s=0}(v+s w) \\
& =w.
\end{align*}

From this, it also follows that if $w$ is perpendicular to $v$, then
$$
\left|\left(d \exp_x\right)\right|_v(w)|\le |w|.
$$

In general, by the Gauss lemma, if we have $w=w^T+w^{\perp}$, where $w^T$ is parallel to $v$ and $w^\perp$ is perpendicular to $v$, then the decomposition $(d\exp_x)_v(w)=(d\exp_x)_v(w^T)+(d\exp_x)_v(w^\perp)$ remains perpendicular by the Gauss lemma (Theorem 6.8 in \cite{lee2006riemannian}), and so the above shows that
$\left|\left(d \exp_x\right)\right|_v(w)|\le |w|$.
\end{proof}

\vskip5pt

\begin{lemma}\label{stayaway}
    Assume $N\geq2$, $\mathcal{Z}=\{z_1,\cdots,z_N\}$. Let $\nu$ be a probability measure supported on $\mathcal{Z}$ satisfying $\nu(\{z_i\})>0$ for all $1\leq i\leq N$. Let $S:(\mathbb{S}^{p},\mu) \to (\mathcal{Z},\nu)$ be the optimal mapping and $\psi$ be the dual potential function on $\mathcal{Z}$, accordingly the Laguerre cells $\textrm{Lag}_{z_i}(\psi)$ is defined in \eqref{Lagu}.
    Then, there exists a constant $\delta>0$ such that 
    $$d(\hat z_i, \textrm{Lag}_{z_i}(\psi))\geq \delta \quad\forall\,i=1,\cdots,N,$$
    where $\hat z_i =-z_i$ is the antipodal point of $z_i$. 
\end{lemma}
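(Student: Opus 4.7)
The plan is a proof by contradiction. Since each Laguerre cell $\textrm{Lag}_{z_i}(\psi)$ is closed and $\mathbb{S}^p$ is compact, it suffices to show that $\hat z_i \notin \textrm{Lag}_{z_i}(\psi)$ for every $i$, and then take $\delta := \min_i d(\hat z_i, \textrm{Lag}_{z_i}(\psi)) > 0$. Suppose, toward a contradiction, that $\hat z_i \in \textrm{Lag}_{z_i}(\psi)$ for some $i$. Writing $\rho_j := d(z_i, z_j) > 0$ for $j \neq i$ and using the antipodal identity $d(\hat z_i, z_j) = \pi - \rho_j$, evaluating the defining inequality \eqref{Lagu} at $x = \hat z_i$ yields
$$
\psi_i - \psi_j \ \ge\ c(\hat z_i, z_i) - c(\hat z_i, z_j) \ =\ \pi \rho_j - \frac{\rho_j^2}{2}, \qquad \forall\, j \neq i.
$$

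Next I would fix any $j \neq i$ (possible because $N \ge 2$) and analyse the function $g(y) := d(y, z_i)^2 - d(y, z_j)^2$ on $\mathbb{S}^p$. The plan is to show $g$ attains its global maximum $2\pi \rho_j - \rho_j^2$ \emph{uniquely} at $y = \hat z_i$. The crucial ingredient is the spherical identity $d(y, z_i) = \pi - d(y, \hat z_i)$, valid for every $y \in \mathbb{S}^p$ because every geodesic through $\hat z_i$ also passes through $z_i = -\hat z_i$ at distance $\pi$. Setting $r := d(y, \hat z_i)$ and combining with the triangle-inequality lower bound $d(y, z_j) \ge (\pi - \rho_j) - r$ in the regime $r \le \pi - \rho_j$ produces
$$
g(y) \ \le\ (\pi - r)^2 - (\pi - \rho_j - r)^2 \ =\ 2\pi \rho_j - \rho_j^2 - 2 r \rho_j,
$$
which is strictly less than $2\pi \rho_j - \rho_j^2$ whenever $r > 0$. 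The complementary regime $r > \pi - \rho_j$ is dispatched by the cruder bound $g(y) \le d(y, z_i)^2 = (\pi - r)^2 < 2\pi \rho_j - \rho_j^2$.

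Finally, any $y \in \textrm{Lag}_{z_j}(\psi)$ satisfies, by the cell inequality with $k = i$, $g(y) = 2\bigl(c(y, z_i) - c(y, z_j)\bigr) \ge 2(\psi_i - \psi_j) \ge 2\pi \rho_j - \rho_j^2 = \max_{\mathbb{S}^p} g$; the uniqueness just established forces $y = \hat z_i$. Hence $\textrm{Lag}_{z_j}(\psi) \subseteq \{\hat z_i\}$, and so $\nu(\{z_j\}) = \mu(\textrm{Lag}_{z_j}(\psi)) = 0$, contradicting the hypothesis $\nu(\{z_j\}) > 0$. The main subtlety is the \emph{global} uniqueness of the maximiser of $g$ on the whole sphere rather than just on the great circle through $z_i$ and $z_j$; the spherical identity $d(y, z_i) + d(y, \hat z_i) = \pi$ reduces the optimisation to a single radial variable $r$, which is what makes the argument work despite the non-smoothness of $c(\cdot, z_i)$ at the cut locus $\hat z_i$.
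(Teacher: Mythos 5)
Your proof is correct and takes a genuinely different --- and more elementary --- route than the paper's. The paper's argument invokes a global monotonicity inequality from external work (labelled \eqref{CLY2.4}), namely $d(Sx, \hat x_0) \leq 2\pi\, d(Sx_0, \hat x_0)/d(x,x_0)$, which it establishes by running a normalized steepest-descent flow of $F(p) = \tfrac{1}{2}d^2(p,x) - \tfrac{1}{2}d^2(p,x_0)$ and appealing to the $c$-cyclical monotonicity of the optimal map $S$; plugging in $x_0 = \hat z_i$ then forces $S$ to collapse the entire sphere to a single point, contradicting $N \geq 2$ and $\mu > 0$. You instead stay entirely inside the Laguerre tessellation: assuming $\hat z_i \in \textrm{Lag}_{z_i}(\psi)$ extracts the potential gap $\psi_i - \psi_j \geq \pi\rho_j - \rho_j^2/2$ directly from the cell inequality evaluated at $\hat z_i$, and the spherical identity $d(y,z_i) + d(y,\hat z_i) = \pi$ collapses the optimization of $g(y) = d(y,z_i)^2 - d(y,z_j)^2$ to the single radial variable $r = d(y,\hat z_i)$, yielding a unique global maximizer at $\hat z_i$. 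The cell inequality for $\textrm{Lag}_{z_j}$ with index $i$ then squeezes $\textrm{Lag}_{z_j}(\psi) \subseteq \{\hat z_i\}$, a $\mu$-null set, contradicting $\nu(\{z_j\}) > 0$. Your closedness-of-cells observation, which converts pointwise non-membership into a uniform $\delta$, is cleaner than the paper's sequential phrasing of the same reduction. What your argument buys is self-containment (no dependence on an external lemma or on the $c$-monotonicity machinery) and a tighter reliance on the structure actually used --- only the defining Laguerre inequalities of the dual potential $\psi$, not the full optimality of $S$ beyond what $\psi$ encodes. What the paper's approach buys is a quantitative bound (the $2\pi/d(x,x_0)$ factor) that may be re-usable elsewhere, and a mechanism (gradient flow plus $c$-monotonicity) that transfers to more general Riemannian settings where the clean antipodal identity $d(y,z) + d(y,-z) = \pi$ is not available.
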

\begin{proof}
    This follows from an inequality
    \begin{equation}\label{CLY2.4}
        d(Sx, \hat x_0) \leq 2\pi \frac{d(Sx_0, \hat x_0)}{d(x,x_0)}
    \end{equation}
    where $x_0,x\in\mathbb{S}^{p}$ and $x\neq x_0$. The above inequality relies on the optimality of $S$ and was proved in \cite{CLY} on hemisphere. For completeness, we include it here.

    Define the function
    \begin{equation}\label{defF}
        F(p)=\frac{1}{2}d^2(p,x) - \frac{1}{2}d^2(p,x_0) \quad\text{ for } p\in\mathbb{S}^{p}.
    \end{equation}
    By calculation, $\nabla F(p)=v_x-v_{x_0}$, where $v_x, v_{x_0} \in T_p \mathbb{S}^{p}$ satisfy $\exp _p\left(-v_x\right)=x$ and $\exp _p\left(-v_{x_0}\right)=x_0$. So 
    $$|\nabla F(p)|=\left|v_x-v_{x_0}\right| \geq d\left(x_0, x\right)$$ by the distance-nondecreasing property of $\exp _p$. 
   \\\\
Let us consider on $\mathbb{S}^{p}\setminus\{\hat x_0\}$ the normalized steepest descent equation (with arc-length parameter $s$):
$$\dot p(s)=-\frac{\nabla_pF[p(s)]}{|\nabla_pF(p(s))|}.$$
Then a solution $p(s)$ satisfies
$$\frac{d}{ds}F[p(s)]=-|\nabla_pF(p(s))|\leq -d(x_0,x).$$
    Since the function $F(p)$ attains its minimum at $p=\hat x_0$, starting from $p(0)=p_0$, the minimum of $p\mapsto F(p)$ is reached by flowing along an integral curve of length $L\geq d(p_0,\hat x_0)$. 
    Writing 
    $$F(p_0)-F(\hat x_0)=-\int_0^L\frac{d}{ds}F[(p(s))]\,\intd s,$$
we then have 
$$F(p_0)-F(\hat x_0) \geq \int_0^Ld(x_0,x) \intd s \geq d(x_0,x)d(p_0,\hat x_0).$$
    It implies that for $x\neq x_0$, 
    \begin{equation}\label{CLY2.8}
        d(p_0,\hat x_0)\leq\frac{F(p_0)-F(\hat x_0)}{d(x_0,x)}\quad\forall\, p_0\in\mathbb{S}^{p}.
    \end{equation}

    Since $T$ is the optimal mapping, it must be $c$-monotone \cite{McCann2001}, namely
        $$d^2(x_0,S x_0)+d^2(x,S x) \leq d^2(x_0,S x)+d^2(x,S x_0).$$
        From the definition of function $F$, we get
        $$F(S x)\leq F(S x_0).$$
    Now, setting $p_0=S x$ in \eqref{CLY2.8}, we have
    $$d(S x,\hat x_0)\leq\frac{F(S x)-F(\hat x_0)}{d(x_0,x)}\leq\frac{F(S x_0)-F(\hat x_0)}{d(x_0,x)}.$$
    Hence, since $p\mapsto F(p)$ is $2\pi$-Lipschitz, we obtain \eqref{CLY2.4}, namely
    $$d(Sx,\hat x_0)\leq 2\pi\frac{d(Sx_0, \hat x_0)}{d(x_0,x)}.$$
\\\\
    We can now prove the lemma by contradiction. 
    Suppose there does not exist a constant $\delta>0$ such that
    $$d(\hat z_i, \textrm{Lag}_{z_i}(\psi))\geq \delta \quad\forall\,i=1,\cdots,N.$$
This implies that
$$ \min_{i=1,\ldots,N}  d(\hat{z}_i, \textrm{Lag}_{z_i}(\psi)) < \frac{1}{k}$$
for all $k \in \mathbb{N}$,
which implies that
$$ \min_{i=1,\ldots,N} d(\hat{z}_i, \textrm{Lag}_{z_i}(\psi)) = 0 .$$
Therefore, there is some $1\leq i\leq N$ such that
    $$d(\hat z_i, \textrm{Lag}_{z_i}(\psi))=0,$$
    namely there is $x_0\in \textrm{Lag}_{z_i}(\psi)$ such that $d(T x_0,\hat x_0)=0$. By \eqref{CLY2.4}, we have
        $$d(Sx,\hat x_0)=0\quad\forall\, x\neq x_0,\, x\in\mathbb{S}^{p},$$
    namely $S$ maps the whole sphere $\mathbb{S}^{p}$ into a single point $\hat x_0$, which contradicts the fact $\mu>0$ and $N\geq2$.

\end{proof}

\vskip5pt

In the following we shall prove the key estimate \eqref{stabPo}.

\begin{lemma}
\label{lemma_2}
Suppose $v := \boldsymbol{\psi}^1 - \boldsymbol{\psi}^0$ satisfies $\langle v | G(\psi^0)\rangle=0$, where $\langle \cdot \,|\, \cdot \rangle$ denotes the inner product on $\mathbb{R}^N$. This is equivalent to \eqref{normPo}. 
Then, we have
$$ \langle v^2 | G(\boldsymbol{\psi}^0) \rangle \lesssim W_1(\nu^0, \nu^1)^{\frac{2}{3}}. $$
\end{lemma}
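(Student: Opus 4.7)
My plan is to combine three ingredients: (i) the identity $\nu^k = G(\boldsymbol{\psi}^k)$ together with the fundamental theorem of calculus along the segment $\boldsymbol{\psi}^t := (1-t)\boldsymbol{\psi}^0 + t\boldsymbol{\psi}^1$; (ii) Kantorovich--Rubinstein applied to a Lipschitz extension of $v$ from $\mathcal{Z}$ to $\mathbb{S}^p$; and (iii) a Poincar\'e-type inequality for the Laguerre Laplacian, combined with a truncation argument that produces the exponent $2/3$.

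For (i) and (ii): since $\boldsymbol{\psi}^0, \boldsymbol{\psi}^1$ are optimal dual potentials, $G(\boldsymbol{\psi}^k) = \nu^k$, and since $G$ is (up to sign) the gradient of the concave Kantorovich dual functional, the matrix $DG(\boldsymbol{\psi})$ is symmetric positive semidefinite wherever all cells have positive $\mu$-measure. The fundamental theorem of calculus along $\{\boldsymbol{\psi}^t\}$ yields
\begin{equation*}
\langle v, \nu^1 - \nu^0\rangle \;=\; \langle v, G(\boldsymbol{\psi}^1) - G(\boldsymbol{\psi}^0)\rangle \;=\; \int_0^1 \langle v, DG(\boldsymbol{\psi}^t) v\rangle\, dt \;\geq\; 0.
\end{equation*}
To upper-bound the left-hand side by $W_1$, I would extend each $\psi^k$ to $\mathbb{S}^p$ by its $c$-conjugate $\tilde{\psi}^k(x) := \inf_{\tilde{x}\in\mathbb{S}^p} c(\tilde{x}, x) - \phi^k(\tilde{x})$, with $\phi^k$ as in \eqref{potenPhi}. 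Because $c = \tfrac{1}{2}d^2$ is $\pi$-Lipschitz in each argument, $\tilde{\psi}^k$ is $\pi$-Lipschitz, and it agrees with $\psi^k$ on $\mathcal{Z}$ by the inf-convolution identities of an optimal dual pair. Hence $\tilde{v} := \tilde{\psi}^1 - \tilde{\psi}^0$ is a $2\pi$-Lipschitz extension of $v$, and Kantorovich--Rubinstein gives
\begin{equation*}
\int_0^1 \langle v, DG(\boldsymbol{\psi}^t) v\rangle\, dt \;=\; \langle v, \nu^1 - \nu^0\rangle \;\leq\; 2\pi\, W_1(\nu^0, \nu^1).
\end{equation*}

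For (iii), which is the crux, the quadratic form $\langle v, DG(\boldsymbol{\psi}^t) v\rangle$ is a weighted Dirichlet energy on the graph of adjacent Laguerre cells of $\boldsymbol{\psi}^t$, with edge weights given by (normalised) $\mu$-measures of their common interfaces. Using the centering $\langle v, G(\boldsymbol{\psi}^0)\rangle = 0$ together with $\|v\|_\infty \leq \pi^2$ from Lemma \ref{lemma_psi_range}, I would follow a M\'erigot-type truncation: for a free parameter $\eta > 0$, split the atoms into $\{|v_i|\leq \eta\}$ and $\{|v_i|>\eta\}$. The small-value part contributes at most $\eta^2$, while for the large-value part I use $v_i^2 \leq \|v\|_\infty|v_i|$ and a discrete co-area/Cheeger argument on the Laguerre graph to bound $\sum_{|v_i|>\eta}|v_i|G_i$ by $\eta^{-1}$ times the Dirichlet energy. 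This yields, for every admissible $\eta$,
\begin{equation*}
\langle v^2, G(\boldsymbol{\psi}^0)\rangle \;\lesssim\; \eta^2 \;+\; \eta^{-1} \int_0^1 \langle v, DG(\boldsymbol{\psi}^t) v\rangle\, dt \;\lesssim\; \eta^2 \;+\; \eta^{-1}\, W_1(\nu^0, \nu^1),
\end{equation*}
and choosing $\eta \asymp W_1(\nu^0, \nu^1)^{1/3}$ optimises the right-hand side to deliver the claimed $W_1^{2/3}$ bound.

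The main obstacle is the truncated Poincar\'e/co-area inequality in step (iii), which must be established uniformly along the segment $\{\boldsymbol{\psi}^t\}_{t\in[0,1]}$: Laguerre cells can shrink or become geometrically degenerate as $t$ varies, so one must adapt M\'erigot's Euclidean co-area/Cheeger reasoning to the geodesic geometry of $\mathbb{S}^p$, relying on the Jacobi-field comparison already used in the proof of \eqref{stabOT} to control interface areas. The exponent $2/3$ is the sharp output of this simple truncation balancing; it is precisely the input that, after the $L^\infty$-to-$H^1$ interpolation carried out in steps ($i$)--($iii$) of the proof of \eqref{stabOT}, yields the final H\"older exponent $\tau = 1/9$ in Lemma \ref{lemma_stability}.
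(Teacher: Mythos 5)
Your steps~(i)--(ii) essentially reproduce the paper's opening moves: Kantorovich--Rubinstein applied to the ($2\pi$-Lipschitz) potential difference gives $\left|\langle v,\, G(\boldsymbol{\psi}^1)-G(\boldsymbol{\psi}^0)\rangle\right|\lesssim W_1(\nu^0,\nu^1)$, and the fundamental theorem of calculus rewrites the same quantity as $\int_0^1\langle v,\,DG(\boldsymbol{\psi}^t)v\rangle\,\intd t$. One local remark: with the convention used in the paper (functional \eqref{eqn_kantorovich_dual} is convex and \emph{minimized}, with $\partial_{\psi_j}\Phi=-G_j+\nu_j$), $DG$ is negative semidefinite, so your claim that it is positive semidefinite has the wrong sign; this does not damage the argument because only the absolute value $\left|\langle v,\,\nu^1-\nu^0\rangle\right|$ is used, but it would need to be corrected.

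The genuine gap is in step~(iii). The paper does \emph{not} truncate at a level $\eta$ of $|v_i|$; it truncates in the \emph{time} parameter, restricting the integral to $t\in[0,\theta]$, and then relies on three specific ingredients to make the restricted integral productive: the discrete Poincar\'e--Wirtinger inequality of M\'erigot et al.\ ($\langle v^2\mid G(\boldsymbol{\psi}^t)\rangle-\langle v\mid G(\boldsymbol{\psi}^t)\rangle^2\lesssim -\langle DG(\boldsymbol{\psi}^t)v\mid v\rangle$, pointwise in $t$), the interpolation estimate $G_i(\boldsymbol{\psi}^t)\gtrsim G_i(\boldsymbol{\psi}^0)$ for $t\in[0,\tfrac14]$ (Lemma~\ref{lemma_1}, proved via Brunn--Minkowski on geodesic images of the cells), and the Lipschitz bound $\|G(\boldsymbol{\psi}^t)-G(\boldsymbol{\psi}^0)\|_1\lesssim N^2 t$ from Bansil's Lemma 5.5, which together control the two terms of the Poincar\'e inequality after integrating on $[0,\theta]$. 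The $2/3$ exponent then drops out of balancing $\theta\langle v^2\mid G(\boldsymbol{\psi}^0)\rangle-N^4\theta^3\lesssim W_1$. Your proposed level-set/Cheeger inequality $\langle v^2,G(\boldsymbol{\psi}^0)\rangle\lesssim\eta^2+\eta^{-1}\int_0^1\langle v,DG(\boldsymbol{\psi}^t)v\rangle\,\intd t$ would also give the exponent, but you neither prove it nor identify the tools to do so; as you yourself note, controlling the Laguerre graph uniformly in $t$ is the crux, and without the interpolation lemma there is no mechanism to pull the Dirichlet energy at $\boldsymbol{\psi}^t$ back to a mass estimate at $\boldsymbol{\psi}^0$. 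As written, step~(iii) is an unverified conjecture rather than a proof.
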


\begin{proof}
By Kantorovich-Rubinstein's theorem, we have
    \begin{align}\label{nb1}
        \left|\langle v | G(\psi^1) - G(\psi^0)\rangle\right| &= \left|\int_\mathcal{Z} (\psi^1-\psi^0) \,\intd(\nu^1-\nu^0)\right| \nonumber\\
                &\le \max_{Lip(f)\le 2\pi} \int_\mathcal{Z} f\,\intd(\nu^1-\nu^0) \nonumber\\
                &= 2\pi \max_{Lip(f)\le 1} \int_\mathcal{Z} f\,\intd(\nu^1-\nu^0)  \nonumber\\
                &\lesssim W_1(\nu^0,\nu^1). 
    \end{align}

By Taylor's formula
    $$\langle v | G(\psi^1) - G(\psi^0)\rangle = \int_0^1 \langle DG(\psi^t)v|v\rangle\,\intd t, $$
where $\psi^t=(1-t)\psi^0+t\psi^1$.
From the Discrete Poincaré-Wirtinger inequality [Proposition 3.4 of Mérigot et al. (2020)], we have
    \begin{equation}\label{DisPW}
        \langle v^2 | G(\psi^t)\rangle - \langle v| G(\psi^t)\rangle^2 \lesssim -\langle DG(\psi^t)v|v\rangle
    \end{equation}
and $\langle DG(\psi^t)v|v\rangle \leq 0$ for all $t\in[0,1]$. 

\vskip5pt
To verify that inequality \eqref{DisPW} holds in our setting, note that by Lemma \ref{stayaway} we obtain \eqref{3ks} for each $i$. Although the background space in [Mérigot et al. (2020)], following [Eymard et al., 2000], is Euclidean, the above Discrete Poincaré-Wirtinger inequality remains valid in our case, up to a constant depending on the uniform constant $\delta$ from Lemma \ref{stayaway}. For the readers' convenience and for completeness, we provide additional details below. For clarity, we denote by LHS and RHS the left- and right-hand sides of inequality~\eqref{DisPW} respectively. 

\vskip5pt
Define a function $u$ on $\mathbb{S}^p$ such that $u|_{\textrm{Lag}_{z_i}(\psi^t)}=v_i$. Then by computation, the left hand side of \eqref{DisPW}
    \[ LHS = \frac12 \int_{\mathbb{S}^p\times\mathbb{S}^p} (u(x)-u(y))^2 \intd y \intd x. \]
Note that $u(x)-u(y)=v_i-v_j$ if and only if $x\in\textrm{Lag}_{z_i}(\psi^t)$ and $y\in\textrm{Lag}_{z_j}(\psi^t)$. By \eqref{Lagu} and $c=\frac12d^2$, we have
    \begin{equation}\label{2bla}
        0\leq d^2(z_i,y)-d^2(z_i,x)+d^2(z_j,x)-d^2(z_j,y). 
    \end{equation} 
Similarly as \eqref{defF}, let $F_{xy}(p):=\frac12d^2(p,y)-\frac12d^2(p,x)$. Then \eqref{2bla} implies
    \begin{equation}\label{Cmono}
        0\leq F_{xy}(z_i)-F_{xy}(z_j) \leq 2\pi^2. 
    \end{equation} 
Denote $\|x-y\|$ the chordal distance in the ambient $\mathbb{R}^{p+1}$, which is equivalent to the geodesic distance in the sense that
    \begin{equation}\label{eqdd}
        \frac{2}{\pi}d(x,y) \leq \|x-y\| \leq d(x,y) =: d.
    \end{equation}
In particular, in small-distance asymptotics (that is, in our applications when $z_i$ is sufficiently close to $z_j$), we have $\|x-y\|=d-\frac{1}{24}d^3+O(d^5)$.
Replacing $d$ by the chordal distance in \eqref{2bla}, we have
        \begin{equation}\label{eqbili}
        F_{xy}(z_i) - F_{xy}(z_j) \asymp
        C \left\langle\!\left\langle z_i-z_j, x-y \right\rangle\!\right\rangle,
    \end{equation}
where $\left\langle\!\left\langle \cdot, \cdot \right\rangle\!\right\rangle$ is the inner product in $\mathbb{R}^{p+1}$, and the constant $C$ depends on the uniform constant $\delta$ in Lemma \ref{stayaway}.

\vskip5pt
Given $(i,j)$ and $(x,y)\in\mathbb{S}^p\times\mathbb{S}^p$, denote $\chi_{ij}(x,y)=1$ if \eqref{Cmono} holds and
\[\textrm{Lag}_{z_i}(\psi^t)\cap \gamma_{xy} \neq\emptyset \text{ and }\textrm{Lag}_{z_j}(\psi^t) \cap \gamma_{xy}\neq \emptyset,\]
where $\gamma_{xy}:[0,1]\to\mathbb{S}^p$, with $\gamma(0)=x$, $\gamma(1)=y$, is a geodesic segment from $x$ to $y$; 
while $\chi_{ij}(x,y)=0$ otherwise. 
Let $Q\in SO(p+1)$ such that $Q(x)=\hat x$. 
Analogously as in [Mérigot et al. (2020)], denote
    \[ c_{ij,w} = \frac{1}{d(\hat x, w)d_{ij}}\left|F_{\hat xw}(Q(z_i)) - F_{\hat xw}(Q(z_j))\right|, \]
where $d_{ij}=d(z_i,z_j)$. From \eqref{eqdd} and \eqref{eqbili}, one can see that $c_{ij,w}\asymp 1$. 
Geometrically, it is comparable to the cosine of the angle between $\overrightarrow{\exp_{\hat x}^{-1}(w)}$ and $\overrightarrow{\exp_{\hat x}^{-1}(Q(z_i))\exp_{\hat x}^{-1}(Q(z_j))}$ in the tangent space at $\hat x$ . 

\vskip5pt
By Cauchy-Schwarz's inequality we get
    \begin{align*}
        (u(x)-u(y))^2 &= \left(\sum_{i\neq j}(v_i-v_j)\chi_{ij}(x,y)\right)^2 \\
            &\leq \sum_{i\neq j}\frac{(v_i-v_j)^2}{d_{ij}c_{ij,Q(y)}}\chi_{ij}(x,y) \sum_{i\neq j}d_{ij}c_{ij,Q(y)}\chi_{ij}(x,y).
    \end{align*}
Note that when $\chi_{ij}(x,y)=1$, by \eqref{Cmono} we have
    \begin{align*}
        \sum_{i\neq j}d_{ij}c_{ij,Q(y)}\chi_{ij}(x,y) &\leq C \sum_{i\neq j} |F_{\hat xQ(y)}(Q(z_i)) - F_{\hat xQ(y)}(Q(z_j))| \\
            &\leq C \sum_{i\neq j} |F_{xy}(z_i)-F_{xy}(z_j)| \leq 2C\pi^2. 
    \end{align*}
Therefore, since $Q\in SO(p+1)$, we obtain
    \begin{align*}
        LHS &\leq C\pi^2\int_{\mathbb{S}^p\times\mathbb{S}^p} \sum_{i\neq j}\frac{(v_i-v_j)^2}{d_{ij}c_{ij,Q(y)}}\chi_{ij}(x,y) \intd x \intd y \\
            &= C\pi^2 \int_{\mathbb{S}^p} \sum_{i\neq j}\frac{(v_i-v_j)^2}{d_{ij}c_{ij,w}}\left(\int_{\mathbb{S}^p}\chi_{ij}(x,Q^{-1}(w))\intd x\right)dw.
    \end{align*}
Denoting $m_{ij}=\mathcal{A}^{p-1}\left(\textrm{Lag}_{z_i}(\psi^t)\cap \textrm{Lag}_{z_j}(\psi^t)\right)$, where $\mathcal{A}^{p-1}$ denotes the $(p-1)$-dimensional measure induced by the round metric on $\mathbb{S}^p$, we get
    \[ \int_{\mathbb{S}^p}\chi_{ij}(x,Q^{-1}(w))dx \leq Cd(\hat x, w)m_{ij}c_{ij,w} \leq C\pi m_{ij}c_{ij,w}, \]
and thus
    \begin{equation}\label{LHS}
        LHS \leq C(p)\pi^{p+3} \sum_{i\neq j}\frac{m_{ij}}{d_{ij}}(v_i-v_j)^2.
    \end{equation}

\vskip5pt
On the right hand side of \eqref{DisPW}, by the $C^1$ regularity of optimal maps \cite{loeper2011regularity}, from \cite{Kitagawa2016} one has the matrix
    \[ DG(\psi^t)=\{\partial_jG_i\}=\left\{\frac{\partial G_i}{\partial\psi^t_j}\right\} = \left\{\frac{m_{ij}}{d_{ij}}\right\}. \]
By measure-preserving, one can see that $\partial_iG_i=-\sum_{j\neq i}\partial_jG_i$. Hence, by the same computation as in [Mérigot et al. (2020)] we have
    \begin{align}\label{RHS}
        RHS &= -\sum_{i,j}\partial_jG_iv_iv_j \nonumber\\
            &= \sum_{j\neq i}\partial_jG_i v_i(v_j-v_i) \\
            &= \sum_{i\neq j}\frac{m_{ij}}{d_{ij}}v_i(v_j-v_i).  \nonumber
    \end{align}
Note that 
    \[\sum_{i\neq j}\frac{m_{ij}}{d_{ij}}(v_i-v_j)^2 = \sum_{i\neq j}\frac{m_{ij}}{d_{ij}}v_i(v_i-v_j) - \sum_{i\neq j}\frac{m_{ij}}{d_{ij}}v_j(v_i-v_j) = -2RHS.\]    
Combining with \eqref{LHS} we can obtain the desired inequality \eqref{DisPW}.

\vspace{5pt}

Hence, for any $\theta\in[0,1]$
    \begin{align}\label{nb2}
        \left|\langle v | G(\psi^1) - G(\psi^0)\rangle\right| &= \int_0^1 -\langle DG(\psi^t)v|v\rangle\,\intd t \nonumber\\
            &\ge \int_0^\theta -\langle DG(\psi^t)v|v\rangle\,\intd t \nonumber\\
            &\gtrsim \int_0^\theta \langle v^2 | G(\psi^t)\rangle - \langle v| G(\psi^t)\rangle^2\,\intd t.
    \end{align}

At the moment, we \emph{claim} that for $t\in[0,\frac{1}{4}]$
    \begin{equation}\label{nbclaim1}
        G_i(\psi^t) \gtrsim G_i(\psi^0)\quad\forall\,i=1,\cdots,N.
    \end{equation}
Then the first term of \eqref{nb2} is bounded by
    \begin{equation}\label{nb4}
        \int_0^\theta \langle v^2 | G(\psi^t)\rangle \,\intd t \gtrsim \int_0^\theta \langle v^2 | G(\psi^0)\rangle \,\intd t = \theta \langle v^2 | G(\psi^0)\rangle.
    \end{equation}
For the second term of \eqref{nb2}, using the assumption $\langle v|G(\psi^0)\rangle=0$ we have
    \begin{align}\label{nb5}
        |\langle v|G(\psi^t)\rangle| &= |\langle v|G(\psi^t)-G(\psi^0)\rangle| \nonumber\\
            &\le \|v\|_\infty \|G(\psi^t)-G(\psi^0)\|_1 \nonumber\\
            &\lesssim \|G(\psi^t)-G(\psi^0)\|_1.
    \end{align}
By Lemma 5.5 of \cite{Bansil2022} which can be extended to Riemannian manifold, we have
\begin{eqnarray*}
\|G(\boldsymbol{\psi}^t) - G(\boldsymbol{\psi}^0)\|_1 &\le& C_1 N^2 \|\boldsymbol{\psi}^t - \boldsymbol{\psi}^0 \|_{\infty} \\
    & =& C_1 N^2 t \|\boldsymbol{\psi}^1 - \boldsymbol{\psi}^0 \|_{\infty} \\
    & \le & C_2 N^2 t
\end{eqnarray*}
for some constants $C_1, C_2 > 0$. 
Together with \eqref{nb5} we then obtain
    \begin{equation*}
        |\langle v|G(\psi^t)\rangle| \lesssim N^2t,
    \end{equation*}
and thus
    \begin{equation}\label{nb6}
        \int_0^\theta \langle v|G(\psi^t)\rangle^2\,\intd t \lesssim \int_0^\theta N^4t^2\,\intd t \lesssim N^4\theta^3. 
    \end{equation}

Now, combining \eqref{nb1}, \eqref{nb2}, \eqref{nb4} and \eqref{nb6}, we obtain
$$\theta\langle v^2 | G(\psi^0)\rangle - N^4\theta^3 \lesssim W_1(\nu^0,\nu^1),$$
namely
$$\langle v^2 | G(\psi^0)\rangle \lesssim \frac{1}{\theta}W_1(\nu^0,\nu^1) + N^4\theta^2. $$

If $W_1(\nu^0, \nu^1)$ is small enough, by choosing $\theta^3=\frac{W_1(\nu^0, \nu^1)}{N^4}$ we obtain the desired estimate.
If  $W_1(\nu^0, \nu^1)$, the estimate trivially holds by choosing, for example $\theta=1/4$.  
\end{proof}

\vskip5pt

Last, we prove the claim \eqref{nbclaim1} that was used in the above proof.
\begin{lemma}
\label{lemma_1}
Given $\boldsymbol{\psi}^0, \boldsymbol{\psi}^1 \in S_{+}$, let $\boldsymbol{\psi}^t = (1-t) \boldsymbol{\psi}^0 + t \boldsymbol{\psi}^1$ for $t \in [0,1]$. Then,
$$G_i\left(\boldsymbol{\psi}^t\right)^{\frac{1}{p}} \ge c\left((1-t) G_i\left(\boldsymbol{\psi}^0\right)^{\frac{1}{p}}+t G_i\left(\boldsymbol{\psi}^1\right)^{\frac{1}{p}}\right)  \quad \forall\, i=1,\cdots,N,$$
where $0<c<1$ is some constant.
In particular, we have
$$ G_i(\boldsymbol{\psi}^t) \gtrsim G_i(\boldsymbol{\psi}^0)\quad\forall\,t\in\left[0,\frac14\right].$$
\end{lemma}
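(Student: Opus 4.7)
The plan is to adapt the Euclidean Brunn-Minkowski argument of \cite{Merigot2020} to the spherical setting. I will proceed in two steps: first establish a set-inclusion relating $\textrm{Lag}_{z_i}(\boldsymbol{\psi}^t)$ to a geodesic interpolation of $\textrm{Lag}_{z_i}(\boldsymbol{\psi}^0)$ and $\textrm{Lag}_{z_i}(\boldsymbol{\psi}^1)$, and then apply a Riemannian Brunn-Minkowski (or Borell-Brascamp-Lieb) inequality on $\mathbb{S}^p$.

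Fix an index $i$. For any $x_0 \in \textrm{Lag}_{z_i}(\boldsymbol{\psi}^0)$ and $x_1 \in \textrm{Lag}_{z_i}(\boldsymbol{\psi}^1)$, define the interpolation centred at $z_i$ by $v_k := \exp_{z_i}^{-1}(x_k)$, $v_t := (1-t) v_0 + t v_1$ and $x_t := \exp_{z_i}(v_t)$; this is well defined thanks to Lemma~\ref{stayaway}, which keeps the cells uniformly away from the antipode of $z_i$. Using the identity $c(x_t, z_i) = (1-t)c(x_0, z_i) + t\, c(x_1, z_i) - \tfrac{t(1-t)}{2}|v_0 - v_1|^2$ in $T_{z_i}\mathbb{S}^p$, a quantitative semi-concavity estimate for $c(x_t, z_j)$ along the interpolation, and the defining inequalities for $x_0, x_1$ in their respective Laguerre cells, one verifies $c(x_t, z_i) - \psi_i^t \le c(x_t, z_j) - \psi_j^t$ for all $j$, i.e., $x_t \in \textrm{Lag}_{z_i}(\boldsymbol{\psi}^t)$. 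Letting $M_t$ be the collection of all such $x_t$, we obtain $M_t \subset \textrm{Lag}_{z_i}(\boldsymbol{\psi}^t)$. A Riemannian Brunn-Minkowski inequality on $\mathbb{S}^p$ then applies: since Lemma~\ref{stayaway} bounds the Jacobian of $\exp_{z_i}$ above and below by absolute constants on the relevant region, the problem transfers to a Euclidean Brunn-Minkowski statement in $T_{z_i}\mathbb{S}^p$ at the cost of an overall multiplicative constant, yielding
$$ \mu(M_t)^{1/p} \ge c \left( (1-t) G_i(\boldsymbol{\psi}^0)^{1/p} + t\, G_i(\boldsymbol{\psi}^1)^{1/p} \right) $$
for some $c \in (0,1)$ depending only on $p$; combined with $\mu(M_t) \le G_i(\boldsymbol{\psi}^t)$, this gives the first inequality.

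The main obstacle will be the set-inclusion step. On Euclidean space, the map $x \mapsto c(x, z_j) - c(x, z_i)$ is affine, so Laguerre cells are convex polytopes and the inclusion $(1-t)\textrm{Lag}(\boldsymbol{\psi}^0) + t\,\textrm{Lag}(\boldsymbol{\psi}^1) \subset \textrm{Lag}(\boldsymbol{\psi}^t)$ follows directly from straight-line interpolation. On the sphere the squared geodesic cost admits no such linearisation, so the inclusion must be established through quantitative convexity estimates for $c(\cdot, z_j)$ along spherical geodesics, supported by the cut-locus bound from Lemma~\ref{stayaway}.

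For the ``in particular'' statement, I would simply discard the non-negative term $t\, G_i(\boldsymbol{\psi}^1)^{1/p}$ on the right-hand side of the first inequality: for $t \in [0, 1/4]$ this yields $G_i(\boldsymbol{\psi}^t)^{1/p} \ge c(1-t) G_i(\boldsymbol{\psi}^0)^{1/p} \ge (3c/4)\, G_i(\boldsymbol{\psi}^0)^{1/p}$, hence $G_i(\boldsymbol{\psi}^t) \ge (3c/4)^p\, G_i(\boldsymbol{\psi}^0) \gtrsim G_i(\boldsymbol{\psi}^0)$, as required.
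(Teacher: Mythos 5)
Your high-level skeleton matches the paper's proof: pull the cells back under $\exp_{z_i}^{-1}$ using Lemma~\ref{stayaway} to stay away from the antipode, establish a Minkowski-sum-type inclusion of the pullbacks, apply the Euclidean Brunn--Minkowski inequality in $T_{z_i}\mathbb{S}^p$, and transfer back to spherical measure using the two-sided Jacobian bounds $\sin^{p-1}r\,\mathrm{d}r\,\mathrm{d}\theta$ versus $r^{p-1}\,\mathrm{d}r\,\mathrm{d}\theta$ that Lemma~\ref{stayaway} enables; the deduction of the ``in particular'' statement is also the same.

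The genuine gap is in the set-inclusion step, which you rightly flag as the crux but then only assert. Writing $A_k := \exp_{z_i}^{-1}(\textrm{Lag}_{z_i}(\boldsymbol{\psi}^k))$ and interpolating the two defining inequalities of $A_0$ and $A_1$, what you actually need for $v_t \in A_t$ is the \emph{concavity} of the function $v \mapsto c(\exp_{z_i}(v), z_j) - \tfrac{1}{2}|v|^2$ (equivalently $c(\exp_{z_i}(v), z_j) - c(\exp_{z_i}(v), z_i)$) over the relevant region, for each $j$; this is not the ``semi-concavity of $c(\cdot, z_j)$'' you appeal to, and it is a nontrivial property of the squared geodesic cost on $\mathbb{S}^p$ (it is in the territory of the Ma--Trudinger--Wang condition), not something that follows from an elementary Taylor-type estimate. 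Your proposal stops at ``one verifies $c(x_t,z_i)-\psi^t_i \le c(x_t,z_j)-\psi^t_j$'' without deriving the estimate, so the key inclusion $(1-t)A_0 + tA_1 \subset A_t$ is left unproved. The paper gets around this cleanly by using the $c$-subdifferential characterization $x \in \textrm{Lag}_{z_i}(\boldsymbol{\psi}^k) \iff D_z c(z_i, x) = \exp_{z_i}^{-1}(x) \in \partial \psi^k(z_i)$, the additivity $\partial\psi^t(z_i) \supset (1-t)\partial\psi^0(z_i) + t\partial\psi^1(z_i)$ for the linearly interpolated potential, and Loeper's theorem (\cite{loeper2011regularity}) that the Laguerre cells for this cost are $c$-convex, so that the $A_k$ are genuinely convex sets in $\mathbb{R}^p$ and the subdifferential characterization is an equivalence. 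To close your argument you would either need to prove the concavity estimate from scratch, or invoke Loeper's result and rephrase the inclusion in the subdifferential language as the paper does.
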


\begin{proof} 
We write with slight abuse of notation $\psi^0 = (\psi^0)^{cc}$ and $\psi^1 = (\psi^1)^{cc}$, so that $\psi^0$ and $\psi^1$ are now functions on $\mathbb{S}^{p}$.
From definition \eqref{Lagu}, for a fixed $1\le i\le N$, if $x_0 \in \textrm{Lag}_{z_i}(\psi^0)$ one has 
$$\psi^0(z_j) \le \psi^0(z_i) + c(z_j, x_0) - c(z_i, x_0) \quad\forall\, j=1,\cdots,N, $$
and if $x_1 \in \textrm{Lag}_{z_i}(\psi^1)$ one has
$$ \psi^1(z_j) \le \psi^1(z_i) + c(z_j, x_1) - c(z_i, x_1) \quad\forall\, j=1,\cdots,N. $$
These imply that
$$ D_z c(z_i, x_0) \in \partial \psi^0(z_i) $$
and
$$ D_z c(z_i, x_1) \in \partial \psi^1(z_i) ,$$
where $\partial\psi^k(z_i)$ is the subgradient of $\psi^k$ at $z_i$. Since
$$ \partial \psi^t(z_i) = (1-t) \partial \psi^0(z_i) + t \partial \psi^1(z_i) .$$
we have that
$$ (1-t) D_z c(z_i, x_0) + t D_z c(z_i, x_1) \in \partial \psi^t(z_i) .$$

Recall that the cost function $c(x,z) = d^2(x,z)/2$, we have
$$ D_z c(z, \cdot) = \exp_z^{-1}(\cdot) .$$
It follows that
$$ (1-t) \exp_{z_i}^{-1}(\textrm{Lag}_{z_i}(\psi^0)) + t \exp_{z_i}^{-1}(\textrm{Lag}_{z_i}(\psi^1)) \subset \exp_{z_i}^{-1}(\textrm{Lag}_{z_i}(\psi^t)) .$$
The Laguerre cell $\textrm{Lag}_{z_i}(\psi^k)$ is the contact set of the $c$-convex function $\phi^k$ with its $c$-support function.
From \cite{loeper2011regularity}, we know that $\textrm{Lag}_{z_i}(\psi^k)$ is $c$-convex with respect to $z_i$, namely $\exp_{z_i}^{-1}(\textrm{Lag}_{z_i}(\psi^k))$ is a convex set in $\mathbb{R}^{p}$. 
Let $|\cdot|$ denote the Lebesgue measure on $\mathbb{R}^{p}$. It follows from Brunn-Minkowski inequality that
\begin{equation}\label{BM}
|\exp_{z_i}^{-1}(\textrm{Lag}_{z_i}(\psi^t))|^{1/p} \ge (1-t) | \exp_{z_i}^{-1}(\textrm{Lag}_{z_i}(\psi^0))|^{1/p} + t | \exp_{z_i}^{-1}(\textrm{Lag}_{z_i}(\psi^1))|^{1/p} . 
\end{equation}

Under the polar coordinates $(r, \theta)$ around $y_i$, the spherical measure is given by $\intd\mu=\sin ^{p-1} r \intd r \intd \theta$ and the Euclidean measure is given by $\intd x=r^{p-1} \intd r \intd \theta$. Clearly $\intd\mu \le \intd x$. By Lemma \ref{stayaway}, $\exp _{z_i}^{-1}\left(\textrm{Lag}_{z_i}\left(\psi^k\right)\right)$ lies in $\{0 \leq r \leq \pi-\delta\}$, and so $\intd \mu \geq C \intd x$ when restricted to $\exp _{z_i}^{-1}\left(\textrm{Lag}_{z_i}\left(\psi^k\right)\right)$, where $0<C<1$ depends only on $p$ and $\delta$. From this and \eqref{BM}, we obtain
$$ G_i(\boldsymbol{\psi}^t) ^{\frac{1}{p}}\ge c \bigg( (1-t) G_i(\boldsymbol{\psi}^0)^{\frac{1}{p}} + t G_i( \boldsymbol{\psi}^1) ^{\frac{1}{p}}\bigg) ,$$
 for some positive constant $c<1$.

\end{proof}
\end{proof}

\bibliographystyle{abbrvnat}
\bibliography{reference}

\end{document}